\newtheorem{definitn}{Definition}
\newtheorem{thm}{Theorem}
\newtheorem{lemma}{Lemma}
\newtheorem{remrk}{Remark}
\newtheorem{cor}{Corollary}
\begin{document}

\title{Towards a Better Understanding of Large Scale Network Models}

\author{Guoqiang Mao, \emph{Senior Member, IEEE}%
\thanks{G. Mao is with the School of Electrical and Information Engineering,
the University of Sydney and National ICT Australia. Email: guoqiang.mao@sydney.edu.au. %
}, and Brian D.O. Anderson, \emph{Life Fellow, IEEE}%
\thanks{B.D.O. Anderson is with the Research School of Information Sciences
and Engineering, Australian National University and National ICT Australia.
Email: brian.anderson@anu.edu.au.%
}\emph{}%
\thanks{This research is funded by ARC Discovery project: DP0877562.%
}}
\maketitle
\begin{abstract}
Connectivity and capacity are two fundamental properties of wireless
multi-hop networks. The scalability of these properties has been a
primary concern for which asymptotic analysis is a useful tool. Three
related but logically distinct network models are often considered
in asymptotic analyses, viz. the dense network model, the extended
network model and the infinite network model, which consider respectively
a network deployed in a fixed finite area with a sufficiently large
node density, a network deployed in a sufficiently large area with
a fixed node density, and a network deployed in $\Re^{2}$ with a
sufficiently large node density. The infinite network model originated
from continuum percolation theory and asymptotic results obtained
from the infinite network model have often been applied to the dense
and extended networks. In this paper, through two case studies related
to network connectivity on the expected number of isolated nodes and
on the vanishing of components of finite order $k>1$ respectively,
we demonstrate some subtle but important differences between the infinite
network model and the dense and extended network models. Therefore
extra scrutiny has to be used in order for the results obtained from
the infinite network model to be applicable to the dense and extended
network models. Asymptotic results are also obtained on the expected
number of isolated nodes, the vanishingly small impact of the boundary
effect on the number of isolated nodes and the vanishing of components
of finite order $k>1$ in the dense and extended network models using
a generic random connection model. \end{abstract}
\begin{keywords}
Dense network model, extended network model, infinite network model,
continuum percolation, connectivity, random connection model
\end{keywords}

\section{Introduction\label{sec:Introduction}}

Wireless multi-hop networks in various forms, e.g. wireless ad hoc
networks, sensor networks, mesh networks and vehicular networks, have
been the subject of intense research in the recent decades (see \cite{Haenggi09Stochastic}
and references therein). Connectivity and capacity are two fundamental
properties of these networks. The scalability of these properties
as the number of nodes in the network becomes sufficiently large has
been a primary concern. Asymptotic analysis, valid when the number
of nodes in the network is large enough, has been useful for understanding
the characteristics of these networks.

Three related but logically distinct network models have been widely
used in the asymptotic analysis of large scale multi-hop networks.
The first model, often referred to as the \emph{dense network model,}
considers that the network is deployed in a finite area with a sufficiently
large node density. The second model, often referred to as the \emph{extended
network model}, considers that the node density is fixed and the network
area is sufficiently large. The third model, referred to as the \emph{infinite
network model}, has its origin in continuum percolation theory \cite{Meester96Continuum}.
It considers a network deployed in an infinite area, i.e. $\Re^{2}$
in 2D, and analyzes the properties of the network as the node density
becomes sufficiently large. Due to the relatively longer history of
research into continuum percolation theory and relatively abundant
results in that area, and the close connections between the infinite
network model and the dense and extended network models, results obtained
in the infinite network model are often applied straightforwardly
to the first and second models \cite{Gupta98Critical,Ta09On,Kong08Connectivity,Goeckel09Asymptotic,Ammari08Integrated,Li09Asymptotic}.

In this paper, through two case studies on key events related to the
network connectivity, i.e. the expected number of isolated nodes and
the vanishing of components of fixed and finite order $k>1$ (the
order of a component refers to the number of nodes in the component),
using a random connection model, we demonstrate some subtle but important
differences between the infinite network model and the dense and extended
network models due to the \emph{truncation effect}, to be explained
in the following paragraphs. Therefore results obtained from an infinite
network model \emph{cannot be directly applied} to the dense and extended
networks. Instead some careful analysis of the impact of the truncation
effect is required.

Here we give a detailed explanation of the above comments using a
\emph{unit disk connection model} as an example%
\footnote{In the paper, we have omitted some trivial discussions on the difference
between Poisson and uniform distributions and consider Poisson node
distribution only.%
}. Under the \emph{unit disk connection model}, two nodes are directly
connected if and only if (iff) their Euclidean distance is smaller
than or equal to a given threshold $r\left(\rho\right)$, a parameter
which is often taken as a function of a further parameter $\rho$,
to be defined shortly, under the dense and extended network models;
the parameter $r\left(\rho\right)$ is termed the \emph{transmission
range}. The dense and extended network models that are often considered
assume respectively a) nodes are Poissonly distributed in a unit area,
say a square, with density $\rho$ and $r\left(\rho\right)=\sqrt{\frac{\log\rho+c}{\pi\rho}}$
(the dense network model); b) nodes are Poissonly distributed on a
square $\sqrt{\rho}\times\sqrt{\rho}$ with density 1 and $r\left(\rho\right)=\sqrt{\frac{\log\rho+c}{\pi}}$
(the extended network model). The parameter $c$ may be either a constant;
or it can depend on $\rho$, in which case $c=o\left(\log\rho\right)$.
The corresponding infinite network model considers nodes Poissonly
distributed in $\Re^{2}$ with density $\rho$ and a pair of nodes
are directly connected iff their Euclidean distance is smaller than
or equal to $r$, which \emph{does not} depend on $\rho$. The dense
network model can be converted into the extended network model by
scaling the Euclidean distances between \emph{all pairs} of nodes
by a factor of $\sqrt{\rho}$ while maintaining their connections,
and conversely. Therefore the dense network model and the extended
network model are equivalent in the analysis of connectivity. In the
extended network model, as $\rho\rightarrow\infty$, the network area
approaches $\Re^{2}$ and the average node degree approaches infinity
following $\Theta\left(\log\rho\right)$, i.e. a node has more and
more connections as $\rho\rightarrow\infty$. This resembles the situation
that occurs in the infinite network model as $\rho\rightarrow\infty$.
This close connection between the infinite network model and the dense
and extended network models creates the \emph{illusion} that as $\rho\rightarrow\infty$
results obtained in the infinite network model can also be applied
directly to the dense and extended models, e.g. those dealing with
the vanishing of isolated nodes, the uniqueness of the component of
infinite order, the vanishing of components of finite order $k>1$
\cite{Gupta98Critical,Ta09On,Kong08Connectivity,Goeckel09Asymptotic,Ammari08Integrated,Li09Asymptotic}. 

Starting from the dense network model however, if we scale the Euclidean
distances between all pairs of nodes by a factor $1/\sqrt{\frac{\log\rho+c}{\pi\rho}}$,
there results a network on a square $1/\sqrt{\frac{\log\rho+c}{\pi\rho}}\times1/\sqrt{\frac{\log\rho+c}{\pi\rho}}$
with node density $\frac{\log\rho+c}{\pi}$, where $\frac{\log\rho+c}{\pi}\rightarrow\infty$
as $\rho\rightarrow\infty$, and a pair of nodes are directly connected
iff their Euclidean distance is equal to or smaller than $r=1$, \emph{independently}
of the node density. This latter network model is also equivalent
to the dense and extended network models in connectivity. On the other
hand, this latter network can also be obtained from an infinite network
on $\Re^{2}$ with node density $\frac{\log\rho+c}{\pi}$ and $r=1$
by removing all nodes and the associated connections outside a square
of $1/\sqrt{\frac{\log\rho+c}{\pi\rho}}\times1/\sqrt{\frac{\log\rho+c}{\pi\rho}}$
in $\Re^{2}$. We term the effect associated with the above removal
procedure as the \emph{truncation effect}. From the above discussion,
it is clear that a prerequisite for the results obtained in the infinite
network model to be applicable to the dense or extended network models
is that the impact of the truncation effect on the property concerned
must be vanishingly small as $\rho\rightarrow\infty$. 

The main contributions of this paper are:
\begin{itemize}
\item Through two case studies, one on the expected number of isolated nodes
and the other on the vanishing of components of fixed and finite order
$k>1$, using a random connection model, we show however that ensuring
the impact of the truncation effect is vanishingly small either requires
imposing a stronger requirement on the connection function or needs
some non-trivial analysis to rule out the possibility of occurrence
of some events associated with the truncation effect. Therefore results
obtained assuming an infinite network model \emph{cannot} be applied
directly to the dense and extended network models.
\item In particular, we show that in order for the impact of the truncation
effect on the number of isolated nodes to be vanishingly small, a
stronger requirement on the connection function (than the usual requirements
of rotational invariance, integral boundedness and non-increasing
monotonicity) needs to be imposed. 
\item We show that some non-trivial analysis is required to rule out the
possibility of occurrence of some events associated with the truncation
effect in order to establish the result on the vanishing of components
of components of fixed and finite order $k>1$ in the dense and extended
network models. For example, an infinite component in $\Re^{2}$ may,
after truncation, yield multiple \emph{components of extremely large
order}%
\footnote{It is trivial to show that for any finite $\rho$, \emph{almost surely}
there is no infinite component in a network whose nodes are Poissonly
distributed with density $\frac{\log\rho+c}{\pi}$ on a square of
$1/\sqrt{\frac{\log\rho+c}{\pi\rho}}\times1/\sqrt{\frac{\log\rho+c}{\pi\rho}}$.
Therefore we use the term \emph{components of extremely large order}
to refer to those components whose order may become \emph{asymptotically
infinite} as $\rho\rightarrow\infty$.%
}, finite components of fixed order $k>1$ and isolated nodes in $1/\sqrt{\frac{\log\rho+c}{\pi\rho}}\times1/\sqrt{\frac{\log\rho+c}{\pi\rho}}$,
where these components are only connected via nodes and associated
connections in the infinite component but outside $1/\sqrt{\frac{\log\rho+c}{\pi\rho}}\times1/\sqrt{\frac{\log\rho+c}{\pi\rho}}$.
Thus the dense and extended networks may still possibly have finite
components of order $k>1$ even though the infinite network can be
shown to \emph{asymptotically almost surely} have no such finite components
as $\rho\rightarrow\infty$. 
\item Asymptotic results are established on the expected number of isolated
nodes, the vanishingly small impact of the boundary effect on the
number of isolated nodes and the vanishing of components of finite
order $k>1$ in the dense and extended network models using a generic
random connection model. These results form key steps in extending
asymptotic results on network connectivity from the unit disk model
to the more generic random connection model. 
\end{itemize}
To our knowledge, this is the first paper that has provided solid
theoretical analysis to explain the difference between the infinite
network model and the dense and extended network models and the cause
of this difference, i.e. it is attributable to the truncation effect,
which is different from the boundary effect that has been widely studied.

The rest of the paper is organized as follows. Section \ref{sec:Related-Work}
reviews related work. Section \ref{sec:Network-Models} gives a formal
definition of the network models, symbols and notations considered
in the paper. Section \ref{sec:expected number of isolated nodes}
comparatively studies the expected number of isolated nodes in a dense
(or extended) network and in its counterpart infinite network model.
Through the study, it shows that under certain conditions the impact
of the truncation effect on the expected number of isolated nodes
is non-negligible or may even be the dominant factor. Section \ref{sec:Components of finite size}
first gives an example to show that asymptotic vanishing of components
of fixed and finite order $k>1$ in an infinite network does not carry
straightforwardly the conclusion that components of fixed and finite
order $k>1$ also vanish asymptotically in the dense and extended
networks. Then to fill this theoretical gap and with a supplementary
condition holding, a result is presented on the asymptotic vanishing
of components of fixed and finite order $k>1$ in the dense and extended
network models under a random connection model. Finally Section \ref{sec:Conclusion}
summarizes conclusions and future work.

\section{Related Work\label{sec:Related-Work}}

Extensive research has been done on connectivity problems using the
well-known random geometric graph and the unit disk connection model,
which is usually obtained by randomly and uniformly distributing $n$
vertices in a given area and connecting any two vertices iff their
 distance is smaller than or equal to a given threshold $r(n)$ \cite{Penrose99On,Penrose03Random}.
Significant outcomes have been obtained \cite{Gupta98Critical,Xue04The,Philips89Connectivity,Ravelomanana04Extremal,Balister05Connectivity,Wan04Asymptotic,Penrose03Random,Balister09A}. 

Penrose \cite{Penrose97The,Penrose99A} and Gupta et al. \cite{Gupta98Critical}
proved using different techniques that if the transmission range is
set to $r\left(n\right)=\sqrt{\frac{\log n+c\left(n\right)}{\pi n}}$,
a random network formed by uniformly placing $n$ nodes in a unit-area
disk in $\Re^{2}$ is asymptotically almost surely connected as $n\rightarrow\infty$
iff $c\left(n\right)\rightarrow\infty$. Specifically, Penrose's result
is based on the fact that in the above random network as $\rho\rightarrow\infty$
the longest edge of the minimum spanning tree converges in probability
to the minimum transmission range required for the above random network
to have no isolated nodes (or equivalently the longest edge of the
nearest neighbor graph of the above network) \cite{Penrose97The,Penrose99A,Penrose03Random}.
Gupta and Kumar's result is based on a key finding in continuum percolation
theory \cite[Chapter 6]{Meester96Continuum}: Consider an \emph{infinite
network} with nodes distributed on $\Re^{2}$ following a Poisson
distribution with density $\rho$; and suppose that\textbf{ }a\textbf{
}pair of nodes separated by a Euclidean distance $x$ are directly
connected with probability $g\left(x\right)$, independent of the
event that another distinct pair of nodes are directly connected.
Here, $g:\Re^{+}\rightarrow\left[0,1\right]$ satisfies the conditions
of rotational invariance, non-increasing monotonicity and integral
boundedness \cite[pp. 151-152]{Meester96Continuum}. As $\rho\rightarrow\infty$
\emph{asymptotically almost surely} the above network on $\Re^{2}$
has only a unique infinite component and isolated nodes.

In \cite{Philips89Connectivity}, Philips et al. proved that the average
node degree, i.e. the expected number of neighbors of an arbitrary
node, must grow logarithmically with the area of the network to ensure
that the network is connected, where nodes are placed randomly on
a square according to a Poisson point process with a known density
in $\Re^{2}$. This result by Philips et al. actually provides a necessary
condition on the average node degree required for connectivity. In
\cite{Xue04The}, Xue et al. showed that in a network with a total
of $n$ nodes randomly and uniformly distributed in a unit square
in $\Re^{2}$, if each node is connected to $c\log n$ nearest neighbors
with $c\leq0.074$ then the resulting random network is asymptotically
almost surely disconnected as $n\rightarrow\infty$; and if each node
is connected to $c\log n$ nearest neighbors with $c\geq5.1774$ then
the network is asymptotically almost surely connected as $n\rightarrow\infty$.
In \cite{Balister05Connectivity}, Balister et al. advanced the results
in \cite{Xue04The} and improved the lower and upper bounds to $0.3043\log n$
and $0.5139\log n$ respectively. In a more recent paper \cite{Balister09A},
Balister et al. achieved much improved results by showing that there
exists a constant $c_{crit}$ such that if each node is connected
to $\left\lfloor c\log n\right\rfloor $ nearest neighbors with $c<c_{crit}$
then the network is asymptotically almost surely disconnected as $n\rightarrow\infty$,
and if each node is connected to $\left\lfloor c\log n\right\rfloor $
nearest neighbors with $c>c_{crit}$ then the network is asymptotically
almost surely connected as $n\rightarrow\infty$. In both \cite{Balister05Connectivity}
and \cite{Balister09A}, the authors considered nodes randomly distributed
following a Poisson process of intensity one in a square of area $n$
in $\Re^{2}$. In \cite{Ravelomanana04Extremal}, Ravelomanana investigated
the critical transmission range for connectivity in 3-dimensional
wireless sensor networks and derived similar results to the 2-dimensional
results in \cite{Gupta98Critical}. 

All the above work is based on the unit disk connection model. The
unit disk connection model may simplify analysis but no real antenna
has an antenna pattern similar to it. The log-normal shadowing connection
model, which is more realistic than the unit disk connection model,
has accordingly been considered for investigating network connectivity
in \cite{Hekmat06Connectivity,Orriss03Probability,Miorandi05Coverage,Miorandi08The,Bettstetter04failure,Bettstetter05Connectivity}.
Under the log-normal shadowing connection model, two nodes are directly
connected if the received power at one node from the other node, whose
attenuation follows the log-normal model \cite{Rappaport02Wireless},
is greater than a given threshold. In \cite{Hekmat06Connectivity,Orriss03Probability,Miorandi05Coverage,Miorandi08The,Bettstetter04failure,Bettstetter05Connectivity},
the authors investigated from different perspectives the necessary
condition for a network with nodes uniformly or Poissonly distributed
in a bounded area in $\Re^{2}$ and a pair of nodes are directly connected
following the log-normal connection model to be connected. Most of
the above work is based on the observation that a necessary condition
for a connected network is that the network has no isolated nodes.
Their analysis \cite{Hekmat06Connectivity,Orriss03Probability,Miorandi05Coverage,Miorandi08The,Bettstetter04failure,Bettstetter05Connectivity}
also relies on the assumption that under the log-normal connection
model, the node isolation events are independent, an assumption yet
to be validated analytically.

Other work in the area include \cite{Dousse05Impact,Goeckel09Asymptotic,Li09Asymptotic,Kong08Connectivity},
which studies from the percolation perspective, the impact of mutual
interference caused by simultaneous transmissions, the impact of physical
layer cooperative transmissions, the impact of directional antennas
and the impact of unreliable links on connectivity respectively.

In this paper we discuss the relation between three widely used network
models in the above studies, i.e. the dense network model, the extended
network model and the infinite network model which originated from
continuum percolation theory. We examine mainly from the connectivity
perspective the similarities and differences between these models
and demonstrate that results obtained from continuum percolation theory
assuming an infinite network model \emph{cannot} be directly applied
to the dense and extended network models. We also establish some results
that form key steps in extending asymptotic results on network connectivity
from the unit disk model to the more generic random connection model.

\section{Network Models\label{sec:Network-Models}}

In this section we give a formal definition of network models considered
in the paper. Let $g:\Re^{+}\rightarrow\left[0,1\right]$ be a function
satisfying the conditions of non-increasing monotonicity and integral
boundedness %
\footnote{Throughout this paper, we use the non-bold symbol, e.g. $x$, to denote
a scalar and the bold symbol, e.g. $\boldsymbol{x}$, to denote a
vector.%
}, %
\footnote{We refer readers to \cite[Chapter 6]{Franceschetti07Random,Meester96Continuum}
for detailed discussions on the random connection model.%
}:\begin{eqnarray}
g\left(x\right)\leq g\left(y\right) &  & \textrm{whenever}\;\; x\geq y\label{eq:conditions on g(x) - non-increasing}\\
0<\int_{\Re^{2}}g\left(\left\Vert \boldsymbol{x}\right\Vert \right)d\boldsymbol{x}<\infty\label{eq:conditions on g(x) - integral boundness}\end{eqnarray}
where $\left\Vert \boldsymbol{x}\right\Vert $ denotes the Euclidean
norm of $\boldsymbol{x}$. The function $g$ is the connection function
that has been widely considered in the random connection model \cite[Chapter 6]{Franceschetti07Random,Meester96Continuum}.
Further the requirement of rotational invariance on the connection
function in the random connection model \cite[Chapter 6]{Franceschetti07Random,Meester96Continuum}
has been met implicitly by letting $g$ be a function of a scalar,
typically representing the Euclidean distance between two nodes being
considered. 

The following notations and definitions are used throughout the paper:
\begin{itemize}
\item $f\left(z\right)=o_{z}\left(h\left(z\right)\right)$ iff $\lim_{z\rightarrow\infty}\frac{f\left(z\right)}{h\left(z\right)}=0$;
\item $f\left(z\right)=\omega_{z}\left(h\left(z\right)\right)$ iff $h\left(z\right)=o_{z}\left(f\left(z\right)\right)$;
\item $f\left(z\right)=\Theta_{z}\left(h\left(z\right)\right)$ iff there
exist a sufficiently large $z_{0}$ and two positive constants $c_{1}$
and $c_{2}$ such that for any $z>z_{0}$, $c_{1}h\left(z\right)\geq f\left(z\right)\geq c_{2}h\left(z\right)$;
\item $f\left(z\right)\sim_{z}h\left(z\right)$ iff $\lim_{z\rightarrow\infty}\frac{f\left(z\right)}{h\left(z\right)}=1$;
\item An event $\xi$ is said to occur almost surely if its probability
equals to one;
\item An event $\xi_{z}$ depending on $z$ is said to occur asymptotically
almost surely (a.a.s.) if its probability tends to one as $z\rightarrow\infty$.
\end{itemize}
The above definition applies whether the argument $z$ is continuous
or discrete, e.g. assuming integer values.

Using the integral boundedness condition on $g$ and the non-increasing
property of $g$, it can be shown that\begin{eqnarray*}
\int_{\Re^{2}}g\left(\left\Vert \boldsymbol{x}\right\Vert \right)d\boldsymbol{x} & = & \lim_{z\rightarrow\infty}\int_{0}^{z}2\pi xg\left(x\right)dx\end{eqnarray*}
and\[
\lim_{z\rightarrow\infty}\int_{z}^{\infty}2\pi xg\left(x\right)dx=0\]
The above equation, together with the following derivations

\begin{eqnarray*}
 &  & \lim_{z\rightarrow\infty}\int_{z}^{\infty}2\pi xg\left(x\right)dx\\
 & \geq & \lim_{z\rightarrow\infty}\int_{z}^{2z}2\pi xg\left(x\right)dx\\
 & \geq & \lim_{z\rightarrow\infty}\int_{z}^{2z}2\pi xg\left(2z\right)dx\\
 & = & \lim_{z\rightarrow\infty}3\pi z^{2}g\left(2z\right)\end{eqnarray*}
allow us to conclude that \begin{equation}
g\left(x\right)=o_{x}\left(\frac{1}{x^{2}}\right)\label{eq:scaling property of g(x)}\end{equation}

From time to time, we may require $g$ to satisfy the more restrictive
requirement that \begin{equation}
g\left(x\right)=o_{x}\left(\frac{1}{x^{2}\log^{2}x}\right)\label{eq:Condition on g(x) requirement 2}\end{equation}
and \eqref{eq:conditions on g(x) - non-increasing}. When we do impose
such additional constraint, we will specify it clearly. It is obvious
that conditions \eqref{eq:conditions on g(x) - non-increasing} and
\eqref{eq:conditions on g(x) - integral boundness} imply \eqref{eq:scaling property of g(x)}
while condition \eqref{eq:Condition on g(x) requirement 2} implies
\eqref{eq:conditions on g(x) - integral boundness} and \eqref{eq:scaling property of g(x)}. 

In the following analysis, we will only use \eqref{eq:conditions on g(x) - non-increasing}
and \eqref{eq:Condition on g(x) requirement 2} (instead of \eqref{eq:conditions on g(x) - non-increasing}
and \eqref{eq:conditions on g(x) - integral boundness}) when necessary.
This helps to identify which part of the analysis relies on the more
restrictive requirement on $g$. In our analysis, we assume that $g$
has infinite support when necessary. Our results however apply to
the situation when $g$ has bounded support, which forms a special
case  and only makes the analysis easier.

Further, define \begin{equation}
r_{\rho}\triangleq\sqrt{\frac{\log\rho+b}{C\rho}}\label{eq:definition of r_rho}\end{equation}
 for some non-negative value $\rho$, where \begin{equation}
0<C=\int_{\Re^{2}}g\left(\left\Vert \boldsymbol{x}\right\Vert \right)d\boldsymbol{x}<\infty\label{eq:definition of C}\end{equation}
 and $b$ is a constant ($+\infty$ is allowed).

In the following, we give the formal definitions of four network models
discussed in the paper. The motivation for defining a new model in
Definition  \ref{def:network model discussed in the paper} appears
later after all models are defined.
\begin{definitn}
\label{def:(dense-network-model)}(dense network model) Let $\mathcal{G}\left(\mathcal{X}_{\rho},g_{r_{\rho}},A\right)$
be a network with nodes Poissonly distributed on a unit square $A\triangleq\left[-\frac{1}{2},\frac{1}{2}\right]^{2}$
with density $\rho$ and a pair of nodes separated by a Euclidean
distance $x$ are directly connected with probability $g_{r_{\rho}}\left(x\right)\triangleq g\left(\frac{x}{r_{\rho}}\right)$,
independent of the event that another distinct pair of nodes are directly
connected. $\mathcal{X}_{\rho}$ denotes the vertex set in $\mathcal{G}\left(\mathcal{X}_{\rho},g_{r_{\rho}},A\right)$.
\end{definitn}

\begin{definitn}
\label{def:(extended-network-model)}(extended network model) Let
$\mathcal{G}\left(\mathcal{X}_{1},g_{\sqrt{\frac{\log\rho+b}{C}}},A_{\sqrt{\rho}}\right)$
be a network with nodes Poissonly distributed on a square $A_{\sqrt{\rho}}\triangleq\left[-\frac{\sqrt{\rho}}{2},\frac{\sqrt{\rho}}{2}\right]^{2}$
with density 1 and a pair of nodes separated by a Euclidean distance
$x$ are directly connected with probability $g_{\sqrt{\frac{\log\rho+b}{C}}}\left(x\right)\triangleq g\left(\frac{x}{\sqrt{\frac{\log\rho+b}{C}}}\right)$,
independent of the event that another distinct pair of nodes are directly
connected. $\mathcal{X}_{1}$ denotes the vertex set in $\mathcal{G}\left(\mathcal{X}_{1},g_{\sqrt{\frac{\log\rho+b}{C}}},A_{\sqrt{\rho}}\right)$.
\end{definitn}

\begin{definitn}
\label{def:network model discussed in the paper}Let $\mathcal{G}\left(\mathcal{X}_{\frac{\log\rho+b}{C}},g,A_{\frac{1}{r_{\rho}}}\right)$
be a network with nodes Poissonly distributed on a square $A_{\frac{1}{r_{\rho}}}\triangleq\left[-\frac{1}{2r_{\rho}},\frac{1}{2r_{\rho}}\right]^{2}$
with density $\frac{\log\rho+b}{C}$ and a pair of nodes separated
by a Euclidean distance $x$ are directly connected with probability
$g\left(x\right)$, independent of the event that another distinct
pair of nodes are directly connected. $\mathcal{X}_{\frac{\log\rho+b}{C}}$
denotes the vertex set in $\mathcal{G}\left(\mathcal{X}_{\frac{\log\rho+b}{C}},g,A_{\frac{1}{r_{\rho}}}\right)$.
\end{definitn}

\begin{definitn}
\label{def:(infinite-network-model)}(infinite network model) Let
$\mathcal{G}\left(\mathcal{X}_{\rho},g,\Re^{2}\right)$ be a network
with nodes Poissonly distributed on $\Re^{2}$ \textbf{with density
$\rho$} and a pair of nodes separated by a Euclidean distance $x$
are directly connected with probability $g\left(x\right)$, independent
of the event that another distinct pair of nodes are directly connected.
$\mathcal{X}_{\rho}$ denotes the vertex set in $\mathcal{G}\left(\mathcal{X}_{\rho},g,\Re^{2}\right)$.
\end{definitn}
With minor abuse of the terminology, we use $A$ (respectively $A_{\sqrt{\rho}}$,
$A_{\frac{1}{r_{\rho}}}$) to denote both the square itself and the
area of the square, and in the latter case, $A=1$ (respectively $A_{\sqrt{\rho}}=\rho$,
$A_{\frac{1}{r_{\rho}}}=\frac{1}{r_{\rho}^{2}}$).

The reason for choosing this particular form of $r_{\rho}$ and the
above network models is to avoid triviality in the analysis and to
make the analysis compatible with existing results obtained under
a unit disk connection model. Particularly $ $when $g$ takes the
form that $g(x)=1$ for $x\leq1$ and $g(x)=0$ for $x>1$, it can
be shown that $\mathcal{G}\left(\mathcal{X}_{\rho},g_{r_{\rho}},A\right)$
reduces to the dense network model under a unit disk connection model
discussed in \cite{Gupta98Critical,Penrose03Random,Franceschetti07Random}
where $C=\pi$ and $r_{\rho}$ corresponds to the critical transmission
range for connectivity; $\mathcal{G}\left(\mathcal{X}_{1},g_{\sqrt{\frac{\log\rho+b}{C}}},A_{\sqrt{\rho}}\right)$
reduces to the extended network model under a unit disk connection
model considered in \cite[Chapter 3.3.2]{Franceschetti07Closing,Franceschetti07Random}.
Thus the above model easily incorporates the unit disk connection
model as a special case. A similar conclusion can also be drawn for
the log-normal connection model.

Now we establish the relationship between the three network models
in Definitions \ref{def:(dense-network-model)}, \ref{def:(extended-network-model)},
\ref{def:network model discussed in the paper} on finite and then
asymptotically infinite regions respectively using the scaling and
coupling technique \cite{Meester96Continuum}. Given an instance of
$\mathcal{G}\left(\mathcal{X}_{\rho},g_{r_{\rho}},A\right)$, if we
scale the Euclidean distances between \emph{all pairs} of nodes by
a factor of $\sqrt{\rho}$ while maintaining their connections, there
results a random network where nodes are Poissonly distributed on
a square $A_{\sqrt{\rho}}$ with density $1$ and a pair of nodes
separated by a Euclidean distance $x$ are directly connected with
probability $g_{\sqrt{\frac{\log\rho+b}{\rho}}}\left(x\right)$, i.e.
an instance of $\mathcal{G}\left(\mathcal{X}_{1},g_{\sqrt{\frac{\log\rho+b}{C}}},A_{\sqrt{\rho}}\right)$.
All connectivity properties, e.g. connectivity, number of isolated
nodes, number of components of a specified order, that hold in the
instance of $\mathcal{G}\left(\mathcal{X}_{\rho},g_{r_{\rho}},A\right)$
are also valid for the associated instance in $\mathcal{G}\left(\mathcal{X}_{1},g_{\sqrt{\frac{\log\rho+b}{C}}},A_{\sqrt{\rho}}\right)$
(To be more precise, the \emph{underlying graphs }of these two network
instances are \emph{isomorphic} \cite{Mao07Wireless,Gross04Handbook}).
Similarly if we shrink the Euclidean distances between all pairs of
nodes in a network, which is an instance of $\mathcal{G}\left(\mathcal{X}_{1},g_{\sqrt{\frac{\log\rho+b}{C}}},A_{\sqrt{\rho}}\right)$,
by a factor of $\frac{1}{\sqrt{\rho}}$, there results an instance
of $\mathcal{G}\left(\mathcal{X}_{\rho},g_{r_{\rho}},A\right)$ and
the two networks again have the same connectivity property. Therefore
$\mathcal{G}\left(\mathcal{X}_{\rho},g_{r_{\rho}},A\right)$ and $\mathcal{G}\left(\mathcal{X}_{1},g_{\sqrt{\frac{\log\rho+b}{C}}},A_{\sqrt{\rho}}\right)$
are \emph{equivalent} in that any connectivity property that holds
in one model will necessarily hold in the other. Similarly, it can
also be shown that $\mathcal{G}\left(\mathcal{X}_{\rho},g_{r_{\rho}},A\right)$
and $\mathcal{G}\left(\mathcal{X}_{\frac{\log\rho+b}{C}},g,A_{\frac{1}{r_{\rho}}}\right)$
are equivalent in their connectivity properties. Thus in this paper
we only chose one model, i.e. $\mathcal{G}\left(\mathcal{X}_{\frac{\log\rho+b}{C}},g,A_{\frac{1}{r_{\rho}}}\right)$,
to discuss the connectivity properties of finite and asymptotically
infinite networks. The reason for choosing this network model is that
under the model, a pair of nodes are directly connected following
$g$, in the same way as nodes in the infinite network model $\mathcal{G}\left(\mathcal{X}_{\rho},g,\Re^{2}\right)$
are directly connected. This facilitates the discussion and comparison
between the finite (asymptotically infinite) network model and the
infinite network model, which is a key focus of the paper. 

Further, we point out that the above discussion on the equivalence
of network models $\mathcal{G}\left(\mathcal{X}_{\rho},g_{r_{\rho}},A\right)$,
$\mathcal{G}\left(\mathcal{X}_{1},g_{\sqrt{\frac{\log\rho+b}{C}}},A_{\sqrt{\rho}}\right)$
and $\mathcal{G}\left(\mathcal{X}_{\frac{\log\rho+b}{C}},g,A_{\frac{1}{r_{\rho}}}\right)$
is only valid for the random connection model. For the other widely
used model, i.e. the SINR model, under some special circumstances,
e.g. the background noise is negligible \cite{Haenggi09Stochastic}
and the attenuation function is a power law function, the three network
models are equivalent; otherwise under more general conditions, \emph{the
three models are not equivalent} (see e.g. \cite{Dousse05Impact,Dousse03Impact}).
However the key observation revealed in our analysis, i.e. results
obtained from an infinite network model do not necessarily apply to
the dense and extended network models, also holds for the SINR model.

\section{A Comparative Study of The Expected Number of Isolated Nodes \label{sec:expected number of isolated nodes}}

In this section we comparatively study the expected number of isolated
nodes in $\mathcal{G}\left(\mathcal{X}_{\frac{\log\rho+b}{C}},g,A_{\frac{1}{r_{\rho}}}\right)$
and the expected number of isolated nodes in its counterpart in an
infinite network, i.e. a region with the same area as $A_{\frac{1}{r_{\rho}}}$
in an infinite network on $\Re^{2}$ with the same node density $\frac{\log\rho+b}{C}$
and connection function $g$. The number of isolated nodes is a key
parameter in the analysis of network connectivity. A necessary condition
for a network to be connected is that the network has no isolated
node. Such a necessary condition has been shown to be also a sufficient
condition for a connected network as $\rho\rightarrow\infty$ under
a unit disk connection model \cite{Penrose03Random} and this may
also be possibly true for a random connection model.

\subsection{Expected Number of Isolated Nodes in an Asymptotically Infinite Network}

In this subsection we analyze the expected number of isolated nodes
in $\mathcal{G}\left(\mathcal{X}_{\frac{\log\rho+b}{C}},g,A_{\frac{1}{r_{\rho}}}\right)$.
For an arbitrary node in $\mathcal{G}\left(\mathcal{X}_{\frac{\log\rho+b}{C}},g,A_{\frac{1}{r_{\rho}}}\right)$
at location $\boldsymbol{y}$, it can be shown that the probability
that the node is isolated is given by \cite{Ta09On}:

\begin{equation}
\Pr\left(I_{\boldsymbol{y}}=1\right)=e^{-\int_{A_{\frac{1}{r_{\rho}}}}\frac{\log\rho+b}{C}g\left(\left\Vert \boldsymbol{x}-\boldsymbol{y}\right\Vert \right)d\boldsymbol{x}}\label{eq:probability a random node being isolated finite square}\end{equation}
where $I_{\boldsymbol{y}}$ is an indicator random variable: $I_{\boldsymbol{y}}=1$
if the node at $\boldsymbol{y}$ is isolated and $I_{\boldsymbol{y}}=0$
otherwise. Denote by $W$ the number of isolated nodes in an instance
of $\mathcal{G}\left(\mathcal{X}_{\frac{\log\rho+b}{C}},g,A_{\frac{1}{r_{\rho}}}\right)$.
It then follows that the expected number of isolated nodes in $\mathcal{G}\left(\mathcal{X}_{\frac{\log\rho+b}{C}},g,A_{\frac{1}{r_{\rho}}}\right)$
is given by

\begin{equation}
E\left(W\right)=\int_{A_{\frac{1}{r_{\rho}}}}\frac{\log\rho+b}{C}e^{-\int_{A_{\frac{1}{r_{\rho}}}}\frac{\log\rho+b}{C}g\left(\left\Vert \boldsymbol{x}-\boldsymbol{y}\right\Vert \right)d\boldsymbol{x}}d\boldsymbol{y}\label{eq:expected number of isolated nodes in a square finite}\end{equation}

On the basis of \eqref{eq:expected number of isolated nodes in a square finite},
the following theorem can be obtained.
\begin{thm}
\label{thm:expected isolated nodes asymptotically infinite square}The
expected number of isolated nodes in $\mathcal{G}\left(\mathcal{X}_{\frac{\log\rho+b}{C}},g,A_{\frac{1}{r_{\rho}}}\right)$
is $\int_{A_{\frac{1}{r_{\rho}}}}\frac{\log\rho+b}{C}e^{-\int_{A_{\frac{1}{r_{\rho}}}}\frac{\log\rho+b}{C}g\left(\left\Vert \boldsymbol{x}-\boldsymbol{y}\right\Vert \right)d\boldsymbol{x}}d\boldsymbol{y}$.
For $g$ satisfying both \eqref{eq:conditions on g(x) - non-increasing}
and \eqref{eq:Condition on g(x) requirement 2}, the expected number
of isolated nodes in $\mathcal{G}\left(\mathcal{X}_{\frac{\log\rho+b}{C}},g,A_{\frac{1}{r_{\rho}}}\right)$
converges asymptotically to $e^{-b}$ as $\rho\rightarrow\infty$.\end{thm}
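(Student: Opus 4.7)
The first assertion is immediate from \eqref{eq:expected number of isolated nodes in a square finite} by linearity of expectation applied to the indicator variables for isolation. For the asymptotic limit I would first factor
\[ E(W) = \mu\, e^{-\mu C} \int_{A_{\frac{1}{r_{\rho}}}} e^{\mu\,[C - I(\boldsymbol{y})]}\, d\boldsymbol{y}, \]
where $\mu := (\log\rho+b)/C$ and $I(\boldsymbol{y}) := \int_{A_{\frac{1}{r_{\rho}}}} g(\|\boldsymbol{x}-\boldsymbol{y}\|)\, d\boldsymbol{x}$. The definition of $r_\rho$ and $C$ yields $\mu\,e^{-\mu C}\cdot |A_{\frac{1}{r_{\rho}}}| = e^{-b}$, so the claim reduces to showing $\int_{A_{\frac{1}{r_{\rho}}}} e^{\mu[C-I(\boldsymbol{y})]}\, d\boldsymbol{y} = |A_{\frac{1}{r_{\rho}}}|(1+o(1))$. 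The lower bound is automatic since $C - I \geq 0$.

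For the upper bound I would partition $A_{\frac{1}{r_\rho}}$ into an interior region $A_I$ of points at distance at least $d^* := \rho^c$ from the boundary (for a small fixed $c\in(0,1/8)$) and a complementary boundary region $A_B$. On $A_I$ the set $A_{\frac{1}{r_\rho}}^c$ as seen from $\boldsymbol{y}$ lies outside the disk of radius $d^*$, so $C - I(\boldsymbol{y}) \leq T(d^*) := \int_{d^*}^\infty 2\pi s\, g(s)\, ds$. The strengthened hypothesis~\eqref{eq:Condition on g(x) requirement 2} forces $T(d) = o(1/\log d)$ as $d\to\infty$, whence $\mu T(d^*) = o(1)$ and $e^{\mu[C - I(\boldsymbol{y})]} = 1+o(1)$ uniformly on $A_I$; combined with $|A_I|/|A_{\frac{1}{r_\rho}}|\to 1$, the interior yields the full $|A_{\frac{1}{r_\rho}}|(1+o(1))$. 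On $A_B$ I would further decompose into four edge strips and four corner squares. For corner points, the quarter-plane estimate $I(\boldsymbol{y}) \geq C/4 - o(1)$ implies $e^{-\mu I(\boldsymbol{y})} \leq \rho^{-1/4 + o(1)}$, so the four corners contribute to $E(W)$ an amount $O(\mu(d^*)^2 \rho^{-1/4}) = O(\log\rho\cdot\rho^{2c-1/4})$, which vanishes for $c < 1/8$.

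The main obstacle is the edge-strip contribution. For an edge-strip point at distance $d \leq d^*$ from its nearest edge (and far from corners), the union bound gives $C - I(\boldsymbol{y}) \leq H(d) + 3H(d^*)$, where $H(d) := \int_{u>d,\,v\in\Re} g(\sqrt{u^2+v^2})\,du\,dv$; after multiplying by $\mu e^{-\mu C}$, the task reduces to proving
\[ \int_0^{d^*} \bigl(e^{\mu H(t)} - 1\bigr)\, dt \;=\; o(1/r_\rho). \]
The integrand is of size $O(\sqrt\rho)$ as $t\to 0^+$ (since $H(0) = C/2$ and $\mu C/2 = (\log\rho+b)/2$) yet must integrate to $o(\sqrt{\rho/\log\rho})$, so the delicate point is to confine the spike near $t=0$ to a region of width $o(1/r_\rho)$. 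I would attack it by combining: (i) the observation that the non-increasing density $\phi(u) := \int g(\sqrt{u^2+v^2})\,dv$ satisfies $\phi(0^+) > 0$ (since $g\not\equiv 0$ is non-increasing), which yields $H(t) \leq C/2 - \kappa t$ on an initial range $[0, t_0]$ for some constant $\kappa > 0$, giving $\int_0^{t_0}e^{\mu H(t)}dt = O(\sqrt\rho/\mu) = O(\sqrt\rho/\log\rho) = o(1/r_\rho)$; (ii) on the intermediate range $[t_0, d^*]$ a layered bound of the form $e^{\mu H(t)} - 1 \leq \mu H(t)\, e^{\mu H(t_0)}$ together with the Fubini identity $\int_0^\infty H(t)\,dt = 2\int_0^\infty r^2 g(r)\,dr$ and the growth $r^2 g(r) = o(1/\log^2 r)$ from \eqref{eq:Condition on g(x) requirement 2} to control $\int_{t_0}^{d^*}(e^{\mu H(t)}-1)\,dt$. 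The intricate interplay between the near-origin decrease of $H$, the exponent $\mu = \Theta(\log\rho)$, and the large-distance decay of $g$ is precisely what makes the strengthened condition~\eqref{eq:Condition on g(x) requirement 2} necessary rather than only the integral boundedness~\eqref{eq:conditions on g(x) - integral boundness}.
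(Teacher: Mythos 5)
Your route is, at heart, the paper's own proof in multiplicative dress: the same three-way decomposition into an interior at polynomially growing distance from the border, four edge strips and four corner squares; the same use of the strengthened condition \eqref{eq:Condition on g(x) requirement 2} to make $\mu$ times the tail mass beyond that polynomial radius $o(1)$; a quarter-plane bound at the corners; and a linear-in-distance estimate near an edge (your $H(t)\le C/2-\kappa t$ plays the role of the paper's rectangle estimate $\int_{R_y}g\ge 2\beta y$, and your half-plane mass $H$ with the union bound over the four sides replaces the paper's explicit regions $D_y$, $C_y$, $R_y$). Your first assertion, the interior estimate, the corner estimate and step (i) are all correct.

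The step that does not close as written is (ii), the intermediate edge range $[t_0,d^*]$. The inequality $e^{\mu H(t)}-1\le \mu H(t)e^{\mu H(t_0)}$ is valid but too lossy when applied over the whole range: the only available bound on $e^{\mu H(t_0)}$ is of order $\rho^{1/2-\kappa t_0/C}$, i.e. polynomially large, while under \eqref{eq:Condition on g(x) requirement 2} alone $\int_{t_0}^{d^*}H(t)\,dt$ need not be $O(1)$. Indeed the unrestricted Fubini identity you invoke can give $+\infty$: for $g(x)=\Theta_x\bigl(\frac{1}{x^{2}\log^{2}x\,\log\log x}\bigr)$, which satisfies \eqref{eq:Condition on g(x) requirement 2}, one has $\int_0^\infty r^{2}g(r)\,dr=\infty$, $H(t)=\Theta\bigl(\frac{1}{\log t\,\log\log t}\bigr)$, and $\int_{t_0}^{d^*}H(t)\,dt$ is of order $\rho^{c}/(\log\rho\,\log\log\rho)$. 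The product in your bound is then of order $\rho^{1/2-\kappa t_0/C+c}$ up to logarithms, which is $o(1/r_\rho)=o(\sqrt{\rho/\log\rho})$ only if $c<\kappa t_0/C$; but you fixed $c\in(0,1/8)$ before the $g$-dependent constants $\kappa,t_0$ were introduced, so the quantifiers are in the wrong order and the estimate does not follow. The repair is easy with ingredients you already have: either choose $c<\min\{1/8,\kappa t_0/C\}$ after $\kappa,t_0$ are fixed, or insert one more cut at a constant $T_\epsilon$ with $H(T_\epsilon)\le\epsilon$ (possible since $H(t)\le T(t)\rightarrow 0$ by integrability alone) and bound the remaining range crudely by $d^*e^{\mu\epsilon}=\rho^{\,c+\epsilon/C+o(1)}=o(1/r_\rho)$ for $\epsilon,c$ small; this second cut is exactly the role played in Appendix I by the constant $\triangle$ of \eqref{eq:second term construction of the delta} and the half-disk estimate $\int_{C_y}g\ge\gamma C$ with $\frac{\varepsilon}{2}<\gamma<\frac{1}{2}$, which produces $\rho^{-\gamma}$ decay beating the strip length.
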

\begin{proof}
See Appendix I
\end{proof}

\subsubsection{Impact of Boundary Effect on the Number of Isolated Nodes}

Before we proceed to the comparison of the expected number of isolated
nodes in $\mathcal{G}\left(\mathcal{X}_{\frac{\log\rho+b}{C}},g,A_{\frac{1}{r_{\rho}}}\right)$
and the expected number in its counterpart in an infinite network,
we first examine the impact of boundary effect on the number of isolated
nodes in $\mathcal{G}\left(\mathcal{X}_{\frac{\log\rho+b}{C}},g,A_{\frac{1}{r_{\rho}}}\right)$.
Boundary effect is a common concern in the analysis of network connectivity.
The analysis of the impact of the boundary effect is done by comparing
the number of isolated nodes in $\mathcal{G}\left(\mathcal{X}_{\frac{\log\rho+b}{C}},g,A_{\frac{1}{r_{\rho}}}\right)$
and the number in a network with nodes Poissonly distributed on a
torus $A_{\frac{1}{r_{\rho}}}^{T}\triangleq\left[-\frac{1}{2r_{\rho}},\frac{1}{2r_{\rho}}\right]^{2}$
with node density $\frac{\log\rho+b}{C}$ and where a pair of nodes
separated by a \emph{toroidal distance} $x^{T}$ \cite[p. 13]{Penrose03Random}
are directly connected with probability $g\left(x^{T}\right)$, independent
of the event that another distinct pair of nodes are directly connected.
Denote the network on a torus by $\mathcal{G}^{T}\left(\mathcal{X}_{\frac{\log\rho+b}{C}},g,A_{\frac{1}{r_{\rho}}}^{T}\right)$.
The following lemma can be established.
\begin{lemma}
\label{lem:Expected Isolated nodes torus}The expected number of isolated
nodes in $\mathcal{G}^{T}\left(\mathcal{X}_{\frac{\log\rho+b}{C}},g,A_{\frac{1}{r_{\rho}}}^{T}\right)$
is $\rho e^{-\int_{A_{\frac{1}{r_{\rho}}}}\frac{\log\rho+b}{C}g\left(\left\Vert \boldsymbol{x}\right\Vert \right)d\boldsymbol{x}}$.
For $g$ satisfying both \eqref{eq:conditions on g(x) - non-increasing}
and \eqref{eq:Condition on g(x) requirement 2}, the expected number
of isolated nodes in $\mathcal{G}^{T}\left(\mathcal{X}_{\frac{\log\rho+b}{C}},g,A_{\frac{1}{r_{\rho}}}\right)$
converges to $e^{-b}$ as $\rho\rightarrow\infty$.\end{lemma}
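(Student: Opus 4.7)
The plan is to exploit the translational symmetry of the torus to evaluate the exact formula and then extract the asymptotic limit from a careful tail bound on $g$. For the exact formula, observe that on the torus $A_{\frac{1}{r_{\rho}}}^{T}$ the toroidal distance is invariant under translation. Hence, for any node placed at location $\boldsymbol{y}\in A_{\frac{1}{r_{\rho}}}^{T}$, by Slivnyak's theorem (equivalently, by conditioning a Poisson process on having a point at $\boldsymbol{y}$ and using the independence of the connection events), the probability that this node is isolated is the void probability of a Poisson process with intensity $\frac{\log\rho+b}{C}g(\|\cdot-\boldsymbol{y}\|^{T})$ on the torus. Translating the integration variable removes $\boldsymbol{y}$ and gives $\Pr(I_{\boldsymbol{y}}=1)=\exp\!\left(-\int_{A_{\frac{1}{r_{\rho}}}}\frac{\log\rho+b}{C}g(\|\boldsymbol{x}\|)\,d\boldsymbol{x}\right)$, independent of $\boldsymbol{y}$.

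Then, by Campbell's formula, the expected number of isolated nodes equals the node density times the torus area times this isolation probability. Since $A_{\frac{1}{r_{\rho}}}^{T}$ has area $\frac{1}{r_{\rho}^{2}}=\frac{C\rho}{\log\rho+b}$, multiplying by the density $\frac{\log\rho+b}{C}$ yields exactly the factor $\rho$ in front, producing the stated closed form $\rho\,e^{-\int_{A_{\frac{1}{r_{\rho}}}}\frac{\log\rho+b}{C}g(\|\boldsymbol{x}\|)d\boldsymbol{x}}$.

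For the asymptotic statement, I would write $\int_{A_{\frac{1}{r_{\rho}}}}g(\|\boldsymbol{x}\|)d\boldsymbol{x}=C-\epsilon_{\rho}$ where $\epsilon_{\rho}\triangleq\int_{\mathbb{R}^{2}\setminus A_{\frac{1}{r_{\rho}}}}g(\|\boldsymbol{x}\|)d\boldsymbol{x}$. Substituting yields
\[
\rho\,e^{-\frac{\log\rho+b}{C}\int_{A_{\frac{1}{r_{\rho}}}}g(\|\boldsymbol{x}\|)d\boldsymbol{x}}=e^{-b}\exp\!\left(\tfrac{\log\rho+b}{C}\epsilon_{\rho}\right),
\]
so it suffices to show $(\log\rho)\epsilon_{\rho}\to 0$. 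Since $A_{\frac{1}{r_{\rho}}}$ contains a disk of radius $\frac{1}{2r_{\rho}}$, one has $\epsilon_{\rho}\le\int_{\|\boldsymbol{x}\|\ge\frac{1}{2r_{\rho}}}g(\|\boldsymbol{x}\|)d\boldsymbol{x}=\int_{\frac{1}{2r_{\rho}}}^{\infty}2\pi x\,g(x)\,dx$. Invoking the strengthened hypothesis $g(x)=o_{x}(1/(x^{2}\log^{2}x))$, the integrand is bounded above by $2\pi\eta(x)/(x\log^{2}x)$ with $\eta(x)\to 0$, so the tail integral is at most $2\pi\sup_{x\ge\frac{1}{2r_{\rho}}}\eta(x)/\log(\tfrac{1}{2r_{\rho}})$, which, since $\log(1/r_{\rho})\sim\tfrac{1}{2}\log\rho$, is $o(1/\log\rho)$. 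Multiplying by $\log\rho+b$ gives $o(1)$, proving convergence to $e^{-b}$.

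The main obstacle is the last step: the exponent of the void probability is driven by $\log\rho$, so merely knowing $\int_{A_{\frac{1}{r_{\rho}}}}g\to C$ is insufficient; one needs the \emph{rate} $\epsilon_{\rho}=o(1/\log\rho)$. This is precisely why the stronger tail decay \eqref{eq:Condition on g(x) requirement 2} is required and why the weaker condition \eqref{eq:conditions on g(x) - integral boundness} alone, which only gives $g(x)=o(1/x^{2})$, does not suffice to extract the correct limit.
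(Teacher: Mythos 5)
Your proposal is correct, and its overall skeleton matches the paper's: you obtain the exact formula from the translation invariance of the torus (the paper does this by a direct computation with the toroidal distance, you phrase it via Slivnyak/Campbell, which is the same content), and you then reduce the limit to showing that $\frac{\log\rho+b}{C}$ times the tail $\int_{\Re^{2}\backslash A_{\frac{1}{r_{\rho}}}}g\left(\left\Vert \boldsymbol{x}\right\Vert \right)d\boldsymbol{x}$ vanishes. Where you differ is in how that tail estimate is executed. The paper sandwiches $\int_{A_{\frac{1}{r_{\rho}}}}$ between $\int_{D\left(\boldsymbol{0},r_{\rho}^{-\varepsilon}\right)}$ and $\int_{\Re^{2}}$ for a small exponent $\varepsilon$ and then kills the term $\frac{\log\rho+b}{C}\int_{r_{\rho}^{-\varepsilon}}^{\infty}2\pi xg\left(x\right)dx$ by L'H\^opital's rule in $\rho$; you instead use the full inscribed disk of radius $\frac{1}{2}r_{\rho}^{-1}$ and bound the tail by the explicit antiderivative $\int_{a}^{\infty}\frac{dx}{x\log^{2}x}=\frac{1}{\log a}$ together with $\eta\left(x\right)\triangleq g\left(x\right)x^{2}\log^{2}x\rightarrow0$, giving directly the quantitative rate $\epsilon_{\rho}\leq\frac{2\pi\sup_{x\geq\frac{1}{2}r_{\rho}^{-1}}\eta\left(x\right)}{\log\left(\frac{1}{2}r_{\rho}^{-1}\right)}=o\left(\frac{1}{\log\rho}\right)$. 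Your version is more elementary (no differentiation of the tail integral, no auxiliary $\varepsilon$) and makes explicit exactly why the strengthened decay $g\left(x\right)=o_{x}\left(\frac{1}{x^{2}\log^{2}x}\right)$ is the right hypothesis, namely that integral boundedness alone only gives $\epsilon_{\rho}\rightarrow0$ without the needed $o\left(1/\log\rho\right)$ rate; the paper's L'H\^opital computation buys a template that it reuses verbatim in the square (boundary-effect) analysis of Appendix I, where the smaller radius $r_{\rho}^{-\varepsilon}$ is needed so the disk fits around interior points.
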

\begin{proof}
See Appendix II
\end{proof}
On the basis of Theorem \ref{thm:expected isolated nodes asymptotically infinite square}
and Lemma \ref{lem:Expected Isolated nodes torus}, and using the
coupling technique, the following lemma can be obtained.
\begin{lemma}
\label{lem:Isolated nodes due to boundary effect}For $g$ satisfying
both \eqref{eq:conditions on g(x) - non-increasing} and \eqref{eq:Condition on g(x) requirement 2},
the number of isolated nodes in $\mathcal{G}\left(\mathcal{X}_{\frac{\log\rho+b}{C}},g,A_{\frac{1}{r_{\rho}}}\right)$
due to the boundary effect is a.a.s. $0$ as $\rho\rightarrow\infty$.\end{lemma}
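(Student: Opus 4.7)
The plan is to couple the square network $\mathcal{G}(\mathcal{X}_{(\log\rho+b)/C}, g, A_{1/r_\rho})$ with the torus network $\mathcal{G}^T(\mathcal{X}_{(\log\rho+b)/C}, g, A_{1/r_\rho}^T)$ so that the two share the same underlying Poisson point process and the same random connection decisions. Concretely, I would generate the node locations from a single Poisson point process of intensity $\frac{\log\rho+b}{C}$ on the square, and attach to each unordered pair of nodes an independent $\mathrm{Uniform}[0,1]$ variable $U_{ij}$. In the square model, the pair is declared connected iff $U_{ij}\le g(\|\boldsymbol{x}_i-\boldsymbol{x}_j\|)$; in the torus model, iff $U_{ij}\le g(x_{ij}^T)$, where $x_{ij}^T$ is the toroidal distance. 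Since $x_{ij}^T\le \|\boldsymbol{x}_i-\boldsymbol{x}_j\|$ and $g$ is non-increasing by \eqref{eq:conditions on g(x) - non-increasing}, every edge present in the square model is also present in the torus model. Hence every node isolated in the torus is also isolated in the square.

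Let $W$ be the number of isolated nodes in the square model and $W^T$ the number in the torus model. The coupling gives $W\ge W^T$ pointwise, and the non-negative integer-valued random variable $W-W^T$ counts precisely those nodes whose isolation in the square is solely attributable to the boundary effect (their would-be toroidal neighbours are wrapped around the boundary but cannot reach them in the square).

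By Theorem \ref{thm:expected isolated nodes asymptotically infinite square}, $E(W)\to e^{-b}$, and by Lemma \ref{lem:Expected Isolated nodes torus}, $E(W^T)\to e^{-b}$, both as $\rho\to\infty$. Linearity of expectation under the coupling then yields
\begin{equation*}
E(W-W^T) = E(W) - E(W^T) \longrightarrow 0 \quad \text{as } \rho\to\infty.
\end{equation*}
Since $W-W^T$ takes values in $\{0,1,2,\ldots\}$, Markov's inequality gives $\Pr(W-W^T\ge 1)\le E(W-W^T)\to 0$, so $W=W^T$ a.a.s., which is exactly the claim that no node is isolated due to the boundary effect a.a.s.

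The only delicate step is verifying that the coupling is genuinely well-defined and produces the correct marginal laws on both sides; this is standard once one notes that the two sets of connection events are monotone functions of the common family $\{U_{ij}\}$. The harder conceptual input, namely that both expectations converge to the \emph{same} limit $e^{-b}$, has already been done in Theorem \ref{thm:expected isolated nodes asymptotically infinite square} and Lemma \ref{lem:Expected Isolated nodes torus}; beyond that, the argument is a short squeeze via Markov's inequality.
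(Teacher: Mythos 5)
Your proposal is correct and follows essentially the same route as the paper: the paper also couples the square and torus networks on a common Poisson process so that the square graph is a subgraph of the torus graph (the paper realizes this by independently deleting each torus edge with probability $1-\frac{g(x)}{g(x^{T})}$, which is equivalent to your common-uniform construction), defines $W^{E}=W-W^{T}\geq0$ as the isolated nodes due to the boundary effect, and concludes from $E(W)\rightarrow e^{-b}$ (Theorem \ref{thm:expected isolated nodes asymptotically infinite square}) and $E(W^{T})\rightarrow e^{-b}$ (Lemma \ref{lem:Expected Isolated nodes torus}) that $E(W^{E})\rightarrow0$, hence $\Pr(W^{E}=0)\rightarrow1$ by non-negativity. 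Your use of Markov's inequality is just an explicit phrasing of that last step, so there is no substantive difference.
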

\begin{proof}
Comparing Theorem \ref{thm:expected isolated nodes asymptotically infinite square}
and Lemma \ref{lem:Expected Isolated nodes torus}, it is noted that
the expected numbers of isolated nodes on a torus and on a square
respectively asymptotically converge to the same \emph{non-zero finite
constant} $e^{-b}$ as $\rho\rightarrow\infty$. Now we use the coupling
technique \cite{Meester96Continuum} to construct the connection between
$W$ and $W^{T}$, the number of isolated nodes in the corresponding
instance of $\mathcal{G}^{T}\left(\mathcal{X}_{\frac{\log\rho+b}{C}},g,A_{\frac{1}{r_{\rho}}}\right)$.
Consider an instance of $\mathcal{G}^{T}\left(\mathcal{X}_{\frac{\log\rho+b}{C}},g,A_{\frac{1}{r_{\rho}}}^{T}\right)$.
The number of isolated nodes in that network is $W^{T}$, which depends
on $\rho$. Remove each connection of the above network with probability
$1-\frac{g\left(x\right)}{g\left(x^{T}\right)}$, independent of the
event that another connection is removed, where $x$ is the Euclidean
distance between the two endpoints of the connection and $x^{T}$
is the corresponding toroidal distance. Due to $x^{T}\leq x$ (see
\eqref{eq:property of toroidal distance 1} in Appendix II) and the
non-increasing property of $g$, $0\leq1-\frac{g\left(x\right)}{g\left(x^{T}\right)}\leq1$.
Further note that only connections between nodes near the boundary
with $x^{T}<x$ will be affected, i.e. when $x=x^{T}$ the removal
probability is zero. Denote the number of \emph{newly }appearing isolated
nodes by $W^{E}$. $W^{E}$ has the meaning of being \emph{the number
of isolated nodes due to the boundary effect}. It is straightforward
to show that $W^{E}$ is a \emph{non-negative random integer}, depending
on $\rho$. Further, such a connection removal process results a random
network with nodes Poissonly distributed with density $\frac{\log\rho+b}{C}$
where a pair of nodes separated by a \emph{Euclidean} distance $x$
are directly connected with probability $g\left(x\right)$, i.e. a
random network on a square with the boundary effect included. The
following equation results as a consequence of the above discussion:
\[
W=W^{E}+W^{T}\]
Using Theorem \ref{thm:expected isolated nodes asymptotically infinite square},
Lemma \ref{lem:Expected Isolated nodes torus} and the above equation,
it can be shown that \[
\lim_{\rho\rightarrow\infty}E\left(W^{E}\right)=\lim_{\rho\rightarrow\infty}E\left(W-W^{T}\right)=0\]
Due to the non-negativity of $W^{E}$: \[
\lim_{\rho\rightarrow\infty}\Pr\left(W^{E}=0\right)=1\]
\end{proof}
\begin{remrk}
Note that for $g$ not satisfying \eqref{eq:Condition on g(x) requirement 2},
$E\left(W\right)$ and $E\left(W^{T}\right)$ are not necessarily
convergent as $\rho\rightarrow\infty$. Particularly using the same
procedure in Appendix I and II (see also \eqref{eq:necessity of second requirement on g}
in Section \ref{sub:Comparison isolate nodes with infinite network}
below), it can be shown that when \textbf{$g\left(x\right)=\omega_{x}\left(\frac{1}{x^{2}\log^{2}x}\right)$,
}both $\lim_{\rho\rightarrow\infty}E\left(W\right)$ and $\lim_{\rho\rightarrow\infty}E\left(W^{T}\right)$
are unbounded. When \textbf{$g\left(x\right)=\Theta_{x}\left(\frac{1}{x^{2}\log^{2}x}\right)$,
}$\lim_{\rho\rightarrow\infty}E\left(W\right)$ and $\lim_{\rho\rightarrow\infty}E\left(W^{T}\right)$
start to depend on the asymptotic behavior of $g$ and is only convergent
when $\lim_{x\rightarrow\infty}g\left(x\right)x^{2}\log^{2}x=a$,
where $0<a<\infty$ is a positive constant. In that case, it can be
shown that $\lim_{\rho\rightarrow\infty}E\left(W\right)$ and $\lim_{\rho\rightarrow\infty}E\left(W^{T}\right)$
converge to $e^{-b+\frac{4\pi}{C}a}$. For $\lim_{\rho\rightarrow\infty}E\left(W^{T}\right)$
the above result can be established by first choosing a small positive
constant $\triangle\varepsilon$ and then letting $\rho$ be sufficiently
large such that $D\left(\boldsymbol{0},\frac{1}{2}r_{\rho}^{-1-\triangle\varepsilon}\right)$
contains $A_{\frac{1}{r_{\rho}}}$, where $D\left(\boldsymbol{x},r\right)$
denotes a disk centered at $\boldsymbol{x}$ and with a radius $r$.
An upper and lower bound on $E\left(W^{T}\right)$ can then be established
by noting that \begin{eqnarray*}
 &  & \lim_{\rho\rightarrow\infty}\rho e^{-\int_{D\left(\boldsymbol{0},\frac{1}{2}r_{\rho}^{-1-\triangle\varepsilon}\right)}\frac{\log\rho+b}{C}g\left(\left\Vert \boldsymbol{x}\right\Vert \right)d\boldsymbol{x}}\\
 & \leq & \lim_{\rho\rightarrow\infty}E\left(W^{T}\right)=\rho e^{-\int_{A_{\frac{1}{r_{\rho}}}}\frac{\log\rho+b}{C}g\left(\left\Vert \boldsymbol{x}\right\Vert \right)d\boldsymbol{x}}\\
 & \leq & \lim_{\rho\rightarrow\infty}\rho e^{-\int_{D\left(\boldsymbol{0},\frac{1}{2}r_{\rho}^{-1}\right)}\frac{\log\rho+b}{C}g\left(\left\Vert \boldsymbol{x}\right\Vert \right)d\boldsymbol{x}}\end{eqnarray*}
Following the exactly same procedure as that in \eqref{eq:an analysis of the trunction effect}
and \eqref{eq:little's rule trunction} (in Appendix II) and finally
letting $ $$\triangle\varepsilon\rightarrow0$, the result for \textbf{$\lim_{\rho\rightarrow\infty}E\left(W^{T}\right)$}
can be obtained. The result for $\lim_{\rho\rightarrow\infty}E\left(W\right)$
can be obtained following a similar procedure as that in Appendix
I.
\end{remrk}

\subsection{The Number of Isolated Nodes in a Region $A_{\frac{1}{r_{\rho}}}$
of an Infinite Network with Node Density $\frac{\log\rho+b}{C}$}

In this subsection, we consider the number of isolated nodes in the
counterpart of $\mathcal{G}\left(\mathcal{X}_{\frac{\log\rho+b}{C}},g,A_{\frac{1}{r_{\rho}}}\right)$
in an infinite network. Specifically, for a meaningful comparison
with the number of isolated nodes in $\mathcal{G}\left(\mathcal{X}_{\frac{\log\rho+b}{C}},g,A_{\frac{1}{r_{\rho}}}\right)$,
we consider the number of isolated nodes, denoted by $W^{\infty}$
(with superscript $^{\infty}$ marking the parameter in an infinite
network), in a square $A_{\frac{1}{r_{\rho}}}$ of an infinite network
on $\Re^{2}$ with Poissonly distributed node at density $\frac{\log\rho+b}{C}$.
Denote the infinite network by $\mathcal{G}\left(\mathcal{X}_{\frac{\log\rho+b}{C}},g,\Re^{2}\right)$.
For $g$ satisfying \eqref{eq:conditions on g(x) - integral boundness},
a randomly chosen node in $\mathcal{G}\left(\mathcal{X}_{\frac{\log\rho+b}{C}},g,\Re^{2}\right)$,
at location $\boldsymbol{y}\in A_{\frac{1}{r_{\rho}}}$, is isolated
with probability\begin{equation}
\Pr\left(I_{\boldsymbol{y}}^{\infty}=1\right)=e^{-\int_{\Re^{2}}\frac{\log\rho+b}{C}g\left(\left\Vert \boldsymbol{x}-\boldsymbol{y}\right\Vert \right)d\boldsymbol{x}}=\frac{1}{\rho}e^{-b}\label{eq:probability of node being isolated - infinite}\end{equation}
 where \eqref{eq:conditions on g(x) - integral boundness} is used
in the above equation. Therefore \begin{eqnarray}
E\left(W^{\infty}\right) & = & \int_{A_{\frac{1}{r_{\rho}}}}\frac{\log\rho+b}{C}\times\frac{1}{\rho}e^{-b}d\boldsymbol{y}\nonumber \\
 & = & \frac{\log\rho+b}{C}\times\frac{1}{\rho}e^{-b}\times\left(\frac{1}{r_{\rho}}\right)^{2}\nonumber \\
 & = & e^{-b}\label{eq:expected number of isolated nodes in asymptotically infinite region of infinite network}\end{eqnarray}

The last line follows by \eqref{eq:definition of r_rho}.

The above result is summarized in the following lemma:
\begin{lemma}
\label{lem:expected isolated nodes in an infinite counterpart}For
$g$ satisfying \eqref{eq:conditions on g(x) - integral boundness},
the expected number of isolated nodes in a region $A_{\frac{1}{r_{\rho}}}$
of $\mathcal{G}\left(\mathcal{X}_{\frac{\log\rho+b}{C}},g,\Re^{2}\right)$
is $e^{-b}$.

\end{lemma}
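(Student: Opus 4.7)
The plan is to exploit the translation invariance of the infinite network, which was absent in the bounded-domain case of Theorem \ref{thm:expected isolated nodes asymptotically infinite square}. Because the Poisson point process on $\Re^{2}$ is stationary and the connection function depends only on Euclidean distance, the probability that a reference node at any location $\boldsymbol{y}\in A_{\frac{1}{r_{\rho}}}$ is isolated is invariant under the choice of $\boldsymbol{y}$, which reduces the computation to a single integral plus a multiplication by area.

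First I would obtain the isolation probability at a fixed $\boldsymbol{y}$ by a standard thinning/void-probability argument for Poisson point processes. Conditioning on a point at $\boldsymbol{y}$ (justified by Slivnyak's theorem) and marking the remaining points by independent Bernoulli connection trials with success probability $g\left(\left\Vert \boldsymbol{x}-\boldsymbol{y}\right\Vert \right)$, the set of points directly connected to $\boldsymbol{y}$ is itself a Poisson process with intensity $\frac{\log\rho+b}{C}g\left(\left\Vert \boldsymbol{x}-\boldsymbol{y}\right\Vert \right)$. Hence the void probability \eqref{eq:probability of node being isolated - infinite} applies. The integral in its exponent telescopes by a translation $\boldsymbol{x}\mapsto\boldsymbol{x}-\boldsymbol{y}$ into $\int_{\Re^{2}}\frac{\log\rho+b}{C}g\left(\left\Vert \boldsymbol{x}\right\Vert \right)d\boldsymbol{x}=\log\rho+b$ by the definition \eqref{eq:definition of C} of $C$, giving $\Pr\left(I_{\boldsymbol{y}}^{\infty}=1\right)=\rho^{-1}e^{-b}$.

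Next, I would compute $E\left(W^{\infty}\right)$ by Campbell's formula (equivalently, by linearity of expectation applied to the Poisson indicator density). Since the isolation probability does not depend on $\boldsymbol{y}$, one obtains $E\left(W^{\infty}\right)=\frac{\log\rho+b}{C}\cdot A_{\frac{1}{r_{\rho}}}\cdot\rho^{-1}e^{-b}$, and substituting $A_{\frac{1}{r_{\rho}}}=r_{\rho}^{-2}=\frac{C\rho}{\log\rho+b}$ from \eqref{eq:definition of r_rho} produces the claimed cancellation to $e^{-b}$, exactly as recorded in \eqref{eq:expected number of isolated nodes in asymptotically infinite region of infinite network}.

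There is essentially no hard step: the only subtlety worth flagging is that condition \eqref{eq:conditions on g(x) - integral boundness} is used in two distinct places, namely to guarantee that the integral in the Poisson void formula converges (so that the isolation probability is well-defined and equal to $\rho^{-1}e^{-b}$) and implicitly to ensure $C$ is finite and positive so that $r_{\rho}$ is well-defined. No appeal to the stronger bound \eqref{eq:Condition on g(x) requirement 2} is needed here, in contrast with Theorem \ref{thm:expected isolated nodes asymptotically infinite square}, and this discrepancy is precisely the quantitative signature of the truncation effect that the paper highlights.
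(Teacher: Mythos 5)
Your proposal is correct and follows essentially the same route as the paper: it establishes the isolation probability $\rho^{-1}e^{-b}$ via the Poisson void probability with the exponent integral taken over all of $\Re^{2}$ (which is where \eqref{eq:conditions on g(x) - integral boundness} and the definition \eqref{eq:definition of C} of $C$ enter), and then integrates over $A_{\frac{1}{r_{\rho}}}$ so that the area $r_{\rho}^{-2}$ cancels against $\frac{\log\rho+b}{C\rho}$ by \eqref{eq:definition of r_rho}, exactly as in \eqref{eq:probability of node being isolated - infinite} and \eqref{eq:expected number of isolated nodes in asymptotically infinite region of infinite network}. The added appeals to Slivnyak's theorem and Campbell's formula merely make explicit the justification the paper leaves implicit.
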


\subsection{A Comparison of the Expected Number of Isolated Nodes in $\mathcal{G}\left(\mathcal{X}_{\frac{\log\rho+b}{C}},g,A_{\frac{1}{r_{\rho}}}\right)$
and In Its Counterpart in An Infinite Network\label{sub:Comparison isolate nodes with infinite network}}

Comparing Theorem \ref{thm:expected isolated nodes asymptotically infinite square}
and Lemma \ref{lem:expected isolated nodes in an infinite counterpart},
we note that:
\begin{enumerate}
\item The expected number of isolated nodes in $\mathcal{G}\left(\mathcal{X}_{\frac{\log\rho+b}{C}},g,A_{\frac{1}{r_{\rho}}}\right)$
only \emph{converges asymptotically} to $e^{-b}$ as $\rho\rightarrow\infty$
whereas the expected number of isolated nodes in an area of the same
size in $\mathcal{G}\left(\mathcal{X}_{\frac{\log\rho+b}{C}},g,\Re^{2}\right)$
\emph{is always} $e^{-b}$ no matter which value $\rho$ takes.
\item The expected number of isolated nodes in $\mathcal{G}\left(\mathcal{X}_{\frac{\log\rho+b}{C}},g,A_{\frac{1}{r_{\rho}}}\right)$
converges asymptotically to $e^{-b}$ for $g$ \emph{satisfying both
\eqref{eq:conditions on g(x) - non-increasing} and \eqref{eq:Condition on g(x) requirement 2}}
whereas the expected number of isolated nodes in an area of the same
size in $\mathcal{G}\left(\mathcal{X}_{\frac{\log\rho+b}{C}},g,\Re^{2}\right)$
is $e^{-b}$ for $g$ \emph{satisfying \eqref{eq:conditions on g(x) - integral boundness}
only}.
\end{enumerate}
In the following we examine the reason behind the differences. 

Using \eqref{eq:probability a random node being isolated finite square},
\eqref{eq:expected number of isolated nodes in a square finite},
\eqref{eq:probability of node being isolated - infinite} and \eqref{eq:expected number of isolated nodes in asymptotically infinite region of infinite network},
it can be shown that

\begin{align}
 & \frac{E\left(W\right)}{E\left(W^{\infty}\right)}\nonumber \\
= & e^{b}\int_{A_{\frac{1}{r_{\rho}}}}\frac{\log\rho+b}{C}e^{-\int_{A_{\frac{1}{r_{\rho}}}}\frac{\log\rho+b}{C}g\left(\left\Vert \boldsymbol{x}-\boldsymbol{y}\right\Vert \right)d\boldsymbol{x}}d\boldsymbol{y}\nonumber \\
= & e^{b}\int_{A_{\frac{1}{r_{\rho}}}}\frac{\log\rho+b}{C}\exp\left(-\int_{\Re^{2}}\frac{\log\rho+b}{C}g\left(\left\Vert \boldsymbol{x}-\boldsymbol{y}\right\Vert \right)d\boldsymbol{x}\right.\nonumber \\
\times & \left.\int_{\Re^{2}\backslash A_{\frac{1}{r_{\rho}}}}\frac{\log\rho+b}{C}g\left(\left\Vert \boldsymbol{x}-\boldsymbol{y}\right\Vert \right)d\boldsymbol{x}\right)d\boldsymbol{y}\nonumber \\
= & \int_{A_{\frac{1}{r_{\rho}}}}\frac{\log\rho+b}{C\rho}e^{\int_{\Re^{2}\backslash A_{\frac{1}{r_{\rho}}}}\frac{\log\rho+b}{C}g\left(\left\Vert \boldsymbol{x}-\boldsymbol{y}\right\Vert \right)d\boldsymbol{x}}d\boldsymbol{y}\label{eq:ratio of expected numbers of isolated nodes}\end{align}
It is trivial to show that the value in \eqref{eq:ratio of expected numbers of isolated nodes}
is always greater than $1$ for $g$ with infinite support. That is,
for any $g$ with infinite support, the expected number of isolated
nodes in $\mathcal{G}\left(\mathcal{X}_{\frac{\log\rho+b}{C}},g,A_{\frac{1}{r_{\rho}}}\right)$
is strictly larger than the expected number of isolated nodes in an
area $A_{\frac{1}{r_{\rho}}}$ of $\mathcal{G}\left(\mathcal{X}_{\frac{\log\rho+b}{C}},g,\Re^{2}\right)$.
Further, it can be shown that the value in \eqref{eq:ratio of expected numbers of isolated nodes}
accounts for the cumulative effect of nodes outside $A_{\frac{1}{r_{\rho}}}$
in $\mathcal{G}\left(\mathcal{X}_{\frac{\log\rho+b}{C}},g,\Re^{2}\right)$
and the associated connections between these nodes and nodes inside
$A_{\frac{1}{r_{\rho}}}$ on decreasing the expected number of isolated
nodes in $A_{\frac{1}{r_{\rho}}}$ respectively. Because $\mathcal{G}\left(\mathcal{X}_{\frac{\log\rho+b}{C}},g,A_{\frac{1}{r_{\rho}}}\right)$
can be obtained from $\mathcal{G}\left(\mathcal{X}_{\frac{\log\rho+b}{C}},g,\Re^{2}\right)$
by removing all these nodes and associated connections outside an
area of $A_{\frac{1}{r_{\rho}}}$ in $\mathcal{G}\left(\mathcal{X}_{\frac{\log\rho+b}{C}},g,\Re^{2}\right)$,
we term the this distinction \emph{the truncation effect}. Theorem
\ref{thm:expected isolated nodes asymptotically infinite square}
and Lemma \ref{lem:expected isolated nodes in an infinite counterpart}
shows that when $g$ satisfies both \eqref{eq:conditions on g(x) - non-increasing}
and \eqref{eq:Condition on g(x) requirement 2} (i.e. $g$ has to
decrease fast enough), the impact of the truncation effect on isolated
nodes becomes vanishingly small as $\rho\rightarrow\infty$.

Based on the above discussion, the following theorem can be established:
\begin{thm}
\label{thm:conclusion on the impact of truncation effect sufficient condition}For
$g$ satisfying \eqref{eq:conditions on g(x) - integral boundness},
the expected number of isolated nodes in an area of $A_{\frac{1}{r_{\rho}}}$
in $\mathcal{G}\left(\mathcal{X}_{\frac{\log\rho+b}{C}},g,\Re^{2}\right)$
is $e^{-b}$. Removing all nodes of $\mathcal{G}\left(\mathcal{X}_{\frac{\log\rho+b}{C}},g,\Re^{2}\right)$
outside $A_{\frac{1}{r_{\rho}}}$ and the associated connections,
there results $\mathcal{G}\left(\mathcal{X}_{\frac{\log\rho+b}{C}},g,A_{\frac{1}{r_{\rho}}}\right)$.
The expected number of isolated nodes in $\mathcal{G}\left(\mathcal{X}_{\frac{\log\rho+b}{C}},g,A_{\frac{1}{r_{\rho}}}\right)$
converges to $e^{-b}$ if $g$ satisfies both \eqref{eq:conditions on g(x) - non-increasing}
and \eqref{eq:Condition on g(x) requirement 2}. The more restrictive
requirement on $g$ is a sufficient condition for the impact of the
\emph{truncation effect} associated with the above removal operations
on the number of isolated nodes in $\mathcal{G}\left(\mathcal{X}_{\frac{\log\rho+b}{C}},g,A_{\frac{1}{r_{\rho}}}\right)$
to be vanishingly small as $\rho\rightarrow\infty$. 
\end{thm}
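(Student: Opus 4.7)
The plan is to assemble this theorem directly from the two principal results already in hand, namely Theorem~\ref{thm:expected isolated nodes asymptotically infinite square} for the finite region $A_{\frac{1}{r_{\rho}}}$ and Lemma~\ref{lem:expected isolated nodes in an infinite counterpart} for the same region viewed inside $\mathcal{G}\left(\mathcal{X}_{\frac{\log\rho+b}{C}},g,\Re^{2}\right)$, and to re-interpret their comparison through the ratio identity \eqref{eq:ratio of expected numbers of isolated nodes}. First I would restate the identity $E(W^{\infty})=e^{-b}$ as an immediate consequence of Lemma~\ref{lem:expected isolated nodes in an infinite counterpart}, obtained by writing $\int_{\Re^{2}} \frac{\log\rho+b}{C}\, g(\|\boldsymbol{x}-\boldsymbol{y}\|)\, d\boldsymbol{x}=\log\rho+b$ via \eqref{eq:definition of C}, so that the isolation probability of every node in the infinite model is exactly $e^{-(\log\rho+b)}/1=\frac{1}{\rho}e^{-b}$; multiplying by the expected node count $\frac{\log\rho+b}{C}\cdot\frac{1}{r_{\rho}^{2}}=\rho$ recovers $e^{-b}$ with no limiting argument required.

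Next I would record that Theorem~\ref{thm:expected isolated nodes asymptotically infinite square}, whose proof is deferred to Appendix~I, gives $\lim_{\rho\rightarrow\infty} E(W)=e^{-b}$ whenever $g$ satisfies both \eqref{eq:conditions on g(x) - non-increasing} and \eqref{eq:Condition on g(x) requirement 2}. From these two facts the limit statement in the present theorem is automatic. The interpretive content, that the \emph{truncation effect} is vanishingly small, then follows from \eqref{eq:ratio of expected numbers of isolated nodes}: the ratio $E(W)/E(W^{\infty})$ is expressible as the integral over $A_{\frac{1}{r_{\rho}}}$ of a factor $\frac{\log\rho+b}{C\rho}\exp\bigl(\int_{\Re^{2}\backslash A_{\frac{1}{r_{\rho}}}}\frac{\log\rho+b}{C} g(\|\boldsymbol{x}-\boldsymbol{y}\|)d\boldsymbol{x}\bigr)$, which isolates precisely the contribution of nodes lying outside the truncation window; the limits above force this ratio to tend to $1$.

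The conceptually delicate step, and the one I expect to be the real obstacle behind Theorem~\ref{thm:expected isolated nodes asymptotically infinite square} itself, is demonstrating that the outer-region tail integral $\int_{\Re^{2}\backslash A_{\frac{1}{r_{\rho}}}}\frac{\log\rho+b}{C}\, g(\|\boldsymbol{x}-\boldsymbol{y}\|)\, d\boldsymbol{x}$ decays to zero uniformly in $\boldsymbol{y}\in A_{\frac{1}{r_{\rho}}}$ after integrating against $d\boldsymbol{y}$. The bound \eqref{eq:scaling property of g(x)} alone is not enough, because the multiplier $\log\rho$ grows; the strengthened tail condition \eqref{eq:Condition on g(x) requirement 2} is precisely what makes $\log\rho \cdot \int_{r_{\rho}^{-1}}^{\infty} x g(x)\, dx=o(1)$, so that the exponential in \eqref{eq:ratio of expected numbers of isolated nodes} is $1+o(1)$ uniformly in $\boldsymbol{y}$ away from an asymptotically negligible boundary layer. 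Once this is packaged (by the dyadic-annulus comparison used in the earlier derivation of \eqref{eq:scaling property of g(x)}), the theorem is complete: the first half of the statement is an identity, the second half invokes Theorem~\ref{thm:expected isolated nodes asymptotically infinite square}, and the ``sufficiency'' clause is exactly the content of \eqref{eq:Condition on g(x) requirement 2} that drives the uniform tail bound.
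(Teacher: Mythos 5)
Your proposal is correct and follows essentially the same route as the paper: the paper establishes this theorem simply by assembling Theorem~\ref{thm:expected isolated nodes asymptotically infinite square} (proved in Appendix~I), Lemma~\ref{lem:expected isolated nodes in an infinite counterpart} (the exact identity $E\left(W^{\infty}\right)=e^{-b}$), and the ratio identity \eqref{eq:ratio of expected numbers of isolated nodes}, interpreting the ratio tending to $1$ as the vanishing of the truncation effect. Your added sketch of why \eqref{eq:Condition on g(x) requirement 2} forces the tail term $\frac{\log\rho+b}{C}\int_{\Re^{2}\backslash A_{\frac{1}{r_{\rho}}}}g\left(\left\Vert \boldsymbol{x}-\boldsymbol{y}\right\Vert \right)d\boldsymbol{x}$ to vanish mirrors the L'H\^opital computation in Appendix~I and is consistent with the paper's argument.
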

In the following, we show that the more restrictive requirement on
$g$ in \eqref{eq:Condition on g(x) requirement 2} (compared with
\eqref{eq:conditions on g(x) - non-increasing} and \eqref{eq:conditions on g(x) - integral boundness})
is also necessary for the impact of the truncation effect to become
vanishingly small as $\rho\rightarrow\infty$. Specifically, consider
the case when \eqref{eq:Condition on g(x) requirement 2} is not satisfied.
Let \begin{eqnarray}
f\left(x\right) & \triangleq & g\left(x\right)x^{2}\log^{2}x\label{eq:definition of f(x)}\end{eqnarray}
Condition \eqref{eq:Condition on g(x) requirement 2} not being satisfied
means

\begin{equation}
\lim_{x\rightarrow\infty}f\left(x\right)\neq0\label{eq:a contradictory case for f(x)}\end{equation}
i.e. $\lim_{x\rightarrow\infty}f\left(x\right)$ may equal to a positive
constant, $\infty$, or does not exist (e.g. $f\left(x\right)$ is
a periodic function of $x$). 

It can be shown that (following the equation, detailed explanations
are given and see also \eqref{eq:expected number of isolated nodes in a torus finite}
in Appendix II)

\begin{eqnarray}
 &  & \lim_{\rho\rightarrow\infty}E\left(W\right)\nonumber \\
 & \geq & \lim_{\rho\rightarrow\infty}E\left(W^{T}\right)\nonumber \\
 & = & \lim_{\rho\rightarrow\infty}\rho e^{-\int_{A_{\frac{1}{r_{\rho}}}}\frac{\log\rho+b}{C}g\left(\left\Vert \boldsymbol{x}\right\Vert \right)d\boldsymbol{x}}\nonumber \\
 & \geq & \lim_{\rho\rightarrow\infty}\rho e^{-\int_{D\left(\boldsymbol{0},\frac{1}{2}r_{\rho}^{-1}\right)}\frac{\log\rho+b}{C}g\left(\left\Vert \boldsymbol{x}\right\Vert \right)d\boldsymbol{x}}\nonumber \\
 & = & e^{-b}\lim_{\rho\rightarrow\infty}e^{\int_{\Re^{2}\backslash D\left(\boldsymbol{0},\frac{1}{2}r_{\rho}^{-1}\right)}\frac{\log\rho+b}{C}g\left(\left\Vert \boldsymbol{x}\right\Vert \right)d\boldsymbol{x}}\nonumber \\
 & = & e^{-b+\frac{4\pi}{C}\lim_{x\rightarrow\infty}f\left(x\right)}\label{eq:necessity of second requirement on g}\end{eqnarray}
where the last step results because of the following equation: \begin{eqnarray*}
 &  & \int_{\Re^{2}\backslash D\left(\boldsymbol{0},\frac{1}{2}r_{\rho}^{-1}\right)}\frac{\log\rho+b}{C}g\left(\left\Vert \boldsymbol{x}\right\Vert \right)d\boldsymbol{x}\\
 & = & \lim_{\rho\rightarrow\infty}\int_{\frac{1}{2}r_{\rho}^{-1}}^{\infty}\frac{\log\rho+b}{C}2\pi xg\left(x\right)dx\\
 & = & \lim_{\rho\rightarrow\infty}\frac{\frac{\pi}{2}r_{\rho}^{-4}g\left(\frac{1}{2}r_{\rho}^{-1}\right)\frac{\log\rho+b-1}{C\rho^{2}}}{\frac{C}{\rho\left(\log\rho+b\right)^{2}}}\\
 & = & \lim_{\rho\rightarrow\infty}\frac{\pi}{2C}\left(\log\rho+b\right)^{2}r_{\rho}^{-2}g\left(\frac{1}{2}r_{\rho}^{-1}\right)\\
 & = & \lim_{\rho\rightarrow\infty}\frac{\pi}{2C}\left(\log\rho+b\right)^{2}r_{\rho}^{-2}\frac{f\left(\frac{1}{2}r_{\rho}^{-1}\right)}{\frac{1}{4}r_{\rho}^{-2}\log^{2}\left(\frac{1}{2}r_{\rho}^{-1}\right)}\\
 & = & \lim_{\rho\rightarrow\infty}\frac{2\pi\left(\log\rho+b\right)^{2}f\left(\frac{1}{2}r_{\rho}^{-1}\right)}{C\left(\log\frac{1}{2}-\frac{1}{2}\log\left(\log\rho+b\right)+\frac{1}{2}\log\rho+\frac{1}{2}\log C\right)^{2}}\\
 & = & \frac{4\pi}{C}\lim_{\rho\rightarrow\infty}f\left(\frac{1}{2}r_{\rho}^{-1}\right)\\
 & = & \frac{4\pi}{C}\lim_{x\rightarrow\infty}f\left(x\right)\end{eqnarray*}
where in the second step, L'Hôpital's rule with $\frac{C}{\log\rho+b}$
being the denominator and $\int_{\frac{1}{2}r_{\rho}^{-1}}^{\infty}2\pi xg\left(x\right)dx$
being the numerator is used; in the third step, \eqref{eq:definition of f(x)}
is used. 
\begin{remrk}
Equation \eqref{eq:necessity of second requirement on g} shows also
that $\lim_{\rho\rightarrow\infty}E\left(W^{T}\right)\geq e^{-b+\frac{4\pi}{C}\lim_{x\rightarrow\infty}f\left(x\right)}$
where $E\left(W^{T}\right)$ is the expected number of isolated nodes
on a torus, which does not include the contribution of the boundary
effect on the number of isolated nodes. Note also that\textbf{ }the
expected number of isolated nodes in an area of $A_{\frac{1}{r_{\rho}}}$
in $\mathcal{G}\left(\mathcal{X}_{\frac{\log\rho+b}{C}},g,\Re^{2}\right)$
is $e^{-b}$. Therefore the term $e^{\frac{4\pi}{C}\lim_{x\rightarrow\infty}f\left(x\right)}$
is entirely attributable to the truncation effect.
\end{remrk}
Note that $f\left(x\right)$ is a non-negative function for $x>1$.
It is obvious from \eqref{eq:necessity of second requirement on g}
that \emph{unless} $\lim_{x\rightarrow\infty}f\left(x\right)=0$,
i.e. \eqref{eq:Condition on g(x) requirement 2} is satisfied, the
expected number of isolated node in $\mathcal{G}\left(\mathcal{X}_{\frac{\log\rho+b}{C}},g,A_{\frac{1}{r_{\rho}}}\right)$
will be larger than the expected number of isolated nodes in an area
of $A_{\frac{1}{r_{\rho}}}$ in $\mathcal{G}\left(\mathcal{X}_{\frac{\log\rho+b}{C}},g,\Re^{2}\right)$.
That is, the\textbf{ }impact of the \emph{truncation effect} on the
number of isolated nodes in $\mathcal{G}\left(\mathcal{X}_{\frac{\log\rho+b}{C}},g,A_{\frac{1}{r_{\rho}}}\right)$
will \emph{not be} vanishingly small as $\rho\rightarrow\infty$.
In particular, it can be shown that for $g\left(x\right)=\Theta_{x}\left(\frac{1}{x^{2}\log^{2}x}\right)$,
the impact of the truncation effect is non-negligible or even dominant
in determining the number of isolated nodes in $\mathcal{G}\left(\mathcal{X}_{\frac{\log\rho+b}{C}},g,A_{\frac{1}{r_{\rho}}}\right)$.
Using \eqref{eq:necessity of second requirement on g}, it can also
be shown that for $g\left(x\right)=\omega_{x}\left(\frac{1}{x^{2}\log^{2}x}\right)$,
$\lim_{\rho\rightarrow\infty}E\left(W\right)$ is unbounded, i.e.
connectivity cannot be achieved for $g\left(x\right)=\omega_{x}\left(\frac{1}{x^{2}\log^{2}x}\right)$
even if \eqref{eq:conditions on g(x) - non-increasing} and \eqref{eq:conditions on g(x) - integral boundness}
are satisfied.

The above discussion leads to the following conclusion:
\begin{thm}
\label{thm:conclusion on the impact of truncation effect necessary condition}The
more restrictive requirement on $g$ that it satisfies\textbf{ }\eqref{eq:Condition on g(x) requirement 2}
is a necessary condition for the impact of the \emph{truncation effect}
on the number of isolated nodes in $\mathcal{G}\left(\mathcal{X}_{\frac{\log\rho+b}{C}},g,A_{\frac{1}{r_{\rho}}}\right)$
to be vanishingly small as $\rho\rightarrow\infty$. Further for $g\left(x\right)=\Theta_{x}\left(\frac{1}{x^{2}\log^{2}x}\right)$,
the impact of the truncation effect is non-negligible or even dominant
in determining the number of isolated nodes in $\mathcal{G}\left(\mathcal{X}_{\frac{\log\rho+b}{C}},g,A_{\frac{1}{r_{\rho}}}\right)$;
and for $g\left(x\right)=\omega_{x}\left(\frac{1}{x^{2}\log^{2}x}\right)$,
the truncation effect is the dominant factor in determining the number
of isolated nodes in $\mathcal{G}\left(\mathcal{X}_{\frac{\log\rho+b}{C}},g,A_{\frac{1}{r_{\rho}}}\right)$.
\end{thm}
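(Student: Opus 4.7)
The plan is to exploit the fact that almost all of the necessary machinery is already assembled in the chain of inequalities \eqref{eq:necessity of second requirement on g} together with Lemma \ref{lem:expected isolated nodes in an infinite counterpart}. The infinite-network counterpart gives $E(W^{\infty})=e^{-b}$ regardless of $\rho$, so the theorem reduces to showing that whenever \eqref{eq:Condition on g(x) requirement 2} fails, the truncated expectation $E(W)$ is bounded below asymptotically by a strictly larger quantity, and that this surplus is attributable entirely to the truncation and not to the boundary.

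First I would record the obvious bound $E(W)\geq E(W^{T})$, which holds because removing the torus identifications of $\mathcal{G}^{T}$ can only break (never create) connections, and hence can only enlarge the set of isolated nodes. Next I would substitute the closed-form expression from Lemma \ref{lem:Expected Isolated nodes torus}, namely $E(W^{T})=\rho\,e^{-\int_{A_{1/r_{\rho}}}\frac{\log\rho+b}{C}g(\|\boldsymbol{x}\|)d\boldsymbol{x}}$, and enlarge the truncation defect by replacing $A_{1/r_{\rho}}$ in the exponent by the inscribed disk $D(\boldsymbol{0},\frac{1}{2}r_{\rho}^{-1})$. After pulling out $e^{-b}$ via \eqref{eq:conditions on g(x) - integral boundness}, the problem collapses to evaluating
\[
\lim_{\rho\to\infty}\int_{\Re^{2}\setminus D(\boldsymbol{0},\frac{1}{2}r_{\rho}^{-1})}\frac{\log\rho+b}{C}\,g(\|\boldsymbol{x}\|)\,d\boldsymbol{x},
\]
which, by the L'H\^opital computation already carried out in the text with $f(x)\triangleq g(x)x^{2}\log^{2}x$, equals $\frac{4\pi}{C}\lim_{x\to\infty}f(x)$ whenever this limit exists in $[0,\infty]$.

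Combining these steps yields $\liminf_{\rho\to\infty}E(W)\geq e^{-b+\frac{4\pi}{C}\lim_{x\to\infty}f(x)}$. If \eqref{eq:Condition on g(x) requirement 2} fails then $\lim_{x\to\infty}f(x)>0$ (possibly $+\infty$), so this lower bound strictly exceeds $e^{-b}=E(W^{\infty})$; by Lemma \ref{lem:Isolated nodes due to boundary effect} the boundary contribution is asymptotically negligible under any regime where it is defined, so the surplus is truncation-induced. This establishes necessity. For the two refined cases: if $g(x)=\Theta_{x}(1/(x^{2}\log^{2}x))$ then $\lim_{x\to\infty}f(x)$ is a finite positive constant, so $E(W)$ is inflated by the non-vanishing multiplicative factor $e^{\frac{4\pi}{C}\lim_{x\to\infty}f(x)}>1$, making truncation non-negligible; if $g(x)=\omega_{x}(1/(x^{2}\log^{2}x))$ then $\lim_{x\to\infty}f(x)=\infty$, and the same lower bound forces $E(W)\to\infty$, so truncation overwhelms every other contribution and becomes the dominant term.

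The one subtle point, and the place where I would expect to spend real effort, is the case in which $\lim_{x\to\infty}f(x)$ does \emph{not} exist (for instance if $f$ oscillates). The L'H\^opital step strictly speaking only gives a $\liminf$/$\limsup$ relation, so one must argue that whenever $\limsup_{x\to\infty}f(x)>0$ there is already enough mass at infinity to keep the integral in the exponent from vanishing along a subsequence $\rho_{k}\to\infty$. The cleanest way I see is to choose $x_{k}\to\infty$ realising the limsup, set $\rho_{k}$ so that $\frac{1}{2}r_{\rho_{k}}^{-1}=x_{k}$, and run the same L'H\^opital computation along that subsequence; this gives $\limsup_{\rho\to\infty}E(W)\geq e^{-b+\frac{4\pi}{C}\limsup_{x\to\infty}f(x)}>e^{-b}$, which is enough to contradict vanishing truncation effect. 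Once this subsequence argument is in place, the three conclusions of the theorem follow immediately from the bounds above.
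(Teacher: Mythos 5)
Your proposal is correct and follows essentially the same route as the paper's own proof: the chain $E(W)\geq E(W^{T})=\rho e^{-\int_{A_{\frac{1}{r_{\rho}}}}\frac{\log\rho+b}{C}g\left(\left\Vert \boldsymbol{x}\right\Vert \right)d\boldsymbol{x}}\geq\rho e^{-\int_{D\left(\boldsymbol{0},\frac{1}{2}r_{\rho}^{-1}\right)}\frac{\log\rho+b}{C}g\left(\left\Vert \boldsymbol{x}\right\Vert \right)d\boldsymbol{x}}$ followed by the L'H\^opital computation with $f\left(x\right)=g\left(x\right)x^{2}\log^{2}x$, yielding the lower bound $e^{-b+\frac{4\pi}{C}\lim_{x\rightarrow\infty}f\left(x\right)}$ to compare against $E\left(W^{\infty}\right)=e^{-b}$, with the $\Theta$ and $\omega$ cases read off exactly as in \eqref{eq:necessity of second requirement on g}. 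Two small remarks: the attribution of the surplus to truncation rather than boundary should be argued, as the paper's remark does, from the fact that your lower bound is already a bound on the torus quantity $E\left(W^{T}\right)$ (which contains no boundary effect), not from Lemma \ref{lem:Isolated nodes due to boundary effect}, whose hypothesis \eqref{eq:Condition on g(x) requirement 2} fails in precisely the regime you are considering; and your subsequence/limsup refinement for the case where $\lim_{x\rightarrow\infty}f\left(x\right)$ does not exist is a legitimate tightening of a step the paper passes over silently.
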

Noting that the number of isolated nodes in a network is a non-negative
integer, the following result can be obtained as an easy consequence
of Theorem \ref{thm:conclusion on the impact of truncation effect sufficient condition}
(see also \cite{Mao11On}). Notice that in formulating this result,
we drop the assumption that $b$, originally introduced in \eqref{eq:definition of r_rho},
is a constant, and allow it instead to be $\rho$-dependent. 
\begin{cor}
\label{cor:necessary condition for connectivity}For $g$ satisfying
both \eqref{eq:conditions on g(x) - non-increasing} and \eqref{eq:Condition on g(x) requirement 2},
a necessary condition for $\mathcal{G}\left(\mathcal{X}_{\frac{\log\rho+b}{C}},g,A_{\frac{1}{r_{\rho}}}\right)$
to be \emph{a.a.s.} (as $\rho\rightarrow\infty$) connected is $b\rightarrow\infty$. \end{cor}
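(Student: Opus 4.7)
The plan is to argue by contraposition: assume $b=b(\rho)$ does not tend to $\infty$, and derive that $\mathcal{G}(\mathcal{X}_{\frac{\log\rho+b}{C}},g,A_{\frac{1}{r_\rho}})$ is not a.a.s.\ connected. Since $b\not\to\infty$, there exist a finite constant $M$ and a subsequence $\rho_k\to\infty$ along which $b(\rho_k)\le M$.

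First I would reduce to a constant parameter by a monotone coupling, most cleanly in the equivalent dense model $\mathcal{G}(\mathcal{X}_\rho,g_{r_\rho},A)$: the node process $\mathcal{X}_\rho$ is independent of $b$, while $r_\rho=\sqrt{(\log\rho+b)/(C\rho)}$ is strictly increasing in $b$ and so, by the non-increasing property \eqref{eq:conditions on g(x) - non-increasing} of $g$, each pairwise connection probability $g_{r_\rho}(x)=g(x/r_\rho)$ is non-decreasing in $b$. Attaching to each unordered pair of Poisson points an independent $\mathrm{Uniform}[0,1]$ mark and declaring an edge present whenever the mark is at most $g_{r_\rho}(\cdot)$, the edge set at parameter $b(\rho_k)$ is almost surely contained in the edge set at parameter $M$; since connectivity is monotone in edges,
\[
\Pr\bigl(\mathcal{G}_{b(\rho_k)}\text{ connected}\bigr)\le\Pr\bigl(\mathcal{G}_{b=M}\text{ connected}\bigr).
\]
It therefore suffices to show that for every finite constant $M$ the network with $b=M$ fails to be a.a.s.\ connected.

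With $b=M$ constant, Theorem~\ref{thm:expected isolated nodes asymptotically infinite square} yields $E(W)\to e^{-M}>0$. Since a connected graph on at least two vertices has no isolated vertex, $\Pr(\text{connected})\le\Pr(W=0)$, so it suffices to produce a positive lower bound on $\liminf_{\rho\to\infty}\Pr(W\ge 1)$. The route I would take is a second-moment (Paley--Zygmund) estimate. Applying Campbell's formula to the Poisson process, $E(W(W-1))$ becomes a double integral of the joint isolation probability of two candidate points in $A_{1/r_\rho}$. I would split the integration into a short-range part (separation within a fixed multiple of the typical connection range) and a long-range complement: the long-range contribution converges to $e^{-2M}$ by an asymptotic-independence argument parallel to the proof of Theorem~\ref{thm:expected isolated nodes asymptotically infinite square}, while the short-range contribution is $o(1)$ thanks to the integral bound \eqref{eq:conditions on g(x) - integral boundness}. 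This gives $\mathrm{Var}(W)\to e^{-M}$, and Paley--Zygmund then delivers
\[
\liminf_{\rho\to\infty}\Pr(W\ge 1)\ge\frac{E(W)^{2}}{E(W^{2})}\longrightarrow\frac{e^{-2M}}{e^{-M}+e^{-2M}}=\frac{1}{1+e^{M}}>0,
\]
contradicting a.a.s.\ connectedness.

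The main obstacle is the short-range part of the second-moment computation: for two prospective isolated points within $O(1)$ of each other the two isolation events are strongly positively correlated, so the joint probability cannot simply be replaced by the product and must be bounded uniformly in the separation. The bookkeeping is of the same flavour as the truncation analysis of Appendix I/II but carried out with two reference points rather than one. A cleaner but heavier alternative is a Chen--Stein Poisson approximation, which delivers $W\Rightarrow\mathrm{Poisson}(e^{-M})$ and hence $\Pr(W=0)\to e^{-e^{-M}}<1$ directly, bypassing the second-moment computation.
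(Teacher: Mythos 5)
Your overall strategy is sound and in fact supplies scaffolding that the paper leaves implicit: the paper disposes of this corollary in a single remark, observing that $W$ is a non-negative integer and calling the statement an easy consequence of Theorem \ref{thm:conclusion on the impact of truncation effect sufficient condition}, with the distributional work delegated to \cite{Mao11On}. You correctly recognize that the first-moment convergence $E(W)\rightarrow e^{-b}$ by itself gives no lower bound on $\Pr\left(W\geq1\right)$, and you add two genuinely useful ingredients: the monotone coupling in the equivalent dense model to reduce a $\rho$-dependent $b$ with $b\not\rightarrow\infty$ to a constant $b=M$ along a subsequence (legitimate, since the vertex process there does not depend on $b$ and $g_{r_{\rho}}$ is monotone in $b$ by \eqref{eq:conditions on g(x) - non-increasing}), and the chain ``connected on at least two vertices implies $W=0$'' plus Paley--Zygmund, whose arithmetic ($\liminf\Pr\left(W\geq1\right)\geq1/(1+e^{M})>0$) is correct.

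The genuine gap is the step everything rests on, $E\left(W\left(W-1\right)\right)\rightarrow e^{-2M}$: it is not proved, and the sketch you give would fail as stated. Splitting at a \emph{fixed} multiple of the typical connection range does not work. Writing $h\left(\boldsymbol{z}\right)\triangleq\int_{\Re^{2}}g\left(\left\Vert \boldsymbol{x}\right\Vert \right)g\left(\left\Vert \boldsymbol{x}-\boldsymbol{z}\right\Vert \right)d\boldsymbol{x}$, the joint isolation probability of two points at separation $\boldsymbol{z}$ (away from the boundary) is of order $\left(1-g\left(\left\Vert \boldsymbol{z}\right\Vert \right)\right)e^{-\lambda\left(2C-h\left(\boldsymbol{z}\right)\right)}$ with $\lambda=\frac{\log\rho+b}{C}\rightarrow\infty$; since $h\left(\boldsymbol{z}\right)\leq2Cg\left(\left\Vert \boldsymbol{z}\right\Vert /2\right)$ but $\lambda g\left(D\right)\rightarrow\infty$ for any fixed $D$ with $g\left(D\right)>0$, the factor $e^{\lambda h}$ is \emph{not} close to $1$ at fixed separation, so ``asymptotic independence'' beyond a fixed cutoff is false for $g$ with unbounded support; the cutoff must grow with $\rho$ (e.g. $r_{\rho}^{-\varepsilon}$, exactly as in Appendices I--II), and the far-field estimate then uses \eqref{eq:Condition on g(x) requirement 2}, not merely \eqref{eq:conditions on g(x) - integral boundness}. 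Likewise the short/mid-range contribution is not $o(1)$ ``thanks to the integral bound \eqref{eq:conditions on g(x) - integral boundness}'': to kill it one needs either the strict inequality $\int_{\Re^{2}}g^{2}<\int_{\Re^{2}}g=C$ (valid when $g$ is not $\left\{ 0,1\right\} $-valued, via $h\left(\boldsymbol{z}\right)\leq\int g^{2}$ by Cauchy--Schwarz) or the vanishing of the factor $1-g\left(\left\Vert \boldsymbol{z}\right\Vert \right)$ (in the $\left\{ 0,1\right\} $-valued case), together with a separate boundary/corner treatment for pairs in the spirit of Appendix I. So you have located the hard part correctly, but the mechanism you propose for it is wrong in its present form; either carry out the two-point analogue of the Appendix I/II truncation analysis with a $\rho$-dependent cutoff, or invoke the Chen--Stein Poisson convergence of $W$, which is essentially what the paper does by citing \cite{Mao11On}.
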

\begin{remrk}
As pointed out in \cite[p. 151]{Meester96Continuum}, the three requirements
on $g$ in the random connection model, i.e. rotational invariance,
non-increasing monotonicity and integral boundedness, are not equally
important. Particularly, rotational invariance and non-increasing
monotonicity are required only to simply the analysis such that {}``the
notation and formulae will be somewhat simpler''. Similarly, we expect
the results obtained in this section and in the next section requiring
non-increasing monotonicity in \eqref{eq:conditions on g(x) - non-increasing}
are also valid when the condition in \eqref{eq:conditions on g(x) - non-increasing}
is removed. These however require more complicated handling of $g\left(x\right)$,
particularly when $x$ is sufficiently large.
\end{remrk}

\section{Vanishing of Components of Finite Order\label{sec:Components of finite size}}

In this section we consider the events of the asymptotic vanishing
of components of fixed and finite order $k>1$ in the infinite network
$\mathcal{G}\left(\mathcal{X}_{\frac{\log\rho+b}{C}},g,\Re^{2}\right)$
and in $\mathcal{G}\left(\mathcal{X}_{\frac{\log\rho+b}{C}},g,A_{\frac{1}{r_{\rho}}}\right)$
respectively as $\rho\rightarrow\infty$. 

In \cite[Theorem 6.4]{Meester96Continuum} it was shown that as $\rho\rightarrow\infty$
(and $\frac{\log\rho+b}{C}\rightarrow\infty$) the probability for
a node to be isolated given that its component is finite converges
to $1$. In other words, as $\rho\rightarrow\infty$ a.a.s. $\mathcal{G}\left(\mathcal{X}_{\frac{\log\rho+b}{C}},g,\Re^{2}\right)$
has only isolated nodes and components of infinite order, and components
of fixed and finite order $k>1$ asymptotically vanish. In the following
we show that due to the truncation effect, the above result obtained
in $\mathcal{G}\left(\mathcal{X}_{\frac{\log\rho+b}{C}},g,\Re^{2}\right)$
does \emph{not} carry over to the conclusion that as $\rho\rightarrow\infty$
a.a.s. $\mathcal{G}\left(\mathcal{X}_{\frac{\log\rho+b}{C}},g,A_{\frac{1}{r_{\rho}}}\right)$
has only isolated nodes and infinite components too, without further
analysis on the impact of the truncation effect\textbf{.} Specifically,
an infinite component in $\mathcal{G}\left(\mathcal{X}_{\frac{\log\rho+b}{C}},g,\Re^{2}\right)$
may \emph{possibly} consist of components of extremely large order,
components of fixed and finite order $k>1$ and isolated nodes involving
nodes and connections entirely contained inside $A_{\frac{1}{r_{\rho}}}$,
where these components are only connected to each other via nodes
and connections outside $A_{\frac{1}{r_{\rho}}}$. Note that for any
finite $\rho$, almost surely there is no infinite component in $\mathcal{G}\left(\mathcal{X}_{\frac{\log\rho+b}{C}},g,A_{\frac{1}{r_{\rho}}}\right)$.
Therefore we use the term \emph{component of extremely large order}
to refer to a component whose order may become asymptotically infinite
as $\rho\rightarrow\infty$. As the nodes and associated connections
outside $A_{\frac{1}{r_{\rho}}}$ are removed, the infinite component
in $\Re^{2}$ may \emph{possibly} leave components of extremely large
order, components of finite order $k>1$ and isolated nodes in $A_{\frac{1}{r_{\rho}}}$.
As such, vanishing of components of finite order $k>1$ in $\mathcal{G}\left(\mathcal{X}_{\frac{\log\rho+b}{C}},g,\Re^{2}\right)$
as $\rho\rightarrow\infty$ does not \emph{necessarily} carry the
conclusion that components of finite order $k>1$ in $\mathcal{G}\left(\mathcal{X}_{\frac{\log\rho+b}{C}},g,A_{\frac{1}{r_{\rho}}}\right)$
also vanish as $\rho\rightarrow\infty$, even when $A_{\frac{1}{r_{\rho}}}$
approaches $\Re^{2}$ as $\rho\rightarrow\infty$. An example is illustrated
in Fig. \ref{fig:An-illustration-of-non-varnish-finite-component}. 

\begin{figure}
\begin{centering}
\includegraphics[width=0.3\columnwidth]{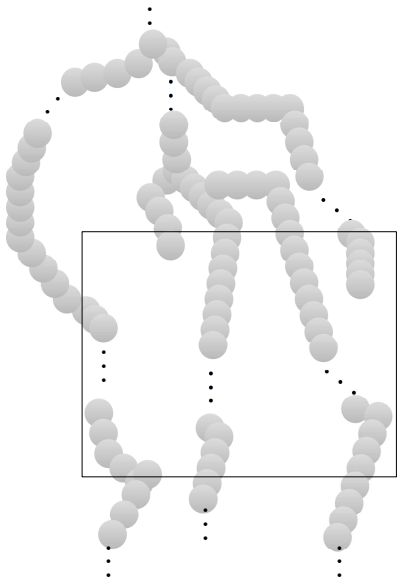}
\par\end{centering}

\caption{An illustration that an infinite component in $\Re^{2}$ may leave
components of extremely large order, components of finite order $k>1$
and isolated nodes in a finite (or asymptotically infinite) region
in $\Re^{2}$ when nodes and connections outside the finite (asymptotically
infinite) region is removed. The figure uses the unit disk connection
model as a special case for easy illustration. Each ball has a radius
of half of the transmission range and is centered at a node. Two adjacent
balls overlap iff the associated nodes are directly connected. The
figure shows an infinite component with nodes organized in a tree
structure. The square area represents the finite (asymptotically infinite)
region. Even as the square grows to include more and more nodes of
the infinite component, it is still possible for the square to have
components of finite order $k>1$ when nodes and connections outside
the square are removed.\label{fig:An-illustration-of-non-varnish-finite-component}}

\end{figure}

We further point out that many other topologies, particularly under
a random connection model where even a pair of nodes separated by
a large distance may have a non-zero probability to be directly connected,
can be drawn for an infinite component in $ $$\Re^{2}$, where after
removing all nodes and associated connections of the infinite component
outside $A_{\frac{1}{r_{\rho}}}$, the infinite component leaves components
of finite order $k>1$ inside $A_{\frac{1}{r_{\rho}}}$, even when
$A_{\frac{1}{r_{\rho}}}$ grows as $\rho\rightarrow\infty$. We emphasize
that we are not hinting that the topology of the infinite component
shown in Fig. \ref{fig:An-illustration-of-non-varnish-finite-component}
is likely to occur in $\mathcal{G}\left(\mathcal{X}_{\frac{\log\rho+b}{C}},g,\Re^{2}\right)$
as $\rho\rightarrow\infty$, but neither can such a possibility be
precluded using \cite[Theorem 6.4]{Meester96Continuum}. Therefore
a conclusion cannot be drawn straightforwardly from \cite[Theorem 6.4]{Meester96Continuum}
that a.a.s. components of finite order $k>1$ in $\mathcal{G}\left(\mathcal{X}_{\frac{\log\rho+b}{C}},g,A_{\frac{1}{r_{\rho}}}\right)$
vanish as $\rho\rightarrow\infty$. Instead some non-trivial analysis
is required to establish such a conclusion in $\mathcal{G}\left(\mathcal{X}_{\frac{\log\rho+b}{C}},g,A_{\frac{1}{r_{\rho}}}\right)$. 

We present such a result for the vanishing of components of finite
order $k>1$ in $\mathcal{G}\left(\mathcal{X}_{\frac{\log\rho+b}{C}},g,A_{\frac{1}{r_{\rho}}}\right)$
as $\rho\rightarrow\infty$ to fill this theoretical gap:
\begin{thm}
\label{thm:varnish of finite components}For $g$ satisfying \eqref{eq:conditions on g(x) - non-increasing}
and \eqref{eq:Condition on g(x) requirement 2}, a.a.s. there is no
component of finite order $k>1$ in $\mathcal{G}\left(\mathcal{X}_{\frac{\log\rho+b}{C}},g,A_{\frac{1}{r_{\rho}}}\right)$. \end{thm}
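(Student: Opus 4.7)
The plan is to show that $E[V_k]\to 0$ as $\rho\to\infty$, where $V_k$ denotes the number of components of order $k$ in $\mathcal{G}\left(\mathcal{X}_{\frac{\log\rho+b}{C}},g,A_{\frac{1}{r_\rho}}\right)$; since $V_k$ is non-negative integer-valued, Markov's inequality then delivers $\Pr(V_k\geq 1)\leq E[V_k]\to 0$. By the multivariate Campbell-Mecke formula for the Poisson process of intensity $\lambda=(\log\rho+b)/C$,
\[
E[V_k]=\frac{\lambda^k}{k!}\int_{A_{\frac{1}{r_\rho}}^k} P_k(\boldsymbol{x}_1,\ldots,\boldsymbol{x}_k)\,Q_k(\boldsymbol{x}_1,\ldots,\boldsymbol{x}_k)\,d\boldsymbol{x}_1\cdots d\boldsymbol{x}_k,
\]
where $P_k$ is the probability that the random graph on $\{\boldsymbol{x}_1,\ldots,\boldsymbol{x}_k\}$ with edge probabilities $g(\|\boldsymbol{x}_i-\boldsymbol{x}_j\|)$ is connected and $Q_k=\exp\bigl(-\lambda\int_{A_{\frac{1}{r_\rho}}}(1-\prod_{i=1}^k(1-g(\|\boldsymbol{z}-\boldsymbol{x}_i\|)))\,d\boldsymbol{z}\bigr)$ is the vacancy factor.

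Next I would decompose $V_k=V_k^{(a)}+V_k^{(b)}$, where $V_k^{(a)}$ counts order-$k$ components of the truncated network that are also (full) components of the infinite network $\mathcal{G}\left(\mathcal{X}_{\frac{\log\rho+b}{C}},g,\Re^2\right)$, and $V_k^{(b)}$ counts the truncation-induced order-$k$ components, i.e., those containing at least one node with an $\Re^2$-neighbor lying outside $A_{\frac{1}{r_\rho}}$. For $V_k^{(a)}$, the translation invariance of the Poisson process together with \cite[Theorem 6.4]{Meester96Continuum} implies that the Palm probability $P_k^{\infty}$ that the origin belongs to a component of order exactly $k$ in $\mathcal{G}\left(\mathcal{X}_{\frac{\log\rho+b}{C}},g,\Re^2\right)$ satisfies $P_k^{\infty}=o(P_1^{\infty})=o(e^{-b}/\rho)$. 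Because every order-$k$ component fully contained in $A_{\frac{1}{r_\rho}}$ contributes $k$ nodes each lying in such a component, and $\lambda A_{\frac{1}{r_\rho}}=\rho$, this yields $E[V_k^{(a)}]\leq \lambda A_{\frac{1}{r_\rho}}P_k^{\infty}/k=\rho P_k^{\infty}/k=o(1)$.

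For $V_k^{(b)}$, the inequalities $1-e^{-y}\leq y$ and $1-\prod_i(1-g_i)\leq\sum_i g_i$ applied to the probability that at least one $\boldsymbol{x}_i$ has a neighbor in $\Re^2\setminus A_{\frac{1}{r_\rho}}$, together with symmetry in the coordinates, give
\[
E[V_k^{(b)}]\leq \frac{\lambda^{k+1}}{(k-1)!}\int_{A_{\frac{1}{r_\rho}}^k} P_k\,Q_k\int_{\Re^2\setminus A_{\frac{1}{r_\rho}}} g(\|\boldsymbol{z}-\boldsymbol{x}_1\|)\,d\boldsymbol{z}\,d\boldsymbol{x}_1\cdots d\boldsymbol{x}_k.
\]
I would split the $\boldsymbol{x}_1$-integration between an interior region $\{\boldsymbol{x}_1\in A_{\frac{1}{r_\rho}}:d(\boldsymbol{x}_1,\partial A_{\frac{1}{r_\rho}})\geq\delta/r_\rho\}$ and a boundary strip of width $\delta/r_\rho$ for a fixed $\delta\in(0,1/2)$. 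In the interior, the inner integral is bounded by $\int_{\|\boldsymbol{z}\|>\delta/r_\rho}g(\|\boldsymbol{z}\|)d\boldsymbol{z}$, which by condition \eqref{eq:Condition on g(x) requirement 2} and the L'H\^opital computation in the Remark following Lemma \ref{lem:Isolated nodes due to boundary effect} is $o(1/\lambda)$; combining with the Palm identity $\int_{A^{k-1}} P_k Q_k^{\Re^2}d\boldsymbol{x}_2\cdots d\boldsymbol{x}_k=(k-1)!P_k^{\infty}/\lambda^{k-1}$ and $Q_k\leq Q_k^{\Re^2}\cdot e^{o(1)}$ on the interior reduces the interior contribution to $o(\rho P_k^{\infty})=o(1)$. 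On the boundary strip I would use the half-plane lower bound $\int_{A_{\frac{1}{r_\rho}}}g(\|\boldsymbol{z}-\boldsymbol{x}_1\|)d\boldsymbol{z}\geq C/2$ to get $Q_k\leq Q_1(\boldsymbol{x}_1)\leq e^{-\lambda C/2}=e^{-b/2}/\sqrt{\rho}$ and absorb this together with the $O(\delta/r_\rho^2)$ strip area and the $\lambda^{k+1}$ prefactor via a half-plane analogue of Meester-Roy's argument.

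The main obstacle is precisely this boundary-strip analysis: the interior vacancy factor $e^{-\lambda C}=e^{-b}/\rho$ only degrades to $e^{-\lambda C/2}=O(1/\sqrt{\rho})$ near $\partial A_{\frac{1}{r_\rho}}$, which on its own is too weak to absorb the combined $\lambda^{k+1}=O((\log\rho)^{k+1})$ prefactor and $O(1/r_\rho)=O(\sqrt{\rho/\log\rho})$ perimeter. Closing this gap needs either a Meester-Roy-style refinement tailored to $k$-tuples clustering against a half-plane (so that a stronger half-plane vacancy estimate applies to those configurations on which $P_k$ is non-negligible), or a direct half-plane analogue of \cite[Theorem 6.4]{Meester96Continuum}; condition \eqref{eq:Condition on g(x) requirement 2} is precisely what is needed to push such a refinement through, and is the reason it had to be imposed in the theorem hypothesis rather than the weaker \eqref{eq:conditions on g(x) - integral boundness}.
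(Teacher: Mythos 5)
Your first step (first--moment method plus the Campbell/Mecke identity for $E[V_k]$) is exactly the paper's Lemma \ref{lem:expected number of components of size k}, but from there your route diverges and, as you yourself concede, does not close. The decisive part is the boundary strip of $A_{\frac{1}{r_{\rho}}}$ in the bound for the truncation-induced count $V_k^{(b)}$: the only vacancy factor you extract there is the half-plane estimate $e^{-\lambda C/2}=\Theta\left(\rho^{-1/2}\right)$, while the strip has area $\Theta\left(\delta r_{\rho}^{-2}\right)=\Theta\left(\delta\rho/\lambda\right)$ and the prefactor is $\lambda^{k+1}$; even after integrating out $\boldsymbol{x}_{2},\ldots,\boldsymbol{x}_{k}$ (which costs at most a factor of order $\lambda^{-(k-1)}$ times constants) the resulting bound grows like $\lambda\sqrt{\rho}$, so the estimate diverges. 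The "half-plane analogue of \cite[Theorem 6.4]{Meester96Continuum}" or "Meester--Roy-style refinement for $k$-tuples clustering against a half-plane" that you invoke to rescue this is neither proved nor available in the literature you cite, so the argument is incomplete precisely at the truncation/boundary effect the theorem is about. There is also a secondary flaw in your interior estimate: the inequality $Q_{k}\leq Q_{k}^{\Re^{2}}e^{o(1)}$ cannot hold uniformly, because you restricted only $\boldsymbol{x}_{1}$ to the interior; since $g$ has unbounded support, the points $\boldsymbol{x}_{2},\ldots,\boldsymbol{x}_{k}$ may lie within distance $O(1)$ of $\partial A_{\frac{1}{r_{\rho}}}$, where $\lambda\int_{\Re^{2}\backslash A_{\frac{1}{r_{\rho}}}}g\left(\left\Vert \boldsymbol{z}-\boldsymbol{x}_{i}\right\Vert \right)d\boldsymbol{z}$ is of order $\lambda$, not $o(1)$, so the comparison with the Palm integral $P_{k}^{\infty}$ is not justified as stated. (Your bound on $V_k^{(a)}$ via the Palm probability and \cite[Theorem 6.4]{Meester96Continuum} is fine.)

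For comparison, the paper closes exactly this gap by a different decomposition that never passes through the infinite-network theorem: after the same first-moment formula it splits according to whether all of $\boldsymbol{x}_{2},\ldots,\boldsymbol{x}_{k}$ lie in $D\left(\boldsymbol{x}_{1},\delta\right)$ or at least one lies outside. The engine is a pairwise gain in the vacancy exponent: choosing $r$ with $\beta=g\left(r^{-}\right)\left(1-g\left(r^{+}\right)\right)>0$, the exponent picks up the extra term $\beta\left|A_{\frac{1}{r_{\rho}}}\cap D\left(\boldsymbol{x}_{1},r\right)\backslash D\left(\boldsymbol{x}_{2},r\right)\right|$, which by the geometric Lemmas \ref{lem:analysis of the intersectional area without border} and \ref{lem:analysis of the intersectional area boundary case} is bounded below linearly in $\left\Vert \boldsymbol{x}_{2}-\boldsymbol{x}_{1}\right\Vert$ even against the border (the deficit near $\partial A_{\frac{1}{r_{\rho}}}$ being compensated by the term $c_{3}\left|\gamma\left(\boldsymbol{x}_{2}\right)-\gamma\left(\boldsymbol{x}_{1}\right)\right|$), and is bounded below by a constant $c_{6}$ in the spread-out case. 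Integrating $e^{-\lambda c\left\Vert \boldsymbol{x}_{2}-\boldsymbol{x}_{1}\right\Vert }$ over $\boldsymbol{x}_{2}$ yields the $\Theta\left(\lambda^{-2}\right)$ that cancels the $\lambda^{2}$ prefactor and lets the half-disk and quarter-disk vacancy factors $e^{-\lambda C/2}$ and $e^{-\lambda\gamma C/4}$ beat the $\Theta\left(r_{\rho}^{-1}\right)$ perimeter, while in the spread-out case the constant gain produces an extra $\rho^{-\beta c_{6}/C}$ which, combined with Lemma \ref{lem:inequality on g_1} and Theorem \ref{thm:expected isolated nodes asymptotically infinite square}, absorbs all $\left(\log\rho\right)^{k-1}$ factors. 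This is precisely the boundary refinement you flag as missing; to complete your route you would essentially have to reproduce these estimates, at which point the detour through \cite[Theorem 6.4]{Meester96Continuum} buys little.
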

\begin{proof}
See Appendix III.\end{proof}
\begin{remrk}
Theorem \ref{thm:varnish of finite components} gives a sufficient
condition on $g$ required for the number of components of fixed and
finite order $k>1$ in $\mathcal{G}\left(\mathcal{X}_{\frac{\log\rho+b}{C}},g,A_{\frac{1}{r_{\rho}}}\right)$
to be vanishingly small as $\rho\rightarrow\infty$. It is also interesting
to obtain a necessary condition on $g$ required for the number of
components of fixed and finite order $k>1$ in $\mathcal{G}\left(\mathcal{X}_{\frac{\log\rho+b}{C}},g,A_{\frac{1}{r_{\rho}}}\right)$
to be vanishingly small. The technique used in the proof of Theorem
\ref{thm:varnish of finite components} however cannot answer the
above question on a necessary condition on $g$. More specifically,
denote by $\xi_{k}$ the (random) number of components of order $k$
in an instance of $\mathcal{G}\left(\mathcal{X}_{\lambda},g,A_{\frac{1}{r_{\rho}}}\right)$
and let $M$ be an arbitrarily large positive integer $M$. The proof
of Theorem \ref{thm:varnish of finite components} is based on an
analysis of $E\left(\sum_{k=2}^{M}\xi_{k}\right)$. By showing that
$\lim_{\rho\rightarrow\infty}E\left(\sum_{k=2}^{M}\xi_{k}\right)=0$,
it follows that $\lim_{\rho\rightarrow\infty}\Pr\left(\sum_{k=2}^{M}\xi_{k}=0\right)=1$.
However $\lim_{\rho\rightarrow\infty}E\left(\sum_{k=2}^{M}\xi_{k}\right)=0$
is only a sufficient condition for $\lim_{\rho\rightarrow\infty}\Pr\left(\sum_{k=2}^{M}\xi_{k}=0\right)=1$,
\emph{not} a necessary condition. It would be interesting to develop
a technique to obtain a tight necessary condition on $g$ required
for the number of components of fixed and finite order $k>1$ in $\mathcal{G}\left(\mathcal{X}_{\frac{\log\rho+b}{C}},g,A_{\frac{1}{r_{\rho}}}\right)$
to be vanishingly small.
\end{remrk}

\section{Conclusion\label{sec:Conclusion}}

In this paper, we discussed the connectivity of several network models
including the widely used dense network model $\mathcal{G}\left(\mathcal{X}_{\rho},g_{r_{\rho}},A\right)$,
extended network model $\mathcal{G}\left(\mathcal{X}_{1},g_{\sqrt{\frac{\log\rho+b}{C}}},A_{\sqrt{\rho}}\right)$
and infinite network model $\mathcal{G}\left(\mathcal{X}_{\rho},g,\Re^{2}\right)$.
Using the scaling and coupling technique, it is shown that the dense
network model and the extended network model are equivalent in their
connectivity properties and they are also equivalent to the network
model $\mathcal{G}\left(\mathcal{X}_{\frac{\log\rho+b}{C}},g,A_{\frac{1}{r_{\rho}}}\right)$,
which can be obtained from the infinite network model $\mathcal{G}\left(\mathcal{X}_{\frac{\log\rho+b}{C}},g,\Re^{2}\right)$
by removing all nodes and associated connections outside the area
$A_{\frac{1}{r_{\rho}}}$ of $\mathcal{G}\left(\mathcal{X}_{\rho},g,\Re^{2}\right)$.
Define the effect associated with the above removal operation as the
truncation effect. A prerequisite for any (asymptotic) conclusion
obtained in the infinite network model to be applicable to the dense
and extended network models is that the impact of the truncation effect
must be vanishingly small on the parameter concerned as $\rho\rightarrow\infty$
- a conclusion that often needs non-trivial analysis to establish.
We then conducted two case studies using a random connection model,
on the expected number of isolated nodes and on the vanishing of components
of fixed and finite order $k>1$ respectively, with a focus on examining
the impact of the truncation effect and showed that the connection
function $g$ has to decrease sufficiently fast in order for the truncation
effect to have a vanishingly small impact. 

In the first case study, we showed that for $g$ satisfying both \eqref{eq:conditions on g(x) - non-increasing}
and \eqref{eq:Condition on g(x) requirement 2}, the impact of the
truncation effect on the number of isolated nodes in $\mathcal{G}\left(\mathcal{X}_{\frac{\log\rho+b}{C}},g,A_{\frac{1}{r_{\rho}}}\right)$
is vanishingly small as $\rho\rightarrow\infty$. However for $g$
satisfying \eqref{eq:conditions on g(x) - non-increasing} and \eqref{eq:conditions on g(x) - integral boundness}
only, the impact of the truncation effect on the number of isolated
nodes in $\mathcal{G}\left(\mathcal{X}_{\frac{\log\rho+b}{C}},g,A_{\frac{1}{r_{\rho}}}\right)$
is non-negligible and may even be the dominant factor in determining
the number of isolated nodes.

In the second case study, we first showed using an example that due
to the truncation effect, asymptotic vanishing of components of fixed
and finite order $k>1$ in an infinite network does not carry over
straightforwardly to the conclusion that components of fixed and finite
order $k>1$ also vanish asymptotically in the dense and extended
networks. Then to fill this theoretical gap, a result is presented
on the asymptotic vanishing of components of finite order $k>1$ in
the dense and extended network models under a random connection model. 

Some interesting results useful for the analysis of connectivity under
a random connection model in the dense and extended networks were
also established. These include the expected number of isolated nodes,
which resulted in a necessary condition for a dense (or extended)
network to be connected, the vanishingly small impact of the boundary
effect on the number of isolated nodes, and the asymptotic vanishing
of components of finite order $k>1$.

Many results in the paper were given in the form of \emph{sufficient}
conditions on the connection function $g$ required for the impact
of the truncation effect to be vanishingly small. It will be interesting
and important to examine \emph{necessary} conditions on $g$ required
for the impact of the truncation effect to be vanishingly small.

\section*{Appendix I Proof of Theorem \ref{thm:expected isolated nodes asymptotically infinite square}}

In this Appendix, we give a proof of Theorem \ref{thm:expected isolated nodes asymptotically infinite square}.

We analyze $E\left(W\right)$ as $\rho\rightarrow\infty$. Denote
by $D\left(\boldsymbol{y},r_{\rho}^{-\varepsilon}\right)$ a disk
centered at $\boldsymbol{y}$ and with a radius $r_{\rho}^{-\varepsilon}$,
where $\varepsilon$ is a small positive constant and $\varepsilon<\frac{1}{4}$.
 Denote by $B\left(A_{\frac{1}{r_{\rho}}}\right)\subset A_{\frac{1}{r_{\rho}}}$
an area within $r_{\rho}^{-\varepsilon}$ of the border of $A_{\frac{1}{r_{\rho}}}$;
denote by $\ell A_{\frac{1}{r_{\rho}}}\subset A_{\frac{1}{r_{\rho}}}$
a rectangular area of size $r_{\rho}^{-\varepsilon}\times\left(r_{\rho}^{-1}-2r_{\rho}^{-\varepsilon}\right)$
within $r_{\rho}^{-\varepsilon}$ of one side of $A_{\frac{1}{r_{\rho}}}$,
away from the corners of $A_{\frac{1}{r_{\rho}}}$ by $r_{\rho}^{-\varepsilon}$,
and there are four such areas; let $\angle A_{\frac{1}{r_{\rho}}}\subset A_{\frac{1}{r_{\rho}}}$
denote a square of size $r_{\rho}^{-\varepsilon}\times r_{\rho}^{-\varepsilon}$
at the four corners of $A_{\frac{1}{r_{\rho}}}$. Fig. \ref{fig:An-Illustration-of-the boundary areas}
illustrates these areas.

\begin{figure}

\begin{centering}
\includegraphics[width=0.35\columnwidth]{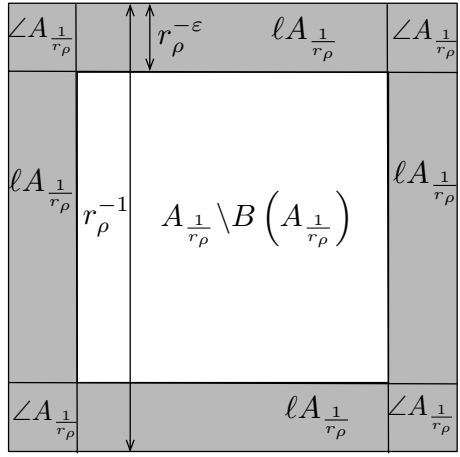}
\par\end{centering}

\caption{An Illustration of the boundary areas of $A_{\frac{1}{r_{\rho}}}$.
The areas $\angle A_{\frac{1}{r_{\rho}}}$, $\ell A_{\frac{1}{r_{\rho}}}$
are self-explanatory and $B\left(A_{\frac{1}{r_{\rho}}}\right)$ is
the shaded area in the figure.\label{fig:An-Illustration-of-the boundary areas}}

\end{figure}

It follows from \eqref{eq:expected number of isolated nodes in a square finite}
that

\begin{eqnarray}
 &  & \lim_{\rho\rightarrow\infty}E\left(W\right)\nonumber \\
 & = & \lim_{\rho\rightarrow\infty}\int_{A_{\frac{1}{r_{\rho}}}}\frac{\log\rho+b}{C}e^{-\int_{A_{\frac{1}{r_{\rho}}}}\frac{\log\rho+b}{C}g\left(\left\Vert \boldsymbol{x}-\boldsymbol{y}\right\Vert \right)d\boldsymbol{x}}d\boldsymbol{y}\nonumber \\
 & = & \lim_{\rho\rightarrow\infty}\rho r_{\rho}^{2}\int_{A_{\frac{1}{r_{\rho}}}\backslash B\left(A_{\frac{1}{r_{\rho}}}\right)}e^{-\rho r_{\rho}^{2}\int_{A_{\frac{1}{r_{\rho}}}}g\left(\left\Vert \boldsymbol{x}-\boldsymbol{y}\right\Vert \right)d\boldsymbol{x}}d\boldsymbol{y}\nonumber \\
 & + & \lim_{\rho\rightarrow\infty}4\rho r_{\rho}^{2}\int_{\ell A_{\frac{1}{r_{\rho}}}}e^{-\rho r_{\rho}^{2}\int_{A_{\frac{1}{r_{\rho}}}}g\left(\left\Vert \boldsymbol{x}-\boldsymbol{y}\right\Vert \right)d\boldsymbol{x}}d\boldsymbol{y}\nonumber \\
 & + & \lim_{\rho\rightarrow\infty}4\rho r_{\rho}^{2}\int_{\angle A_{\frac{1}{r_{\rho}}}}e^{-\rho r_{\rho}^{2}\int_{A_{\frac{1}{r_{\rho}}}}g\left(\left\Vert \boldsymbol{x}-\boldsymbol{y}\right\Vert \right)d\boldsymbol{x}}d\boldsymbol{y}\label{eq:asymptotic expected isolated nodes boundary effect}\end{eqnarray}
 The three summands in \eqref{eq:asymptotic expected isolated nodes boundary effect}
represent respectively the expected number of isolated nodes in the
central area $A_{\frac{1}{r_{\rho}}}\backslash B\left(A_{\frac{1}{r_{\rho}}}\right)$,
in the boundary area along the four sides of $A_{\frac{1}{r_{\rho}}}$
and in the four corners of $A_{\frac{1}{r_{\rho}}}$. In the following
analysis, we will show that for $g$ satisfying both \eqref{eq:conditions on g(x) - non-increasing}
and \eqref{eq:Condition on g(x) requirement 2}, the first term approaches
$e^{-b}$ as $\rho\rightarrow\infty$, and the second and the third
terms approach $0$ as $\rho\rightarrow\infty$.

Consider the first summand in \eqref{eq:asymptotic expected isolated nodes boundary effect}.
Using the definition of $r_{\rho}$ in \eqref{eq:definition of r_rho},
first it can be shown that for any $\boldsymbol{y}$ and therefore
$\boldsymbol{y}\in A_{\frac{1}{r_{\rho}}}\backslash B\left(A_{\frac{1}{r_{\rho}}}\right)$
(see Fig. \ref{fig:An-Illustration-of-the boundary areas} for the
region $A_{\frac{1}{r_{\rho}}}\backslash B\left(A_{\frac{1}{r_{\rho}}}\right)$):

\begin{eqnarray}
 &  & \lim_{\rho\rightarrow\infty}\rho e^{-\rho r_{\rho}^{2}\int_{D\left(\boldsymbol{y},r_{\rho}^{-\varepsilon}\right)}g\left(\left\Vert \boldsymbol{x}-\boldsymbol{y}\right\Vert \right)d\boldsymbol{x}}\nonumber \\
 & = & \lim_{\rho\rightarrow\infty}\rho e^{-\rho r_{\rho}^{2}\left(\int_{\Re^{2}}g\left(\left\Vert \boldsymbol{x}-\boldsymbol{y}\right\Vert \right)d\boldsymbol{x}-\int_{\Re^{2}\backslash D\left(\boldsymbol{y},r_{\rho}^{-\varepsilon}\right)}g\left(\left\Vert \boldsymbol{x}-\boldsymbol{y}\right\Vert \right)d\boldsymbol{x}\right)}\nonumber \\
 & = & \lim_{\rho\rightarrow\infty}\rho e^{-\rho r_{\rho}^{2}\left(C-\int_{\Re^{2}\backslash D\left(\boldsymbol{y},r_{\rho}^{-\varepsilon}\right)}g\left(\left\Vert \boldsymbol{x}-\boldsymbol{y}\right\Vert \right)d\boldsymbol{x}\right)}\nonumber \\
 & = & \lim_{\rho\rightarrow\infty}e^{-b}e^{\rho r_{\rho}^{2}\int_{\Re^{2}\backslash D\left(\boldsymbol{y},r_{\rho}^{-\varepsilon}\right)}g\left(\left\Vert \boldsymbol{x}-\boldsymbol{y}\right\Vert \right)d\boldsymbol{x}}\nonumber \\
 & = & e^{-b}\lim_{\rho\rightarrow\infty}e^{\frac{\log\rho+b}{C}\int_{r_{\rho}^{-\varepsilon}}^{\infty}2\pi rg\left(r\right)dr}\label{eq:analysis on the first term}\end{eqnarray}

It can be shown further using \eqref{eq:definition of r_rho} that
(following the equation, detailed explanations are given):

\begin{align}
 & \lim_{\rho\rightarrow\infty}\frac{\log\rho+b}{C}\int_{r_{\rho}^{-\varepsilon}}^{\infty}2\pi rg\left(r\right)dr\nonumber \\
= & \lim_{\rho\rightarrow\infty}\frac{\int_{r_{\rho}^{-\varepsilon}}^{\infty}2\pi rg\left(r\right)dr}{\frac{C}{\log\rho+b}}\nonumber \\
= & \lim_{\rho\rightarrow\infty}\frac{-2\pi r_{\rho}^{-\varepsilon}g\left(r_{\rho}^{-\varepsilon}\right)\left(-\frac{\varepsilon}{2}r_{\rho}^{-\varepsilon-2}\frac{1-\left(\log\rho+b\right)}{C\rho^{2}}\right)}{-\frac{C}{\rho\left(\log\rho+b\right)^{2}}}\nonumber \\
= & \lim_{\rho\rightarrow\infty}\pi\varepsilon\left(\log\rho+b\right)^{2}r_{\rho}^{-2\varepsilon-2}g\left(r_{\rho}^{-\varepsilon}\right)\frac{\log\rho+b-1}{C\rho}\nonumber \\
= & \lim_{\rho\rightarrow\infty}\pi\varepsilon\left(\log\rho+b\right)^{2}r_{\rho}^{-2\varepsilon}g\left(r_{\rho}^{-\varepsilon}\right)\label{eq:requirements on g - intermediate results}\\
= & \lim_{\rho\rightarrow\infty}\pi\varepsilon\left(\log\rho+b\right)^{2}r_{\rho}^{-2\varepsilon}o_{\rho}\left(\frac{1}{r_{\rho}^{-2\varepsilon}\log^{2}\left(r_{\rho}^{-2\varepsilon}\right)}\right)\label{eq:first term intermediate step x}\\
= & \lim_{\rho\rightarrow\infty}\left(\pi\varepsilon\left(\log\rho+b\right)^{2}\right.\nonumber \\
 & \left.o_{\rho}\left(\frac{1}{2\varepsilon^{2}\left(\log\left(\log\rho+b\right)-\log C-\log\rho\right)^{2}}\right)\right)\nonumber \\
= & 0\label{eq:first term intermediate step 1}\end{align}
where L'Hôpital's rule is used in the second step of the above equation,
and $g\left(x\right)=o_{x}\left(\frac{1}{x^{2}\log^{2}x}\right)$
is used from \eqref{eq:requirements on g - intermediate results}
to \eqref{eq:first term intermediate step x}. As a result of \eqref{eq:analysis on the first term}
and \eqref{eq:first term intermediate step 1}

\begin{equation}
\lim_{\rho\rightarrow\infty}\rho e^{-\rho r_{\rho}^{2}\int_{D\left(\boldsymbol{y},r_{\rho}^{-\varepsilon}\right)}g\left(\left\Vert \boldsymbol{x}-\boldsymbol{y}\right\Vert \right)d\boldsymbol{x}}=e^{-b}\label{eq:expected number of isolated nodes d_y}\end{equation}

It follows that (see Fig. \ref{fig:An-Illustration-of-the boundary areas}
for an illustration of the region $A_{\frac{1}{r_{\rho}}}\backslash B\left(A_{\frac{1}{r_{\rho}}}\right)$,
which is unshaded in the figure.)

\begin{align*}
 & \lim_{\rho\rightarrow\infty}\rho r_{\rho}^{2}\int_{A_{\frac{1}{r_{\rho}}}\backslash B\left(A_{\frac{1}{r_{\rho}}}\right)}e^{-\rho r_{\rho}^{2}\int_{A_{\frac{1}{r_{\rho}}}}g\left(\left\Vert \boldsymbol{x}-\boldsymbol{y}\right\Vert \right)d\boldsymbol{x}}d\boldsymbol{y}\\
\leq & \lim_{\rho\rightarrow\infty}\rho r_{\rho}^{2}\int_{A_{\frac{1}{r_{\rho}}}\backslash B\left(A_{\frac{1}{r_{\rho}}}\right)}e^{-\rho r_{\rho}^{2}\int_{D\left(\boldsymbol{y},r_{\rho}^{-\varepsilon}\right)}g\left(\left\Vert \boldsymbol{x}-\boldsymbol{y}\right\Vert \right)d\boldsymbol{x}}d\boldsymbol{y}\\
= & \lim_{\rho\rightarrow\infty}\left(\rho e^{-\rho r_{\rho}^{2}\int_{D\left(\boldsymbol{0},r_{\rho}^{-\varepsilon}\right)}g\left(\left\Vert \boldsymbol{x}\right\Vert \right)d\boldsymbol{x}}\right)\left(r_{\rho}^{2}\int_{A_{\frac{1}{r_{\rho}}}\backslash B\left(A_{\frac{1}{r_{\rho}}}\right)}d\boldsymbol{y}\right)\\
= & e^{-b}\end{align*}
and \begin{eqnarray*}
 &  & \lim_{\rho\rightarrow\infty}\rho r_{\rho}^{2}\int_{A_{\frac{1}{r_{\rho}}}\backslash B\left(A_{\frac{1}{r_{\rho}}}\right)}e^{-\rho r_{\rho}^{2}\int_{A_{\frac{1}{r_{\rho}}}}g\left(\left\Vert \boldsymbol{x}-\boldsymbol{y}\right\Vert \right)d\boldsymbol{x}}d\boldsymbol{y}\\
 & \geq & \lim_{\rho\rightarrow\infty}\rho r_{\rho}^{2}\int_{A_{\frac{1}{r_{\rho}}}\backslash B\left(A_{\frac{1}{r_{\rho}}}\right)}e^{-\rho r_{\rho}^{2}\int_{\Re^{2}}g\left(\left\Vert \boldsymbol{x}-\boldsymbol{y}\right\Vert \right)d\boldsymbol{x}}d\boldsymbol{y}\\
 & = & e^{-b}\end{eqnarray*}
Therefore

\begin{equation}
\lim_{\rho\rightarrow\infty}\rho r_{\rho}^{2}\int_{A_{\frac{1}{r_{\rho}}}/B\left(A_{\frac{1}{r_{\rho}}}\right)}e^{-\rho r_{\rho}^{2}\int_{A_{\frac{1}{r_{\rho}}}}g\left(\left\Vert \boldsymbol{x}-\boldsymbol{y}\right\Vert \right)d\boldsymbol{x}}d\boldsymbol{y}=e^{-b}\label{eq:Analysis on the first term - final result}\end{equation}

For the second term in \eqref{eq:asymptotic expected isolated nodes boundary effect},
an illustration of the boundary area for $\boldsymbol{y}\in\ell A_{\frac{1}{r_{\rho}}}$
is shown in Fig. \ref{fig:An-illustration-of-boundary-effect-left-border}. 

\begin{figure}
\noindent \begin{centering}
\subfloat[ \label{fig:An-illustration-of-boundary-effect-definition-c_y-etc}]{\noindent \begin{centering}
\includegraphics[width=0.235\columnwidth]{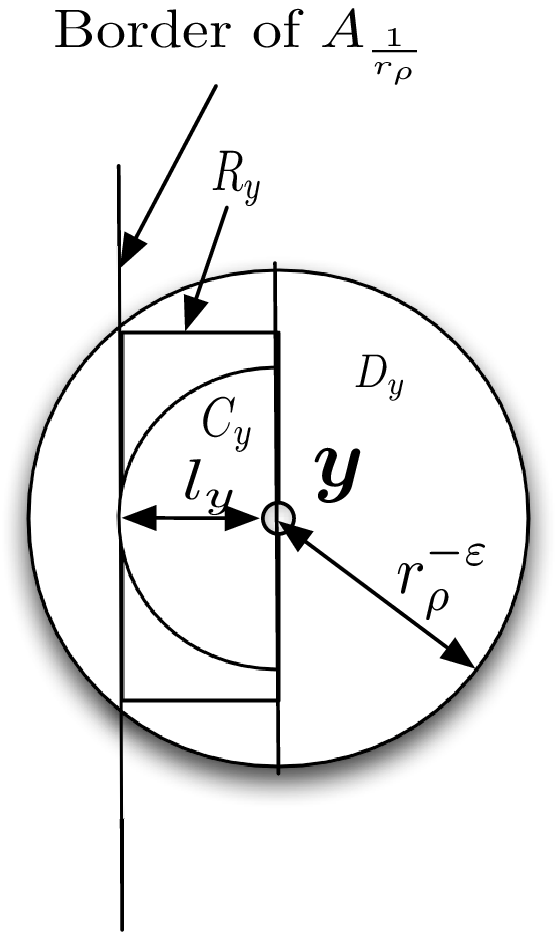}
\par\end{centering}

}\subfloat[\label{fig:An-illustration-of-boundary-effect-definition of Ly}]{\noindent \begin{centering}
\includegraphics[width=0.215\columnwidth]{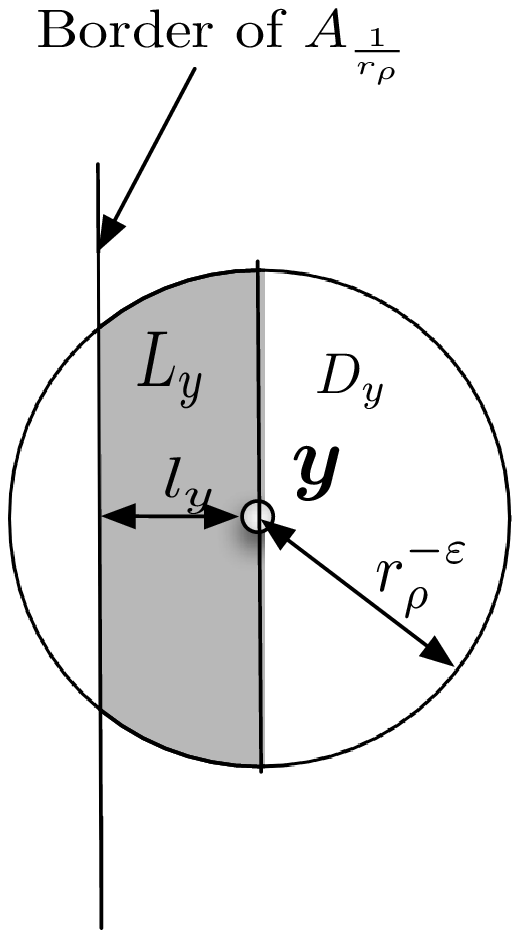}
\par\end{centering}

}
\par\end{centering}

\caption{An illustration of the boundary area for $\boldsymbol{y}\in\ell A_{\frac{1}{r_{\rho}}}$.
The figure is drawn for $\boldsymbol{y}$ located near the left border
of $A_{\frac{1}{r_{\rho}}}$. The situations for $\boldsymbol{y}$
near the top, bottom and right borders of $A_{\frac{1}{r_{\rho}}}$
can be drawn analogously. $D\left(\boldsymbol{y},r_{\rho}^{-\varepsilon}\right)$
is a disk centered at $\boldsymbol{y}$ and has a radius $r_{\rho}^{-\varepsilon}$.
$l_{y}$ is the distance between $\boldsymbol{y}$ and the border
of $A_{\frac{1}{r_{\rho}}}$. $D_{y}$ is a half disk centered at
$\boldsymbol{y}$, with a radius $r_{\rho}^{-\varepsilon}$ and on
the right side of $\boldsymbol{y}$. $C_{y}$ is a half disk centered
at $\boldsymbol{y}$, with a radius $l_{y}$ and on the left side
of $\boldsymbol{y}$. $R_{y}\subset A_{\frac{1}{r_{\rho}}}\cap D\left(\boldsymbol{y},r_{\rho}^{-\varepsilon}\right)$
is a rectangle of $l_{y}\times2\sqrt{r_{\rho}^{-2\varepsilon}-l_{y}^{2}}$
on the left side of $\boldsymbol{y}$. $L_{y}=\left(A_{\frac{1}{r_{\rho}}}\cap D\left(\boldsymbol{y},r_{\rho}^{-\varepsilon}\right)\right)\backslash D_{y}$
is the shaded area in sub-figure b.\label{fig:An-illustration-of-boundary-effect-left-border}}

\end{figure}

Define $L_{y}\triangleq\left(A_{\frac{1}{r_{\rho}}}\cap D\left(\boldsymbol{y},r_{\rho}^{-\varepsilon}\right)\right)\backslash D_{y}$
(i.e. the shaded area in Fig. \ref{fig:An-illustration-of-boundary-effect-definition of Ly}).
The symbols $D_{y}$, $C_{y}$, $l_{y}$ and $R_{y}$ are defined
in Fig. \ref{fig:An-illustration-of-boundary-effect-left-border}.
It can be shown that 

\begin{align}
 & 4\lim_{\rho\rightarrow\infty}\rho r_{\rho}^{2}\int_{\ell A_{\frac{1}{r_{\rho}}}}e^{-\rho r_{\rho}^{2}\int_{A_{\frac{1}{r_{\rho}}}}g\left(\left\Vert \boldsymbol{x}-\boldsymbol{y}\right\Vert \right)d\boldsymbol{x}}d\boldsymbol{y}\nonumber \\
\leq & 4\lim_{\rho\rightarrow\infty}r_{\rho}^{2}\int_{\ell A_{\frac{1}{r_{\rho}}}}\rho e^{-\rho r_{\rho}^{2}\int_{A_{\frac{1}{r_{\rho}}}\cap D\left(\boldsymbol{y},r_{\rho}^{-\varepsilon}\right)}g\left(\left\Vert \boldsymbol{x}-\boldsymbol{y}\right\Vert \right)d\boldsymbol{x}}d\boldsymbol{y}\nonumber \\
= & 4\lim_{\rho\rightarrow\infty}r_{\rho}^{2}\int_{\ell A_{\frac{1}{r_{\rho}}}}\rho e^{-\rho r_{\rho}^{2}\left(\int_{D_{y}}g\left(\left\Vert \boldsymbol{x}-\boldsymbol{y}\right\Vert \right)d\boldsymbol{x}+\int_{L_{y}}g\left(\left\Vert \boldsymbol{x}-\boldsymbol{y}\right\Vert \right)d\boldsymbol{x}\right)}d\boldsymbol{y}\nonumber \\
= & 4\lim_{\rho\rightarrow\infty}\left(\left(\rho^{\frac{1}{2}}e^{-\frac{1}{2}\rho r_{\rho}^{2}\int_{D\left(\boldsymbol{0},r_{\rho}^{-\varepsilon}\right)}g\left(\left\Vert \boldsymbol{x}\right\Vert \right)d\boldsymbol{x}}\right)\right.\nonumber \\
 & \left.\left(\rho^{\frac{1}{2}}r_{\rho}^{2}\int_{\ell A_{\frac{1}{r_{\rho}}}}e^{-\rho r_{\rho}^{2}\int_{L_{y}}g\left(\left\Vert \boldsymbol{x}-\boldsymbol{y}\right\Vert \right)d\boldsymbol{x}}d\boldsymbol{y}\right)\right)\label{eq:second term intermediate step}\end{align}

For the first term $\rho^{\frac{1}{2}}e^{-\frac{1}{2}\rho r_{\rho}^{2}\int_{D\left(\boldsymbol{0},r_{\rho}^{-\varepsilon}\right)}g\left(\left\Vert \boldsymbol{x}\right\Vert \right)d\boldsymbol{x}}$
in \eqref{eq:second term intermediate step}, it can be shown that

\begin{eqnarray}
 &  & \lim_{\rho\rightarrow\infty}\rho^{\frac{1}{2}}e^{-\frac{1}{2}\rho r_{\rho}^{2}\int_{D\left(\boldsymbol{0},r_{\rho}^{-\varepsilon}\right)}g\left(\left\Vert \boldsymbol{x}\right\Vert \right)d\boldsymbol{x}}\nonumber \\
 & = & \lim_{\rho\rightarrow\infty}\rho^{\frac{1}{2}}e^{-\frac{1}{2}\rho r_{\rho}^{2}\left(\int_{\Re^{2}}g\left(\left\Vert \boldsymbol{x}\right\Vert \right)d\boldsymbol{x}-\int_{\Re^{2}\backslash D\left(\boldsymbol{0},r_{\rho}^{-\varepsilon}\right)}g\left(\left\Vert \boldsymbol{x}\right\Vert \right)d\boldsymbol{x}\right)}\nonumber \\
 & = & \lim_{\rho\rightarrow\infty}\rho^{\frac{1}{2}}e^{-\frac{1}{2}\rho r_{\rho}^{2}C}e^{\frac{1}{2}\rho r_{\rho}^{2}\int_{\Re^{2}\backslash D\left(\boldsymbol{0},r_{\rho}^{-\varepsilon}\right)}g\left(\left\Vert \boldsymbol{x}\right\Vert \right)d\boldsymbol{x}}\nonumber \\
 & = & e^{-\frac{b}{2}}\label{eq:first iterm in the second term step}\end{eqnarray}
where \eqref{eq:first term intermediate step 1} is used in reaching
\eqref{eq:first iterm in the second term step}. 

Let $\gamma$ be a positive constant and $\frac{1}{2}>\gamma>\frac{\varepsilon}{2}$.
Let $\triangle$ be a positive constant such that \begin{equation}
\int_{0}^{\triangle}2\pi xg\left(x\right)dx=\gamma2C\label{eq:second term construction of the delta}\end{equation}
The existence of such a positive constant $\triangle$ can be shown
by using \eqref{eq:definition of C} and noting that $2\gamma<1$.
Using the non-increasing property of $g$, it can also be shown that
$g\left(\triangle\right)>0$; otherwise it can be shown that $\int_{0}^{\triangle}2\pi xg\left(x\right)dx=C$
which implies $\gamma=\frac{1}{2}$. This constitutes a contradiction
with the requirement that $\frac{1}{2}>\gamma>\frac{\varepsilon}{2}$.
Therefore $g\left(\triangle\right)>0$. In the following analysis,
it is assumed that $\rho$ is sufficiently large such that $r_{\rho}^{-\varepsilon}\geq2\triangle$.

For the second term in \eqref{eq:second term intermediate step},
it can be shown that 

\begin{eqnarray}
 &  & \lim_{\rho\rightarrow\infty}\rho^{\frac{1}{2}}r_{\rho}^{2}\int_{\ell A_{\frac{1}{r_{\rho}}}}e^{-\rho r_{\rho}^{2}\int_{L_{y}}g\left(\left\Vert \boldsymbol{x}-\boldsymbol{y}\right\Vert \right)d\boldsymbol{x}}d\boldsymbol{y}\nonumber \\
 & = & \lim_{\rho\rightarrow\infty}\left(\rho^{\frac{1}{2}}r_{\rho}^{2}\left(r_{\rho}^{-1}-2r_{\rho}^{-\varepsilon}\right)\right.\nonumber \\
 &  & \times\left.\int_{0}^{r_{\rho}^{-\varepsilon}}e^{-\rho r_{\rho}^{2}\int_{L_{y}}g\left(\left\Vert \boldsymbol{x}-\boldsymbol{y}\right\Vert \right)d\boldsymbol{x}}dy\right)\label{eq:an interpretation of y}\\
 & \leq & \lim_{\rho\rightarrow\infty}\rho^{\frac{1}{2}}r_{\rho}\int_{0}^{r_{\rho}^{-\varepsilon}}e^{-\rho r_{\rho}^{2}\int_{L_{y}}g\left(\left\Vert \boldsymbol{x}-\boldsymbol{y}\right\Vert \right)d\boldsymbol{x}}dy\nonumber \\
 & = & \lim_{\rho\rightarrow\infty}\sqrt{\frac{\log\rho+b}{C}}\int_{0}^{r_{\rho}^{-\varepsilon}}e^{-\rho r_{\rho}^{2}\int_{L_{y}}g\left(\left\Vert \boldsymbol{x}-\boldsymbol{y}\right\Vert \right)d\boldsymbol{x}}dy\nonumber \\
 & = & \lim_{\rho\rightarrow\infty}\sqrt{\frac{\log\rho+b}{C}}\left(\int_{0}^{\triangle}e^{-\rho r_{\rho}^{2}\int_{L_{y}}g\left(\left\Vert \boldsymbol{x}-\boldsymbol{y}\right\Vert \right)d\boldsymbol{x}}dy\right.\nonumber \\
 & + & \left.\int_{\triangle}^{r_{\rho}^{-\varepsilon}}e^{-\rho r_{\rho}^{2}\int_{L_{y}}g\left(\left\Vert \boldsymbol{x}-\boldsymbol{y}\right\Vert \right)d\boldsymbol{x}}dy\right)\label{eq:second term intermediate step-1}\end{eqnarray}
where in \eqref{eq:an interpretation of y} $\boldsymbol{y}$ represents
a (any) point in $\ell A_{\rho}$ at a Euclidean distance $y\in[0,r_{\rho}^{-\varepsilon}]$
apart from the border of $A_{\rho}$. Define $\lambda\triangleq\frac{\log\rho+b}{C}$
for convenience, it can be further shown that in \eqref{eq:second term intermediate step-1}

\begin{eqnarray}
 &  & \lim_{\rho\rightarrow\infty}\sqrt{\lambda}\int_{\triangle}^{r_{\rho}^{-\varepsilon}}e^{-\rho r_{\rho}^{2}\int_{L_{y}}g\left(\left\Vert \boldsymbol{x}-\boldsymbol{y}\right\Vert \right)d\boldsymbol{x}}dy\nonumber \\
 & \leq & \lim_{\rho\rightarrow\infty}\sqrt{\lambda}\int_{\triangle}^{r_{\rho}^{-\varepsilon}}e^{-\rho r_{\rho}^{2}\int_{C_{y}}g\left(\left\Vert \boldsymbol{x}-\boldsymbol{y}\right\Vert \right)d\boldsymbol{x}}dy\nonumber \\
 & = & \lim_{\rho\rightarrow\infty}\sqrt{\lambda}\int_{\triangle}^{r_{\rho}^{-\varepsilon}}e^{-\frac{1}{2}\rho r_{\rho}^{2}\int_{0}^{y}2\pi xg\left(x\right)dx}dy\nonumber \\
 & = & \lim_{\rho\rightarrow\infty}\sqrt{\lambda}\int_{\triangle}^{r_{\rho}^{-\varepsilon}}e^{-\frac{1}{2}\rho r_{\rho}^{2}\left(\int_{0}^{\triangle}2\pi xg\left(x\right)dx+\int_{\triangle}^{y}2\pi xg\left(x\right)dx\right)}dy\nonumber \\
 & \leq & \lim_{\rho\rightarrow\infty}\sqrt{\lambda}\int_{\triangle}^{r_{\rho}^{-\varepsilon}}e^{-\frac{1}{2}\rho r_{\rho}^{2}\int_{0}^{\triangle}2\pi xg\left(x\right)dx}dy\nonumber \\
 & = & \lim_{\rho\rightarrow\infty}\sqrt{\lambda}\int_{\triangle}^{r_{\rho}^{-\varepsilon}}e^{-\gamma\left(\log\rho+b\right)}dy\label{eq:second term dertivation of the half disk step 1}\\
 & = & \lim_{\rho\rightarrow\infty}\sqrt{\lambda}\left(e^{-\gamma b}\rho^{-\gamma}\left(\left(\frac{\log\rho+b}{C\rho}\right)^{-\frac{\varepsilon}{2}}-\triangle\right)\right)\nonumber \\
 & = & 0\label{eq:second term derivation of the hald disk step 2}\end{eqnarray}
where \eqref{eq:second term construction of the delta} is used in
reaching \eqref{eq:second term dertivation of the half disk step 1},
and $\gamma>\frac{\varepsilon}{2}$ is used in reaching \eqref{eq:second term derivation of the hald disk step 2}.
It can also be shown that for the other term in \eqref{eq:second term intermediate step-1},

\begin{eqnarray}
 &  & \lim_{\rho\rightarrow\infty}\sqrt{\frac{\log\rho+b}{C}}\int_{0}^{\triangle}e^{-\rho r_{\rho}^{2}\int_{L_{y}}g\left(\left\Vert \boldsymbol{x}-\boldsymbol{y}\right\Vert \right)d\boldsymbol{x}}dy\nonumber \\
 & \leq & \lim_{\rho\rightarrow\infty}\sqrt{\lambda}\int_{0}^{\triangle}e^{-\rho r_{\rho}^{2}\int_{R_{y}}g\left(\left\Vert \boldsymbol{x}-\boldsymbol{y}\right\Vert \right)d\boldsymbol{x}}dy\label{eq:second term rectangular analysis step2-2}\\
 & = & \lim_{\rho\rightarrow\infty}\sqrt{\lambda}\int_{0}^{\triangle}e^{-\rho r_{\rho}^{2}2\int_{0}^{y}\int_{0}^{\sqrt{r_{\rho}^{-2\varepsilon}-x^{2}}}g\left(\sqrt{x^{2}+z^{2}}\right)dzdx}dy\nonumber \\
 & \leq & \lim_{\rho\rightarrow\infty}\sqrt{\lambda}\int_{0}^{\triangle}e^{-\rho r_{\rho}^{2}2\int_{0}^{y}\int_{0}^{r_{\rho}^{-\varepsilon}-x}g\left(\sqrt{x^{2}+z^{2}}\right)dzdx}dy\label{eq:second term rectangular analysis step 2}\\
 & \leq & \lim_{\rho\rightarrow\infty}\sqrt{\lambda}\int_{0}^{\triangle}e^{-\rho r_{\rho}^{2}2\int_{0}^{y}\int_{0}^{r_{\rho}^{-\varepsilon}-\triangle}g\left(\sqrt{x^{2}+z^{2}}\right)dzdx}dy\label{eq:second term rectangular analysis step 3}\\
 & \leq & \lim_{\rho\rightarrow\infty}\sqrt{\lambda}\int_{0}^{\triangle}e^{-\rho r_{\rho}^{2}2\int_{0}^{y}\int_{0}^{r_{\rho}^{-\varepsilon}-\triangle}g\left(z+\triangle\right)dzdx}dy\label{eq:second term rectangular analysis step 4}\\
 & = & \lim_{\rho\rightarrow\infty}\sqrt{\lambda}\int_{0}^{\triangle}e^{-\rho r_{\rho}^{2}2y\int_{0}^{r_{\rho}^{-\varepsilon}-\triangle}g\left(z+\triangle\right)dz}dy\label{eq:second term rectangular analysis step 1}\end{eqnarray}
where \eqref{eq:second term rectangular analysis step 2} is obtained
by noting that $r_{\rho}^{-2\varepsilon}-x^{2}\geq\left(r_{\rho}^{-\varepsilon}-x\right)^{2}$
for $r_{\rho}^{-\varepsilon}\geq x$ (note that for $\rho$ sufficiently
large, $r_{\rho}^{-\varepsilon}>\triangle\geq y\geq x$); \eqref{eq:second term rectangular analysis step 3}
is obtained by noting that $x\leq\triangle$ and \eqref{eq:second term rectangular analysis step 4}
is obtained by noting that $y\leq\triangle$ and the non-increasing
property of $g$. 

Let $\rho$ be sufficiently large such that $r_{\rho}^{-\varepsilon}\geq2\triangle$
and also note that $g\left(\triangle\right)>0$. Therefore $\beta\triangleq\int_{0}^{\triangle}g\left(z+\triangle\right)dz$
is a positive constant and $\beta>0$. It then follows from \eqref{eq:second term rectangular analysis step 1}
that \begin{eqnarray}
 &  & \lim_{\rho\rightarrow\infty}\sqrt{\frac{\log\rho+b}{C}}\int_{0}^{\triangle}e^{-\rho r_{\rho}^{2}\int_{L_{y}}g\left(\left\Vert \boldsymbol{x}-\boldsymbol{y}\right\Vert \right)d\boldsymbol{x}}dy\nonumber \\
 & \leq & \lim_{\rho\rightarrow\infty}\sqrt{\frac{\log\rho+b}{C}}\int_{0}^{\triangle}e^{-\rho r_{\rho}^{2}2\beta y}dy\nonumber \\
 & = & \lim_{\rho\rightarrow\infty}\sqrt{\frac{\log\rho+b}{C}}\times\frac{1-e^{-\rho r_{\rho}^{2}2\beta\triangle}}{\rho r_{\rho}^{2}2\beta}\nonumber \\
 & = & \lim_{\rho\rightarrow\infty}\sqrt{\frac{\log\rho+b}{C}}\times\frac{1-e^{-2\beta\triangle\frac{\log\rho+b}{C}}}{2\beta\frac{\log\rho+b}{C}}\nonumber \\
 & = & 0\label{eq:second term rectangular analysis final step}\end{eqnarray}

As a result of \eqref{eq:second term derivation of the hald disk step 2}
and \eqref{eq:second term rectangular analysis final step}, both
terms on the right hand side of \eqref{eq:second term intermediate step-1}
go to zero and it follows that \[
\lim_{\rho\rightarrow\infty}\rho^{\frac{1}{2}}r_{\rho}^{2}\int_{\ell A_{\rho}}e^{-\rho r_{\rho}^{2}\int_{L_{y}}g\left(\left\Vert \boldsymbol{x}-\boldsymbol{y}\right\Vert \right)d\boldsymbol{x}}d\boldsymbol{y}=0\]
The above equation, together with \eqref{eq:second term intermediate step}
and \eqref{eq:first iterm in the second term step}, leads to the
conclusion that \begin{equation}
4\lim_{\rho\rightarrow\infty}\rho r_{\rho}^{2}\int_{\ell A_{\rho}}e^{-\rho r_{\rho}^{2}\int_{A_{\rho}}g\left(\left\Vert \boldsymbol{x}-\boldsymbol{y}\right\Vert \right)d\boldsymbol{x}}d\boldsymbol{y}=0\label{eq:second term final result}\end{equation}
i.e. the second term in \eqref{eq:asymptotic expected isolated nodes boundary effect}
approaches $0$ as $\rho\rightarrow\infty$.

For the third term in \eqref{eq:asymptotic expected isolated nodes boundary effect},
it can be shown that

\begin{eqnarray}
 &  & 4\lim_{\rho\rightarrow\infty}\rho r_{\rho}^{2}\int_{\angle A_{\frac{1}{r_{\rho}}}}e^{-\rho r_{\rho}^{2}\int_{A_{\frac{1}{r_{\rho}}}}g\left(\left\Vert \boldsymbol{x}-\boldsymbol{y}\right\Vert \right)d\boldsymbol{x}}d\boldsymbol{y}\nonumber \\
 & \leq & 4\lim_{\rho\rightarrow\infty}\rho r_{\rho}^{2}\int_{\angle A_{\frac{1}{r_{\rho}}}}e^{-\rho r_{\rho}^{2}\int_{A_{\frac{1}{r_{\rho}}}\cap D\left(\boldsymbol{y},r_{\rho}^{-\varepsilon}\right)}g\left(\left\Vert \boldsymbol{x}-\boldsymbol{y}\right\Vert \right)d\boldsymbol{x}}d\boldsymbol{y}\nonumber \\
 & \leq & 4\lim_{\rho\rightarrow\infty}\rho r_{\rho}^{2}\int_{\angle A_{\frac{1}{r_{\rho}}}}e^{-\frac{1}{4}\rho r_{\rho}^{2}\int_{D\left(\boldsymbol{y},r_{\rho}^{-\varepsilon}\right)}g\left(\left\Vert \boldsymbol{x}-\boldsymbol{y}\right\Vert \right)d\boldsymbol{x}}d\boldsymbol{y}\nonumber \\
 & = & 4\lim_{\rho\rightarrow\infty}\left(r_{\rho}^{-\varepsilon}\right)^{2}r_{\rho}^{2}\rho e^{-\frac{1}{4}\rho r_{\rho}^{2}\int_{D\left(\boldsymbol{y},r_{\rho}^{-\varepsilon}\right)}g\left(\left\Vert \boldsymbol{x}-\boldsymbol{y}\right\Vert \right)d\boldsymbol{x}}\nonumber \\
 & = & 4\lim_{\rho\rightarrow\infty}r_{\rho}^{2-2\varepsilon}\rho e^{-\frac{1}{4}\rho r_{\rho}^{2}\left(C-\int_{\Re^{2}\backslash D\left(\boldsymbol{y},r_{\rho}^{-\varepsilon}\right)}g\left(\left\Vert \boldsymbol{x}-\boldsymbol{y}\right\Vert \right)d\boldsymbol{x}\right)}\nonumber \\
 & = & 4\lim_{\rho\rightarrow\infty}\left(\frac{\log\rho+b}{C\rho}\right)^{1-\varepsilon}\rho e^{-\frac{1}{4}\left(\log\rho+b\right)}\label{eq:third term intermediate result}\\
 & = & 4C^{-1+\varepsilon}e^{-\frac{1}{4}b}\lim_{\rho\rightarrow\infty}\frac{\left(\log\rho+b\right)^{1-\varepsilon}}{\rho^{\frac{1}{4}-\varepsilon}}\nonumber \\
 & = & 0\label{eq:third term final result}\end{eqnarray}
where the second step results by noting that for any $\boldsymbol{y}\in\angle A_{\frac{1}{r_{\rho}}}$,
$A_{\rho}\cap D\left(\boldsymbol{y},r_{\rho}^{-\varepsilon}\right)$
covers at least one quarter of $D\left(\boldsymbol{y},r_{\rho}^{-\varepsilon}\right)$,
\eqref{eq:first term intermediate step 1} is used in reaching \eqref{eq:third term intermediate result},
and $\varepsilon<\frac{1}{4}$ is used in the final step.

As a result of \eqref{eq:asymptotic expected isolated nodes boundary effect},
\eqref{eq:Analysis on the first term - final result}, \eqref{eq:second term final result}
and \eqref{eq:third term final result}:\begin{equation}
\lim_{\rho\rightarrow\infty}E\left(W\right)=e^{-b}\label{eq:expected number of isolated nodes square final}\end{equation}

\section*{Appendix II: Proof of Lemma \ref{lem:Expected Isolated nodes torus}}

The torus that is commonly discussed in random geometric graph theory
is essentially the same as a square except that the distance between
two points on a torus is defined by their\emph{ toroidal distance},
instead of Euclidean distance. Thus a pair of nodes in $\mathcal{G}^{T}\left(\mathcal{X}_{\frac{\log\rho+b}{C}},g,A_{\frac{1}{r_{\rho}}}^{T}\right)$,
located at $\boldsymbol{x}_{1}$ and $\boldsymbol{x}_{2}$ respectively,
are directly connected with probability $g\left(\left\Vert \boldsymbol{x}_{1}-\boldsymbol{x}_{2}\right\Vert ^{T}\right)$
where $\left\Vert \boldsymbol{x}_{1}-\boldsymbol{x}_{2}\right\Vert ^{T}$
denotes the \emph{toroidal distance} between the two nodes. For a
unit torus $A^{T}=\left[-\frac{1}{2},\frac{1}{2}\right]^{2}$, the
toroidal distance is given by \cite[p. 13]{Penrose03Random}:\begin{equation}
\left\Vert \boldsymbol{x}_{1}-\boldsymbol{x}_{2}\right\Vert ^{T}\triangleq\min\left\{ \left\Vert \boldsymbol{x}_{1}+\boldsymbol{z}-\boldsymbol{x}_{2}\right\Vert :\boldsymbol{z}\in\mathbb{Z}^{2}\right\} \label{eq:definition of toroidal distance in a unit torus}\end{equation}
The toroidal distance between points on a torus of any other size
can be computed analogously. 
\begin{remrk}
The use of toroidal distance allows nodes located near the boundary
to have the same number of connections \emph{probabilistically} as
a node located near the center. Therefore it allows the removal of
the boundary effect that is present in a square. The consideration
of a torus implies that there is no need to consider special cases
occurring near the boundary of the region and that events inside the
region do not depend on the particular location inside the region.
This often simplifies the analysis. 
\end{remrk}
From now on, whenever the difference between a torus and a square
affects the parameter being discussed, we use superscript $^{T}$
to mark the parameter in a torus. 

We note the following relation between toroidal distance and Euclidean
distance on a square area centered at the origin:\begin{eqnarray}
\left\Vert \boldsymbol{x}_{1}-\boldsymbol{x}_{2}\right\Vert ^{T} & \leq & \left\Vert \boldsymbol{x}_{1}-\boldsymbol{x}_{2}\right\Vert \label{eq:property of toroidal distance 1}\\
\left\Vert \boldsymbol{x}\right\Vert ^{T} & = & \left\Vert \boldsymbol{x}\right\Vert \label{eq:property of toroidal distance 2}\end{eqnarray}
which will be used in the later analysis.

It can then be shown that for an arbitrary node in $\mathcal{G}^{T}\left(\mathcal{X}_{\frac{\log\rho+b}{C}},g,A_{\frac{1}{r_{\rho}}}^{T}\right)$
at location $\boldsymbol{y}$, the probability that it is isolated
is given by:

\begin{eqnarray*}
\Pr\left(I_{\boldsymbol{y}}^{T}=1\right) & = & e^{-\int_{A_{\frac{1}{r_{\rho}}}^{T}}\frac{\log\rho+b}{C}g\left(\left\Vert \boldsymbol{x}-\boldsymbol{y}\right\Vert ^{T}\right)d\boldsymbol{x}}\\
 & = & e^{-\int_{A_{\frac{1}{r_{\rho}}}^{T}}\frac{\log\rho+b}{C}g\left(\left\Vert \boldsymbol{x}\right\Vert ^{T}\right)d\boldsymbol{x}}\\
 & = & e^{-\int_{A_{\frac{1}{r_{\rho}}}}\frac{\log\rho+b}{C}g\left(\left\Vert \boldsymbol{x}\right\Vert \right)d\boldsymbol{x}}\end{eqnarray*}
where in the second step, the property of a torus that the probability
that an arbitrary node at location $\boldsymbol{y}$ is isolated is
equal to the probability that a node at the origin is isolated is
used; in the third step \eqref{eq:property of toroidal distance 2}
is used.

Thus the expected number of isolated nodes in $\mathcal{G}^{T}\left(\mathcal{X}_{\frac{\log\rho+b}{C}},g,A_{\frac{1}{r_{\rho}}}^{T}\right)$
is given by

\begin{eqnarray}
 &  & E\left(W^{T}\right)\nonumber \\
 & = & \int_{A_{\frac{1}{r_{\rho}}}}\frac{\log\rho+b}{C}e^{-\int_{A_{\frac{1}{r_{\rho}}}}\frac{\log\rho+b}{C}g\left(\left\Vert \boldsymbol{x}\right\Vert \right)d\boldsymbol{x}}d\boldsymbol{y}\label{eq:expected number of isolated nodes in a torus finite}\\
 & = & \frac{1}{r_{\rho}^{2}}\frac{\log\rho+b}{C}e^{-\int_{A_{\frac{1}{r_{\rho}}}}\frac{\log\rho+b}{C}g\left(\left\Vert \boldsymbol{x}\right\Vert \right)d\boldsymbol{x}}\\
 & = & \rho e^{-\int_{A_{\frac{1}{r_{\rho}}}}\frac{\log\rho+b}{C}g\left(\left\Vert \boldsymbol{x}\right\Vert \right)d\boldsymbol{x}}\end{eqnarray}

First it can be shown using \eqref{eq:definition of C} that for $g$
satisfying \eqref{eq:Condition on g(x) requirement 2}

\begin{eqnarray}
 &  & \lim_{\rho\rightarrow\infty}\rho e^{-\int_{D\left(\boldsymbol{0},r_{\rho}^{-\varepsilon}\right)}\frac{\log\rho+b}{C}g\left(\left\Vert \boldsymbol{x}\right\Vert \right)d\boldsymbol{x}}\nonumber \\
 & = & \lim_{\rho\rightarrow\infty}\rho e^{-\frac{\log\rho+b}{C}\left(C-\int_{\Re^{2}\backslash D\left(\boldsymbol{0},r_{\rho}^{-\varepsilon}\right)}g\left(\left\Vert \boldsymbol{x}\right\Vert \right)d\boldsymbol{x}\right)}\nonumber \\
 & = & e^{-b}\lim_{\rho\rightarrow\infty}e^{\frac{\log\rho+b}{C}\int_{r_{\rho}^{-\varepsilon}}^{\infty}2\pi xg\left(x\right)dx}\nonumber \\
 & = & e^{-b}\label{eq:an analysis of the trunction effect}\end{eqnarray}
where $D\left(\boldsymbol{0},x\right)$ denotes a disk centered at
the origin and with a radius $x$, $\varepsilon$ is a small positive
constant, and the last step results because \begin{eqnarray}
 &  & \lim_{\rho\rightarrow\infty}\frac{\int_{r_{\rho}^{-\varepsilon}}^{\infty}2\pi xg\left(x\right)dx}{\frac{1}{\log\rho+b}}\nonumber \\
 & = & \lim_{\rho\rightarrow\infty}\frac{\pi\varepsilon r_{\rho}^{-\varepsilon}g\left(r_{\rho}^{-\varepsilon}\right)r_{\rho}^{-\varepsilon-2}\frac{\log\rho+b-1}{C\rho^{2}}}{\frac{1}{\rho\left(\log\rho+b\right)^{2}}}\label{eq:little's rule trunction}\\
 & = & \lim_{\rho\rightarrow\infty}\pi\varepsilon\left(\log\rho+b\right)^{2}r_{\rho}^{-2\varepsilon}o_{\rho}\left(\frac{1}{r_{\rho}^{-2\varepsilon}\log^{2}\left(r_{\rho}^{-2\varepsilon}\right)}\right)\nonumber \\
 & = & 0\nonumber \end{eqnarray}
where L'Hôpital's rule is used in reaching \eqref{eq:little's rule trunction}
and in the third step $g\left(x\right)=o_{x}\left(\frac{1}{x^{2}\log^{2}x}\right)$
is used. Note that by definition of $C$ in \eqref{eq:definition of C},
\begin{equation}
\rho e^{-\int_{\Re^{2}}\frac{\log\rho+b}{C}g\left(\left\Vert \boldsymbol{x}\right\Vert \right)d\boldsymbol{x}}=e^{-b}\label{eq:expected number of isolated nodes area in R2}\end{equation}
and \begin{eqnarray}
 &  & \rho e^{-\int_{\Re^{2}}\frac{\log\rho+b}{C}g\left(\left\Vert \boldsymbol{x}\right\Vert \right)d\boldsymbol{x}}\nonumber \\
 & \leq & \rho e^{-\int_{A_{\frac{1}{r_{\rho}}}}\frac{\log\rho+b}{C}g\left(\left\Vert \boldsymbol{x}\right\Vert \right)d\boldsymbol{x}}\nonumber \\
 & \leq & \rho e^{-\int_{D\left(\boldsymbol{0},r_{\rho}^{-\varepsilon}\right)}\frac{\log\rho+b}{C}g\left(\left\Vert \boldsymbol{x}\right\Vert \right)d\boldsymbol{x}}\label{eq:expected number of isolated nodes inequalities}\end{eqnarray}

As a result of \eqref{eq:expected number of isolated nodes in a torus finite},
\eqref{eq:an analysis of the trunction effect}, \eqref{eq:expected number of isolated nodes area in R2}
and \eqref{eq:expected number of isolated nodes inequalities}\begin{eqnarray}
\lim_{\rho\rightarrow\infty}E\left(W^{T}\right) & = & e^{-b}\label{eq:Expected number of isolated nodes asymptotic}\end{eqnarray}

\section*{Appendix III Proof of Theorem \ref{thm:varnish of finite components}}

In this Appendix, we give a proof of Theorem \ref{thm:varnish of finite components}.

For convenience, let $\lambda$ be the node density in $\mathcal{G}\left(\mathcal{X}_{\frac{\log\rho+b}{C}},g,A_{\frac{1}{r_{\rho}}}\right)$
where $\lambda\triangleq\rho r_{\rho}^{2}=\frac{\log\rho+b}{C}$.
Using the above notations, $\mathcal{G}\left(\mathcal{X}_{\frac{\log\rho+b}{C}},g,A_{\frac{1}{r_{\rho}}}\right)$
can be written as $\mathcal{G}\left(\mathcal{X}_{\lambda},g,A_{\frac{1}{r_{\rho}}}\right)$. 

Note that for any finite $\rho$ the total number of nodes in $\mathcal{G}\left(\mathcal{X}_{\lambda},g,A_{\frac{1}{r_{\rho}}}\right)$,
hence the total number of components in $\mathcal{G}\left(\mathcal{X}_{\lambda},g,A_{\frac{1}{r_{\rho}}}\right)$,
is almost surely finite. Denote by $\xi_{k}$ the (random) number
of components of order $k$ in an instance of $\mathcal{G}\left(\mathcal{X}_{\lambda},g,A_{\frac{1}{r_{\rho}}}\right)$.
It then suffices to show that for an arbitrarily large positive integer
$M$:

\begin{equation}
\lim_{\rho\rightarrow\infty}\Pr\left(\sum_{k=2}^{M}\xi_{k}=0\right)=1\label{eq:form of theorm varnishing of finite components in math form}\end{equation}

The following symbols and notations are used in this appendix:

Denote by $g_{1}\left(\boldsymbol{x}_{1},\boldsymbol{x}_{2},\ldots,\boldsymbol{x}_{k}\right)$
the probability that a set of $k$ nodes at non-random positions $\boldsymbol{x}_{1}$,
$\boldsymbol{x}_{2}$, $\ldots$, $\boldsymbol{x}_{k}\in A_{\frac{1}{r_{\rho}}}$
forms a connected component.

Denote by $g_{2}\left(\boldsymbol{y};\boldsymbol{x}_{1},\boldsymbol{x}_{2},\ldots,\boldsymbol{x}_{k}\right)$
the probability that a node at non-random position $\boldsymbol{y}$
is connected to at least one node in $\left\{ \boldsymbol{x}_{1},\boldsymbol{x}_{2},\ldots,\boldsymbol{x}_{k}\right\} $.
It can be shown that

\begin{equation}
g_{2}\left(\boldsymbol{y};\boldsymbol{x}_{1},\boldsymbol{x}_{2},\ldots,\boldsymbol{x}_{k}\right)=1-\prod_{i=1}^{k}\left(1-g\left(\left\Vert \boldsymbol{y}-\boldsymbol{x}_{i}\right\Vert \right)\right)\label{eq:node at non-random position connected - nonisolated}\end{equation}
and \begin{equation}
g_{2}\left(\boldsymbol{y};\boldsymbol{x}_{1},\boldsymbol{x}_{2},\ldots,\boldsymbol{x}_{k}\right)\geq g_{2}\left(\boldsymbol{y};\boldsymbol{x}_{1},\boldsymbol{x}_{2},\ldots,\boldsymbol{x}_{i}\right)\text{ for }1\leq i\leq k\label{eq:inequality for g_2}\end{equation}
As an easy consequence of the union bound, \begin{equation}
g_{2}\left(\boldsymbol{y};\boldsymbol{x}_{1},\boldsymbol{x}_{2},\ldots,\boldsymbol{x}_{k}\right)\leq\sum_{i=1}^{k}g\left(\left\Vert \boldsymbol{y}-\boldsymbol{x}_{i}\right\Vert \right)\label{eq:inequality for g_2 the union bound}\end{equation}

Using the monotonicity and positive integral properties of $g$ in
\eqref{eq:conditions on g(x) - non-increasing} and \eqref{eq:conditions on g(x) - integral boundness},
it can be shown that there exists a positive constant $r$ such that
$g\left(r^{-}\right)\left(1-g\left(r^{+}\right)\right)>0$ where $g\left(r^{-}\right)\triangleq\lim_{x\rightarrow r^{-}}g\left(x\right)$
and $g\left(r^{+}\right)\triangleq\lim_{x\rightarrow r^{+}}g\left(x\right)$.
If $g$ is a continuous function, then $g\left(r^{-}\right)=g\left(r^{+}\right)$;
if $g$ is a discontinuous function, e.g. a unit disk connection model,
by choosing $r$ to be the transmission range, $g\left(r^{-}\right)\left(1-g\left(r^{+}\right)\right)=1$.
For convenience in notations, we use $\beta$ for $g\left(r^{-}\right)\left(1-g\left(r^{+}\right)\right)$,
i.e. \begin{equation}
\beta\triangleq g\left(r^{-}\right)\left(1-g\left(r^{+}\right)\right)\label{eq:definition of beta}\end{equation}

Denote by $\partial A_{\frac{1}{r_{\rho}}}$ the border of $A_{\frac{1}{r_{\rho}}}$.
Denote by $\ell A_{\frac{1}{r_{\rho}}}\subset A_{\frac{1}{r_{\rho}}}$
a rectangular area of size $\left(\frac{1}{r_{\rho}}-2r\right)\times r$
along one side of the border of $A_{\frac{1}{r_{\rho}}}$, within
a distance $r$ of the border and away from the four corners of $A_{\frac{1}{r_{\rho}}}$
by at least $r$. There are four such areas in $A_{\frac{1}{r_{\rho}}}$.
Denote by $\angle A_{\frac{1}{r_{\rho}}}\subset A_{\frac{1}{r_{\rho}}}$
a square area of size $r\times r$ located at a corner of $A_{\frac{1}{r_{\rho}}}$.
There are four such corner squares in $A_{\frac{1}{r_{\rho}}}$. Denote
by $B_{d}\left(A_{\frac{1}{r_{\rho}}}\right)\subset A_{\frac{1}{r_{\rho}}}$
a boundary area within a distance $d$ of the border of $A_{\frac{1}{r_{\rho}}}$.
Note the difference of the definitions of those symbols from those
used Appendix I and particularly Fig. \ref{fig:An-Illustration-of-the boundary areas}. 

Let $D\left(\boldsymbol{x},d\right)\subset\Re^{2}$ represents a disk
centered at $\boldsymbol{x}\in A_{\frac{1}{r_{\rho}}}$ and with a
radius $d$. 

We first establish some preliminary results that will be used in the
proof.
\begin{lemma}
\label{lem:expected number of components of size k}In $\mathcal{G}\left(\mathcal{X}_{\lambda},g,A_{\frac{1}{r_{\rho}}}\right)$,
the expected number of components of order $k$ is given by \begin{equation}
E\left(\xi_{k}\right)=\frac{\lambda^{k}}{k!}\int_{\left(A_{\frac{1}{r_{\rho}}}\right)^{k}}g_{1}\left(\boldsymbol{x}_{1},\boldsymbol{x}_{2},\ldots,\boldsymbol{x}_{k}\right)e^{-\lambda\int_{A_{\frac{1}{r_{\rho}}}}g_{2}\left(\boldsymbol{y};\boldsymbol{x}_{1},\boldsymbol{x}_{2},\ldots,\boldsymbol{x}_{k}\right)d\boldsymbol{y}}d\left(\boldsymbol{x}_{1}\cdots\boldsymbol{x}_{k}\right)\label{eq:expected number of component}\end{equation}
\end{lemma}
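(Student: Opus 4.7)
The plan is to derive the formula via the Campbell--Mecke (refined Campbell) theorem for Poisson point processes together with Slivnyak's theorem, exploiting the conditional independence between the intra-component edge configuration and the exterior Poisson configuration.

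First I would write $\xi_k$ as a sum over unordered $k$-subsets of $\mathcal{X}_\lambda$:
\[
\xi_k \;=\; \sum_{\{X_1,\ldots,X_k\}\subset \mathcal{X}_\lambda} \mathbf{1}\!\left\{\{X_1,\ldots,X_k\}\ \text{is a component of}\ \mathcal{G}\left(\mathcal{X}_\lambda,g,A_{\frac{1}{r_\rho}}\right)\right\},
\]
and then rewrite this as $\tfrac{1}{k!}$ times the corresponding sum over ordered $k$-tuples of distinct points, which absorbs the overcounting and explains the $k!$ in the denominator. Taking expectations and invoking the $k$-th order Campbell formula for the Poisson process $\mathcal{X}_\lambda$ of intensity $\lambda$ on $A_{\frac{1}{r_\rho}}$ converts the sum into an integral against the factorial moment measure $\lambda^k\,dx_1\cdots dx_k$, with the integrand being the conditional probability that $x_1,\ldots,x_k$ form a component given that there are Poisson points at those locations.

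Next I would apply Slivnyak's theorem, which says that the Palm distribution of a Poisson process at $(x_1,\ldots,x_k)$ is just the original Poisson process $\mathcal{X}_\lambda$ augmented by the deterministic points $x_1,\ldots,x_k$. Under this conditional law, the event that $\{x_1,\ldots,x_k\}$ is a component decomposes into two independent sub-events: (i) the internal event that the $k$ points are mutually connected in the subgraph on $\{x_1,\ldots,x_k\}$, which depends only on the edges among those fixed points and has probability $g_1(x_1,\ldots,x_k)$ by definition; and (ii) the external event that no other Poisson point $Y\in\mathcal{X}_\lambda$ is directly connected to any of $x_1,\ldots,x_k$, which depends only on edges between the original Poisson points and the $k$ fixed points. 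These two events are independent because they are determined by disjoint families of independent Bernoulli edge variables.

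For the external event, I would use the void/probability generating functional of a Poisson process: conditionally on the locations $x_1,\ldots,x_k$, each point $Y$ of $\mathcal{X}_\lambda$ independently avoids connecting to the cluster with probability $1-g_2(Y;x_1,\ldots,x_k)$, so
\[
\Pr(\text{no exterior connection}\mid x_1,\ldots,x_k)
\;=\;\exp\!\left(-\lambda \int_{A_{\frac{1}{r_\rho}}} g_2(y;x_1,\ldots,x_k)\,dy\right),
\]
by the standard identity $E\bigl[\prod_{Y\in\mathcal{X}_\lambda}(1-h(Y))\bigr]=\exp(-\lambda\int h)$. Combining the two independent factors inside the Campbell integral yields exactly \eqref{eq:expected number of component}. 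The main subtlety, and the one point where I would be careful, is to justify the independence of the internal and external events rigorously (they are determined by disjoint sets of edge Bernoullis in the random connection model), and to handle the combinatorial $k!$ correctly when passing from unordered subsets to ordered tuples; the analytic content is then just the Poisson Campbell/PGF computation, which is routine.
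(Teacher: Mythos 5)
Your proposal is correct, but it reaches \eqref{eq:expected number of component} by a different route than the paper. You invoke the standard Poisson-process machinery: write $\xi_{k}$ as $\tfrac{1}{k!}$ times a sum over ordered $k$-tuples of distinct points, apply the multivariate Campbell--Mecke formula together with Slivnyak's theorem to pass to the Palm version with $k$ deterministic points added, split the component event into the internal connectivity event (probability $g_{1}$) and the exterior non-connection event, and evaluate the latter with the probability generating functional identity $E\bigl[\prod_{Y}\left(1-g_{2}\left(Y;\boldsymbol{x}_{1},\ldots,\boldsymbol{x}_{k}\right)\right)\bigr]=\exp\left(-\lambda\int_{A_{\frac{1}{r_{\rho}}}}g_{2}\left(\boldsymbol{y};\boldsymbol{x}_{1},\ldots,\boldsymbol{x}_{k}\right)d\boldsymbol{y}\right)$. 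The paper instead proceeds elementarily (following the technique of Proposition 6.2 in the Meester--Roy book, due to Penrose): it conditions on the Poisson number of nodes $\left|\mathcal{X}_{\lambda}\right|=n$, writes $E\left(\xi_{k}\mid\left|\mathcal{X}_{\lambda}\right|=n\right)$ via a binomial coefficient and an integral of $g_{1}\prod_{i=k+1}^{n}\left(1-g_{2}\right)$ over uniformly placed nodes, and then sums the resulting Poisson series explicitly to produce the exponential factor. The probabilistic content is identical in both arguments --- the decomposition into the internal $g_{1}$ event and the exterior void-type event, whose independence rests on the disjointness of the underlying Bernoulli edge variables --- so your approach buys brevity and generality at the cost of citing Palm/Mecke theory for the random connection model (where one must indeed verify, as you note, that the edges incident to the added points are sampled independently), whereas the paper's computation is longer but self-contained and requires nothing beyond the Poisson distribution of the node count. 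Your handling of the $k!$ overcounting and of the PGFL step is correct, so the argument goes through.
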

\begin{proof}
It can be shown that for any $n\geq k$:\begin{equation}
E\left(\xi_{k}\left|\left|\mathcal{X}_{\lambda}\right|=n\right.\right)=\frac{\left(\begin{array}{c}
n\\
k\end{array}\right)}{\left(A_{\frac{1}{r_{\rho}}}\right)^{n}}\int_{\left(A_{\frac{1}{r_{\rho}}}\right)^{n}}g_{1}\left(\boldsymbol{x}_{1},\boldsymbol{x}_{2},\ldots,\boldsymbol{x}_{k}\right)\prod_{i=k+1}^{n}\left(1-g_{2}\left(\boldsymbol{x}_{i};\boldsymbol{x}_{1},\boldsymbol{x}_{2},\ldots,\boldsymbol{x}_{k}\right)\right)d\left(\boldsymbol{x}_{1}\cdots\boldsymbol{x}_{n}\right)\label{eq:Existence of finite component uniform}\end{equation}
In \eqref{eq:Existence of finite component uniform}, $\left(\begin{array}{c}
n\\
k\end{array}\right)$ is the number of distinct sets of $k$ nodes drawn from a total of
$n$ nodes and the rest term represents the probability of the event
that a \emph{randomly chosen} set of $k$ nodes forms a component
of order $k$. From \eqref{eq:Existence of finite component uniform},
it follows that \begin{eqnarray*}
 &  & E\left(\xi_{k}\right)\\
 & = & \sum_{n=k}^{\infty}E\left(\xi_{k}\left|\left|\mathcal{X}_{\lambda}\right|=n\right.\right)\frac{\left(\lambda A_{\frac{1}{r_{\rho}}}\right)^{n}}{n!}e^{-\lambda A_{\frac{1}{r_{\rho}}}}\\
 & = & \sum_{n=k}^{\infty}\frac{\left(\lambda A_{\frac{1}{r_{\rho}}}\right)^{n}}{n!}e^{-\lambda A_{\frac{1}{r_{\rho}}}}\frac{\left(\begin{array}{c}
n\\
k\end{array}\right)}{\left(A_{\frac{1}{r_{\rho}}}\right)^{n}}\int_{\left(A_{\frac{1}{r_{\rho}}}\right)^{n}}g_{1}\left(\boldsymbol{x}_{1},\boldsymbol{x}_{2},\ldots,\boldsymbol{x}_{k}\right)\prod_{i=k+1}^{n}\left(1-g_{2}\left(\boldsymbol{x}_{i};\boldsymbol{x}_{1},\boldsymbol{x}_{2},\ldots,\boldsymbol{x}_{k}\right)\right)d\left(\boldsymbol{x}_{1}\cdots\boldsymbol{x}_{n}\right)\\
 & = & \sum_{n=k}^{\infty}\frac{\lambda^{n}}{n!}e^{-\lambda A_{\frac{1}{r_{\rho}}}}\left(\begin{array}{c}
n\\
k\end{array}\right)\int_{\left(A_{\frac{1}{r_{\rho}}}\right)^{k}}g_{1}\left(\boldsymbol{x}_{1},\boldsymbol{x}_{2},\ldots,\boldsymbol{x}_{k}\right)\left(\int_{A_{\frac{1}{r_{\rho}}}}1-g_{2}\left(\boldsymbol{y};\boldsymbol{x}_{1},\boldsymbol{x}_{2},\ldots,\boldsymbol{x}_{k}\right)d\boldsymbol{y}\right)^{n-k}d\left(\boldsymbol{x}_{1}\cdots\boldsymbol{x}_{k}\right)\\
 & = & \int_{\left(A_{\frac{1}{r_{\rho}}}\right)^{k}}g_{1}\left(\boldsymbol{x}_{1},\boldsymbol{x}_{2},\ldots,\boldsymbol{x}_{k}\right)\left(\sum_{n=k}^{\infty}\frac{\lambda^{n}}{n!}e^{-\lambda A_{\rho}}\left(\begin{array}{c}
n\\
k\end{array}\right)\left(\int_{A_{\frac{1}{r_{\rho}}}}1-g_{2}\left(\boldsymbol{y};\boldsymbol{x}_{1},\boldsymbol{x}_{2},\ldots,\boldsymbol{x}_{k}\right)d\boldsymbol{y}\right)^{n-k}\right)d\left(\boldsymbol{x}_{1}\cdots\boldsymbol{x}_{k}\right)\\
 & = & \frac{\lambda^{k}}{k!}\int_{\left(A_{\frac{1}{r_{\rho}}}\right)^{k}}g_{1}\left(\boldsymbol{x}_{1},\boldsymbol{x}_{2},\ldots,\boldsymbol{x}_{k}\right)\left(\sum_{n=k}^{\infty}\frac{\left(\lambda\left(\int_{A_{\frac{1}{r_{\rho}}}}1-g_{2}\left(\boldsymbol{y};\boldsymbol{x}_{1},\boldsymbol{x}_{2},\ldots,\boldsymbol{x}_{k}\right)d\boldsymbol{y}\right)\right)^{n-k}}{\left(n-k\right)!}e^{-\lambda A_{\frac{1}{r_{\rho}}}}\right)d\left(\boldsymbol{x}_{1}\cdots\boldsymbol{x}_{k}\right)\\
 & = & \frac{\lambda^{k}}{k!}\int_{\left(A_{\frac{1}{r_{\rho}}}\right)^{k}}g_{1}\left(\boldsymbol{x}_{1},\boldsymbol{x}_{2},\ldots,\boldsymbol{x}_{k}\right)e^{-\lambda\int_{A_{\frac{1}{r_{\rho}}}}g_{2}\left(\boldsymbol{y};\boldsymbol{x}_{1},\boldsymbol{x}_{2},\ldots,\boldsymbol{x}_{k}\right)d\boldsymbol{y}}d\left(\boldsymbol{x}_{1}\cdots\boldsymbol{x}_{k}\right)\end{eqnarray*}

\end{proof}
A similar technique as that used in the proof of Proposition 6.2 in
\cite{Meester96Continuum}, originally due to Penrose \cite{Penrose91On},
was used in the proof of Lemma \ref{lem:expected number of components of size k}
.

The following lemma is also used in the analysis of $E\left(\xi_{k}\right)$.
\begin{lemma}
\label{lem:sufficient and necessary condition ordering}A sufficient
and necessary condition for a given set of nodes to form a single
connected component is that there exists an ordering of the nodes,
which can start from any node in the set, such that each node appearing
later in the order is connected to at least one node appearing earlier
in the order.
\end{lemma}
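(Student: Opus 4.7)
The plan is to prove both directions of the equivalence separately, since the statement is a clean combinatorial characterization of connectedness that does not depend on the probabilistic setting at all. I will work with the underlying graph whose vertex set is the given collection of nodes and whose edges are the ones present in the instance of $\mathcal{G}\left(\mathcal{X}_{\lambda},g,A_{\frac{1}{r_{\rho}}}\right)$ under consideration. Let the vertex set be $V = \{v_1,\ldots,v_k\}$ with $k\ge 1$, and call an ordering $v_{\sigma(1)},v_{\sigma(2)},\ldots,v_{\sigma(k)}$ \emph{good} if for every $j\ge 2$, $v_{\sigma(j)}$ is adjacent to at least one of $v_{\sigma(1)},\ldots,v_{\sigma(j-1)}$.

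For the sufficiency direction, I will argue by induction on $j$ that the set $S_j\triangleq\{v_{\sigma(1)},\ldots,v_{\sigma(j)}\}$ lies entirely in a single connected component of the induced subgraph on $V$. The base case $j=1$ is immediate. For the inductive step, the defining property of a good ordering gives an edge from $v_{\sigma(j)}$ to some $v_{\sigma(i)}\in S_{j-1}$, so $v_{\sigma(j)}$ belongs to the same component as $v_{\sigma(i)}$, and by the inductive hypothesis all of $S_{j-1}$ (hence all of $S_j$) is in one component. Taking $j=k$ shows that $V$ forms a single connected component.

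For the necessity direction, suppose $V$ forms a single connected component, and fix an arbitrary starting vertex $v\in V$; I must exhibit a good ordering beginning with $v$. I will build the ordering greedily: set $v_{\sigma(1)}=v$, and assuming $S_{j-1}=\{v_{\sigma(1)},\ldots,v_{\sigma(j-1)}\}$ has been constructed with $j-1<k$, I claim there exists some vertex in $V\setminus S_{j-1}$ adjacent to some vertex in $S_{j-1}$; pick any such vertex and call it $v_{\sigma(j)}$. The claim is the only nontrivial point, and it follows from connectivity: if no edge crossed the cut $(S_{j-1},V\setminus S_{j-1})$, then the two nonempty sets $S_{j-1}$ and $V\setminus S_{j-1}$ would lie in different components, contradicting the assumption that $V$ is connected. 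Iterating this construction until $j=k$ produces the desired good ordering.

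The argument is entirely elementary, so there is no real technical obstacle; the only thing to be careful about is phrasing the cut argument in the necessity direction so that it is manifestly valid for \emph{any} choice of starting vertex $v$, which is exactly what the statement asserts. I will present necessity first (to introduce the construction) and then sufficiency, ending with the remark that the ordering produced in the necessity proof is essentially a breadth-first or depth-first traversal, so the result can be viewed as a restatement of the standard fact that a graph is connected iff it admits a spanning tree rooted at any prescribed vertex.
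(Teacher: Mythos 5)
Your proof is correct and complete: sufficiency by induction on the prefix sets $S_j$ and necessity by the greedy construction with the cut argument are exactly the standard argument. Note that the paper itself omits the proof of this lemma, stating only that it is trivial (and that it underlies standard connectivity-testing algorithms), so your write-up simply supplies the routine details the authors chose not to include; the only point worth keeping explicit, as you do, is that ``forms a single connected component'' here means connectedness of the induced subgraph on the given node set, which is how the lemma is used in bounding $g_{1}$.
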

The proof is trivial and can be omitted. 

Lemma \ref{lem:sufficient and necessary condition ordering} must
have been proved in the literature as it forms the basis of a widely
used algorithm to test network connectivity. However we are unable
to find it.

Using Lemma \ref{lem:sufficient and necessary condition ordering},
the following result can be established:
\begin{lemma}
\label{lem:inequality on g_1}Let $\Gamma_{k}$ denote the set $\left\{ 1,\ldots,k\right\} $.
The function $g_{1}\left(\boldsymbol{x}_{1},\boldsymbol{x}_{2},\ldots,\boldsymbol{x}_{k}\right)$
satisfies the following inequality\begin{eqnarray*}
 &  & g_{1}\left(\boldsymbol{x}_{1},\boldsymbol{x}_{2},\ldots,\boldsymbol{x}_{k}\right)\\
 & \leq & \sum_{i_{2}\in\Gamma_{k}\backslash\left\{ 1\right\} ,i_{3}\in\Gamma_{k}\backslash\left\{ 1,i_{2}\right\} ,\cdots,i_{k}\in\Gamma_{k}\backslash\left\{ 1,i_{2},\ldots,i_{k-1}\right\} }g_{2}\left(\boldsymbol{x}_{i_{2}};\boldsymbol{x}_{1}\right)g_{2}\left(\boldsymbol{x}_{i_{3}};\boldsymbol{x}_{1},\boldsymbol{x}_{i_{2}}\right)\cdots g_{2}\left(\boldsymbol{x}_{i_{k}};\boldsymbol{x}_{1},\boldsymbol{x}_{i_{2}},\ldots,\boldsymbol{x}_{i_{k-1}}\right)\end{eqnarray*}
\end{lemma}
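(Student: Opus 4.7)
The plan is to combine Lemma \ref{lem:sufficient and necessary condition ordering} with the independence of disjoint edge sets and a union bound. Fix the node locations $\boldsymbol{x}_1,\ldots,\boldsymbol{x}_k$ and let $E$ denote the event that these $k$ nodes form a single connected component, so $\Pr(E)=g_1(\boldsymbol{x}_1,\ldots,\boldsymbol{x}_k)$. Applying Lemma \ref{lem:sufficient and necessary condition ordering} with $\boldsymbol{x}_1$ chosen as the mandatory starting node, $E$ is contained in the union, over all orderings $(1,i_2,\ldots,i_k)$ indexed by the $(k-1)!$ tuples with $i_2\in\Gamma_k\setminus\{1\}$, $i_3\in\Gamma_k\setminus\{1,i_2\}$, and so on, of the event $E_{(i_2,\ldots,i_k)}$ that each $\boldsymbol{x}_{i_j}$ (for $j\ge 2$) is directly connected to at least one strictly earlier node in the ordering. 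The union bound then yields
\[
g_1(\boldsymbol{x}_1,\ldots,\boldsymbol{x}_k)\;\le\;\sum_{(i_2,\ldots,i_k)}\Pr\bigl(E_{(i_2,\ldots,i_k)}\bigr).
\]

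Next I would evaluate $\Pr\bigl(E_{(i_2,\ldots,i_k)}\bigr)$ exactly, for each fixed ordering, by decomposing it into stage events $F_j$ ($j=2,\ldots,k$), where $F_j$ is the event that $\boldsymbol{x}_{i_j}$ is directly connected to at least one of $\boldsymbol{x}_1,\boldsymbol{x}_{i_2},\ldots,\boldsymbol{x}_{i_{j-1}}$. The crucial observation is that $F_j$ depends only on the $(j-1)$ potential edges emanating from $\boldsymbol{x}_{i_j}$ to strictly earlier nodes, and these edge sets are pairwise disjoint as $j$ ranges over $\{2,\ldots,k\}$. Under the random connection model distinct pairs are independently connected, so $F_2,\ldots,F_k$ are mutually independent, and the defining formula \eqref{eq:node at non-random position connected - nonisolated} for $g_2$ gives
\[
\Pr\bigl(E_{(i_2,\ldots,i_k)}\bigr)\;=\;\prod_{j=2}^{k}\Pr(F_j)\;=\;\prod_{j=2}^{k}g_2\bigl(\boldsymbol{x}_{i_j};\boldsymbol{x}_1,\boldsymbol{x}_{i_2},\ldots,\boldsymbol{x}_{i_{j-1}}\bigr).
\]
Substituting this identity back into the union bound produces exactly the claimed inequality.

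The argument is otherwise routine; the only real subtlety is the disjointness of edge sets underlying the independence of the $F_j$'s. This hinges on the fact that the test performed at stage $j$ involves only edges with $\boldsymbol{x}_{i_j}$ as one endpoint and a strictly earlier node as the other, so different stages never reference the same potential edge. If one instead tried to bound $g_1$ by summing over unordered spanning trees on $\{\boldsymbol{x}_1,\ldots,\boldsymbol{x}_k\}$, the factors would involve overlapping edges and the clean product representation would be lost; the linear-order structure in Lemma \ref{lem:sufficient and necessary condition ordering} is therefore doing the essential work by avoiding such overlap and by fixing $\boldsymbol{x}_1$ as the anchor (which keeps the indexing set in the statement aligned with the union-bound enumeration).
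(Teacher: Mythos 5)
Your proposal is correct and follows essentially the same route as the paper: restrict Lemma \ref{lem:sufficient and necessary condition ordering} to orderings anchored at $\boldsymbol{x}_{1}$, write the event that the $k$ nodes form a component as (contained in) the union of the ordering events, apply the union bound, and evaluate each ordering event's probability as the product $g_{2}\left(\boldsymbol{x}_{i_{2}};\boldsymbol{x}_{1}\right)\cdots g_{2}\left(\boldsymbol{x}_{i_{k}};\boldsymbol{x}_{1},\boldsymbol{x}_{i_{2}},\ldots,\boldsymbol{x}_{i_{k-1}}\right)$. The only difference is that you spell out explicitly why this product formula holds (disjointness of the edge sets tested at each stage, hence independence under the random connection model), a step the paper asserts without elaboration, so your write-up is if anything slightly more complete.
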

\begin{proof}
Without loss of generality, we assume that such ordering described
in Lemma \ref{lem:sufficient and necessary condition ordering} starts
from $\boldsymbol{x}_{1}\in\left\{ \boldsymbol{x}_{1},\boldsymbol{x}_{2},\ldots,\boldsymbol{x}_{k}\right\} $.
Denote by $\xi_{\left(1,i_{2},\ldots,i_{k}\right)}$ the event that
$\left(\boldsymbol{x}_{1},\boldsymbol{x}_{i_{2}},\ldots,\boldsymbol{x}_{i_{k}}\right)$
is one of such an ordering described in Lemma \ref{lem:sufficient and necessary condition ordering},
where $i_{2}\in\Gamma_{k}\backslash\left\{ 1\right\} ,i_{3}\in\Gamma_{k}\backslash\left\{ 1,i_{2}\right\} ,\cdots,i_{k}\in\Gamma_{k}\backslash\left\{ 1,i_{2},\ldots,i_{k-1}\right\} $.
Using Lemma \ref{lem:sufficient and necessary condition ordering},
it can be shown that

\begin{eqnarray*}
\Pr\left(\xi_{\left(1,i_{2},\ldots,i_{k-1}\right)}\right) & = & g_{2}\left(\boldsymbol{x}_{i_{2}};\boldsymbol{x}_{1}\right)g_{2}\left(\boldsymbol{x}_{i_{3}};\boldsymbol{x}_{1},\boldsymbol{x}_{i_{2}}\right)\cdots g_{2}\left(\boldsymbol{x}_{i_{k}};\boldsymbol{x}_{1},\boldsymbol{x}_{i_{2}},\ldots,\boldsymbol{x}_{i_{k-1}}\right)\end{eqnarray*}

Then it follows that \begin{eqnarray*}
g_{1}\left(\boldsymbol{x}_{1},\boldsymbol{x}_{2},\ldots,\boldsymbol{x}_{k}\right) & = & \Pr\left(\cup_{i_{2}\in\Gamma_{k}\backslash\left\{ 1\right\} ,i_{3}\in\Gamma_{k}\backslash\left\{ 1,i_{2}\right\} ,\cdots,i_{k}\in\Gamma_{k}\backslash\left\{ 1,i_{2},\ldots,i_{k-1}\right\} }\xi_{\left(1,i_{2},\ldots,i_{k}\right)}\right)\end{eqnarray*}

As an easy consequence of the above equation and the union bound:

\begin{eqnarray*}
 &  & g_{1}\left(\boldsymbol{x}_{1},\boldsymbol{x}_{2},\ldots,\boldsymbol{x}_{k}\right)\\
 & \leq & \sum_{i_{2}\in\Gamma_{k}\backslash\left\{ 1\right\} ,i_{3}\in\Gamma_{k}\backslash\left\{ 1,i_{2}\right\} ,\cdots,i_{k}\in\Gamma_{k}\backslash\left\{ 1,i_{2},\ldots,i_{k-1}\right\} }g_{2}\left(\boldsymbol{x}_{i_{2}};\boldsymbol{x}_{1}\right)g_{2}\left(\boldsymbol{x}_{i_{3}};\boldsymbol{x}_{1},\boldsymbol{x}_{i_{2}}\right)\cdots g_{2}\left(\boldsymbol{x}_{i_{k}};\boldsymbol{x}_{1},\boldsymbol{x}_{i_{2}},\ldots,\boldsymbol{x}_{i_{k-1}}\right)\end{eqnarray*}

\end{proof}

The following geometric results are also used in the proof of Theorem
\ref{thm:varnish of finite components}.
\begin{lemma}
\label{lem:analysis of the intersectional area without border}Consider
two points $\boldsymbol{x}_{1},\boldsymbol{x}_{2}\in A_{\frac{1}{r_{\rho}}}$
and let $z\triangleq\left\Vert \boldsymbol{x}_{2}-\boldsymbol{x}_{1}\right\Vert $.
For a positive constant $c_{1}=\sqrt{3}r$ and $z\leq r$\[
\left|D\left(\boldsymbol{x}_{1},r\right)\backslash D\left(\boldsymbol{x}_{2},r\right)\right|\geq c_{1}z\]
where $\left|D\left(\boldsymbol{x}_{1},r\right)\backslash D\left(\boldsymbol{x}_{2},r\right)\right|$
denotes the area of $D\left(\boldsymbol{x}_{1},r\right)\backslash D\left(\boldsymbol{x}_{2},r\right)$.\end{lemma}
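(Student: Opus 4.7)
The plan is to reduce the claim to a one-variable inequality by exploiting the rigid-motion symmetry of the configuration. Without loss of generality I would place $\boldsymbol{x}_1$ at the origin and $\boldsymbol{x}_2$ at $(z,0)$, noting that $|D(\boldsymbol{x}_1,r)\setminus D(\boldsymbol{x}_2,r)|$ depends only on $z$ and $r$; in particular the containment of the two points in the square $A_{1/r_\rho}$ plays no role in this lemma, because the disks live in $\Re^2$. The classical lens formula for the intersection of two disks of equal radius $r$ whose centres are at distance $z$ gives
\[
|D(\boldsymbol{x}_1,r)\cap D(\boldsymbol{x}_2,r)| = 2r^2\arccos\!\left(\tfrac{z}{2r}\right)-\tfrac{z}{2}\sqrt{4r^2-z^2},
\]
from which, using $\pi/2-\arccos(u)=\arcsin(u)$,
\[
|D(\boldsymbol{x}_1,r)\setminus D(\boldsymbol{x}_2,r)| = 2r^2\!\left[\arcsin\!\left(\tfrac{z}{2r}\right)+\tfrac{z}{2r}\sqrt{1-\tfrac{z^2}{4r^2}}\right].
\]

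Substituting $t=z/(2r)\in[0,1/2]$, the desired bound $|D(\boldsymbol{x}_1,r)\setminus D(\boldsymbol{x}_2,r)|\geq\sqrt{3}\,rz$ becomes
\[
f(t):=\arcsin(t)+t\sqrt{1-t^2}-\sqrt{3}\,t\;\geq\;0\qquad\text{on }[0,1/2].
\]
Since $f(0)=0$, it suffices to verify $f'\geq 0$ on that interval. A direct differentiation, combining $(\arcsin)'(t)=1/\sqrt{1-t^2}$ with $(t\sqrt{1-t^2})'=(1-2t^2)/\sqrt{1-t^2}$, collapses $f'(t)$ to $2\sqrt{1-t^2}-\sqrt{3}$; this is non-negative for $t\in[0,1/2]$ and vanishes precisely at $t=1/2$, which corresponds to the boundary case $z=r$ of the hypothesis. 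Integrating then yields $f(t)\geq 0$ on $[0,1/2]$, completing the argument.

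I do not anticipate a real obstacle: once the closed-form expression for the crescent area is written down, the proof reduces to an elementary one-variable inequality whose derivative has been engineered to be non-negative precisely on the admissible range $z\leq r$. The constant $\sqrt{3}$ is sharp in the limit $z=r$, which matches the statement. A purely geometric lower bound (for example, intersecting $D(\boldsymbol{x}_1,r)\setminus D(\boldsymbol{x}_2,r)$ with the half-plane on the $\boldsymbol{x}_1$ side of the perpendicular bisector of $\boldsymbol{x}_1\boldsymbol{x}_2$) gives only a linear-in-$z$ bound with a strictly smaller coefficient, so the trigonometric calculation above is the natural route to the stated sharp constant.
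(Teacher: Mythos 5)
Your proof is correct and follows essentially the same route as the paper's: both write the crescent area in closed form (your $2r^2[\arcsin(t)+t\sqrt{1-t^2}]$ is the paper's $\pi r^2-2r^2\arcsin\bigl(\sqrt{1-z^2/(4r^2)}\bigr)+zr\sqrt{1-z^2/(4r^2)}$ in disguise), differentiate, observe the derivative dominates $\sqrt{3}\,r$ (equivalently $f'(t)\geq 0$) for $z\leq r$, and integrate from the value $0$ at $z=0$. The only difference is cosmetic normalization via $t=z/(2r)$.
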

\begin{proof}
First it can be shown that for $z\geq2r$\[
\left|D\left(\boldsymbol{x}_{1},r\right)\backslash D\left(\boldsymbol{x}_{2},r\right)\right|=\pi r^{2}\]
and for $z<2r$\begin{eqnarray*}
 &  & f\left(z\right)\\
 & \triangleq & \left|D\left(\boldsymbol{x}_{2},r\right)\backslash D\left(\boldsymbol{x}_{1},r\right)\right|\\
 & = & \pi r^{2}-2r^{2}\arcsin\left(\sqrt{1-\frac{z^{2}}{4r^{2}}}\right)+zr\sqrt{1-\frac{z^{2}}{4r^{2}}}\end{eqnarray*}
Further, it can be shown that\[
\frac{df\left(z\right)}{dz}=2r\sqrt{1-\frac{z^{2}}{4r^{2}}}\]
Therefore $f\left(z\right)$ is an increasing function of $z$ for
$z<2r$ and $\frac{df\left(z\right)}{dz}\geq\sqrt{3}r$ for $z\leq r$.
It then follows from $f\left(0\right)=0$ that $f\left(z\right)\geq\sqrt{3}rz$
for $z\leq r$. \end{proof}
\begin{lemma}
\label{lem:analysis of the intersectional area boundary case}Consider
two points $\boldsymbol{x}_{1}\in\ell A_{\frac{1}{r_{\rho}}}$ and
$\boldsymbol{x}_{2}\in A_{\frac{1}{r_{\rho}}}\cap D\left(\boldsymbol{x}_{1},r\right)$
and let $z\triangleq\left\Vert \boldsymbol{x}_{2}-\boldsymbol{x}_{1}\right\Vert $.
When $\gamma\left(\boldsymbol{x}_{2}\right)\leq\gamma\left(\boldsymbol{x}_{1}\right)$,
\[
\left|A_{\frac{1}{r_{\rho}}}\cap D\left(\boldsymbol{x}_{1},r\right)\backslash D\left(\boldsymbol{x}_{2},r\right)\right|\geq\frac{c_{1}}{2}z\]
When $\gamma\left(\boldsymbol{x}_{2}\right)>\gamma\left(\boldsymbol{x}_{1}\right)$,
for any positive constant $c_{2}$, there exists a positive constant
$z_{0}<r$ such that for all $z\leq z_{0}$

\[
\left|A_{\frac{1}{r_{\rho}}}\cap D\left(\boldsymbol{x}_{1},r\right)\backslash D\left(\boldsymbol{x}_{2},r\right)\right|\geq\left(r-c_{2}\right)z-r\times\left|\gamma\left(\boldsymbol{x}_{2}\right)-\gamma\left(\boldsymbol{x}_{1}\right)\right|\]
where $\gamma\left(\boldsymbol{x}_{1}\right)$ ($\gamma\left(\boldsymbol{x}_{2}\right)$)
represents the shortest Euclidean distance between $\boldsymbol{x}_{1}$
($\boldsymbol{x}_{2}$) and a border of $A_{\frac{1}{r_{\rho}}}$
that is adjacent to $\ell A_{\frac{1}{r_{\rho}}}$ (i.e. $\partial A_{\frac{1}{r_{\rho}}}\cap\ell A_{\frac{1}{r_{\rho}}}$,
see Fig. \ref{fig:An-Illustration-of-the-intersectional-area} for
an illustration of $\gamma\left(\boldsymbol{x}_{2}\right)$ where
$\gamma\left(\boldsymbol{x}_{1}\right)=0$ in the figure).\end{lemma}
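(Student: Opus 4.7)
The plan is to set up local half-plane coordinates, then handle the two cases separately: case 1 by a reflection argument and case 2 by a small-$z$ polar expansion.

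First I would fix coordinates so that the border of $A_{\frac{1}{r_{\rho}}}$ adjacent to $\ell A_{\frac{1}{r_{\rho}}}$ lies on the $x$-axis with $A_{\frac{1}{r_{\rho}}}$ above it, and set $\boldsymbol{x}_1=(0,h_1)$, $\boldsymbol{x}_2=(a,h_2)$ with $h_1=\gamma(\boldsymbol{x}_1)\in[0,r]$, $h_2=\gamma(\boldsymbol{x}_2)\geq 0$, and $a^2+(h_2-h_1)^2=z^2$. Since $\boldsymbol{x}_1\in \ell A_{\frac{1}{r_{\rho}}}$ is at distance at least $r$ from every corner of $A_{\frac{1}{r_{\rho}}}$, the disk $D(\boldsymbol{x}_1,r)$ is clipped only by this single edge, so $A_{\frac{1}{r_{\rho}}}\cap D(\boldsymbol{x}_1,r)\backslash D(\boldsymbol{x}_2,r)=\{y\geq 0\}\cap D_1\backslash D_2$, writing $D_i=D(\boldsymbol{x}_i,r)$.

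For case 1 ($h_2\leq h_1$), let $\sigma$ denote reflection in the horizontal line $y=h_1$; it preserves $D_1$. For $p=(p_x,p_y)\in D_1\backslash D_2$ with $p_y<h_1$, its image $p'=(p_x,2h_1-p_y)$ lies in $D_1$, and the identity $\|p'-\boldsymbol{x}_2\|^2-\|p-\boldsymbol{x}_2\|^2=4(h_1-h_2)(h_1-p_y)$ together with $h_1\geq h_2$ and $p_y<h_1$ gives $\|p'-\boldsymbol{x}_2\|\geq\|p-\boldsymbol{x}_2\|>r$, so $p'\in D_1\backslash D_2$ too. Hence $\sigma$ is area-preserving and injects the part of the lune below $y=h_1$ into the part above. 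The upper half therefore has at least half of $|D_1\backslash D_2|$, and since $\{y\geq h_1\}\subseteq A_{\frac{1}{r_{\rho}}}$ (because $h_1\geq 0$), Lemma \ref{lem:analysis of the intersectional area without border} delivers $|A_{\frac{1}{r_{\rho}}}\cap D_1\backslash D_2|\geq \frac{1}{2}|D_1\backslash D_2|\geq \frac{c_1}{2}z$.

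For case 2 ($h_2>h_1$), I would write points of $D_1$ in polar form $\boldsymbol{x}_1+\rho(\cos\phi,\sin\phi)$ with $\rho\in[0,r]$, and let $\theta$ be the angle of $\boldsymbol{x}_2-\boldsymbol{x}_1$, so $\sin\theta=(h_2-h_1)/z>0$. Solving $\|p-\boldsymbol{x}_2\|>r$ shows that the lune consists of $\rho\in(\rho_+(\phi),r]$ with $\rho_+(\phi)=z\cos(\phi-\theta)+\sqrt{r^2-z^2\sin^2(\phi-\theta)}$, for the angular half-band $\phi-\theta\in[\arccos(z/(2r)),2\pi-\arccos(z/(2r))]$; a Taylor expansion in $z$ gives radial thickness $r-\rho_+(\phi)=-z\cos(\phi-\theta)+O(z^2)$ and area element $\frac{1}{2}(r^2-\rho_+^2)\,d\phi = r(r-\rho_+)\,d\phi+O(z^2)\,d\phi$. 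The restriction $p_y\geq 0$ forces $\sin\phi\geq -\sin\alpha$ with $\alpha:=\arcsin(h_1/r)$, i.e.\ $\phi$ must avoid the cut-off arc $(\pi+\alpha,2\pi-\alpha)$. Integrating $-rz\cos(\phi-\theta)$ over the intersection of the lune range and the admissible range, and splitting into the sub-cases $\alpha+\theta\leq \pi/2$ and $\alpha+\theta>\pi/2$, elementary trigonometry yields leading-order areas $rz(1-\sin(\theta-\alpha))$ and $2rz(1-\sin\theta\cos\alpha)$ respectively, both of which reduce via $\sin(\theta-\alpha)\leq\sin\theta$ and $1+\sin\theta(1-2\cos\alpha)\geq 0$ to $\geq rz(1-\sin\theta)=rz-r(h_2-h_1)$.

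The main obstacle is this two-fold case split together with controlling the $O(z^2)$ error from the thickness expansion and the Jacobian, including the small angular strips of width $O(z/r)$ near $\phi-\theta=\pm \pi/2$ where $\rho_+(\phi)$ crosses $r$. All these errors are bounded by a constant (depending only on $r$) times $z^2$, uniformly in $\theta$, $\alpha$ and $h_1$, so for any given $c_2>0$ one can choose $z_0<r$ small enough that the total correction is at most $c_2 z$ whenever $z\leq z_0$, yielding the claimed bound $(r-c_2)z-r(h_2-h_1)$.
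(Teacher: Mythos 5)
Your proof is correct, and it diverges from the paper's in instructive ways. For the first case the paper simply asserts that $\left|A_{\frac{1}{r_{\rho}}}\cap D\left(\boldsymbol{x}_{1},r\right)\backslash D\left(\boldsymbol{x}_{2},r\right)\right|\geq\frac{1}{2}\left|D\left(\boldsymbol{x}_{1},r\right)\backslash D\left(\boldsymbol{x}_{2},r\right)\right|$ and invokes Lemma \ref{lem:analysis of the intersectional area without border}; your reflection about the line $y=h_{1}$, with the identity $\left\Vert p'-\boldsymbol{x}_{2}\right\Vert ^{2}-\left\Vert p-\boldsymbol{x}_{2}\right\Vert ^{2}=4\left(h_{1}-h_{2}\right)\left(h_{1}-p_{y}\right)$, is a clean rigorous justification of exactly that halving step, so the two arguments coincide in substance. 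For the second case the routes genuinely differ: the paper first reduces to the worst case $\gamma\left(\boldsymbol{x}_{1}\right)=0$ by a translation-monotonicity observation, writes down the exact area $h\left(z,\alpha\right)$ of the sub-region $A_{1}$, and uses $h\left(0,\alpha\right)=0$ together with $\left.\partial h/\partial z\right|_{z=0}=r\left(1-\cos\alpha\right)$ and a minimization over $\alpha$ to get the linear-in-$z$ bound; you instead keep general $h_{1}\geq0$ and compute the leading order of the clipped lune directly in polar coordinates, splitting on whether $\alpha+\theta\leq\pi/2$, and your leading terms $rz\left(1-\sin\left(\theta-\alpha\right)\right)$ and $2rz\left(1-\sin\theta\cos\alpha\right)$ do both dominate $rz\left(1-\sin\theta\right)=rz-r\left(h_{2}-h_{1}\right)$, which is the required expression. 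Your route avoids the paper's somewhat glossed reduction to $h_{1}=0$ and its exact geometric formula, at the price of a trigonometric case analysis and an asserted uniformity of the $O\left(z^{2}\right)$ error (the analogue in the paper is the unargued positivity of $z_{0}=\min_{\alpha}z_{\alpha}$). Two small points you should make explicit to close the sketch: (i) the reduction, by reflecting in the vertical line through $\boldsymbol{x}_{1}$ (legitimate once the region is identified with the half-plane), to $\cos\theta\geq0$, which your sub-case split tacitly assumes; and (ii) the uniform error bound, which in fact can be made exact rather than asymptotic, since $r-\rho_{+}\left(\phi\right)\geq-z\cos\left(\phi-\theta\right)$ and $r+\rho_{+}\left(\phi\right)\geq2\left(r-z\right)$ give the clipped-lune area at least $\left(r-z\right)z\left(1-\sin\theta\right)$, so any $z_{0}\leq\min\left\{ c_{2},r/2\right\} $ suffices; also note that replacing the constraint $\sin\phi\geq-h_{1}/\rho$ by $\sin\phi\geq-h_{1}/r$ only shrinks the region, so it is safe for a lower bound.
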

\begin{proof}
The first part of the lemma can be easily proved by noting that when
$\gamma\left(\boldsymbol{x}_{2}\right)\leq\gamma\left(\boldsymbol{x}_{1}\right)$
\[
\left|A_{\frac{1}{r_{\rho}}}\cap D\left(\boldsymbol{x}_{1},r\right)\backslash D\left(\boldsymbol{x}_{2},r\right)\right|\geq\frac{1}{2}\left|D\left(\boldsymbol{x}_{1},r\right)\backslash D\left(\boldsymbol{x}_{2},r\right)\right|\]
and the lemma can then be proved using Lemma \ref{lem:analysis of the intersectional area without border}. 

Now let us focus on the situation when $\gamma\left(\boldsymbol{x}_{2}\right)>\gamma\left(\boldsymbol{x}_{1}\right)$.
It can be easily shown (see also Fig. \ref{fig:An-Illustration-of-the-intersectional-area})
that when changing the value of $\gamma\left(\boldsymbol{x}_{1}\right)$
while keeping $\boldsymbol{x}_{2}-\boldsymbol{x}_{1}$ \emph{fixed}
(i.e. $\boldsymbol{x}_{2}$ has the same displacement as $\boldsymbol{x}_{1}$),
$\left|A_{\frac{1}{r_{\rho}}}\cap D\left(\boldsymbol{x}_{1},r\right)\backslash D\left(\boldsymbol{x}_{2},r\right)\right|$
is minimized as $\gamma\left(\boldsymbol{x}_{1}\right)=0$ (i.e. $\boldsymbol{x}_{1}\in\ell A_{\frac{1}{r_{\rho}}}\cap\partial A_{\frac{1}{r_{\rho}}}$)
and $\left(r-c_{2}\right)z-r\times\left|\gamma\left(\boldsymbol{x}_{2}\right)-\gamma\left(\boldsymbol{x}_{1}\right)\right|$
remains constant. Therefore we focus on the worst case when $\boldsymbol{x}_{1}\in\ell A_{\frac{1}{r_{\rho}}}\cap\partial A_{\frac{1}{r_{\rho}}}$.
When $\boldsymbol{x}_{1}\in\ell A_{\frac{1}{r_{\rho}}}\cap\partial A_{\frac{1}{r_{\rho}}}$,
$\left|\gamma\left(\boldsymbol{x}_{2}\right)-\gamma\left(\boldsymbol{x}_{1}\right)\right|=\gamma\left(\boldsymbol{x}_{2}\right)$.

Fig. \ref{fig:An-Illustration-of-the-intersectional-area} shows $A_{\frac{1}{r_{\rho}}}\cap D\left(\boldsymbol{x}_{1},r\right)\backslash D\left(\boldsymbol{x}_{2},r\right)$
for $\boldsymbol{x}_{1}\in\ell A_{\frac{1}{r_{\rho}}}\cap\partial A_{\frac{1}{r_{\rho}}}$
and $\boldsymbol{x}_{2}\in A_{\frac{1}{r_{\rho}}}\cap D\left(\boldsymbol{x}_{1},r\right)$.
It can be shown that under the above conditions for $\boldsymbol{x}_{1}$
and $\boldsymbol{x}_{2}$ (see Fig. \ref{fig:An-Illustration-of-the-intersectional-area}
for definitions of $\alpha$ and $A_{1}$ and some detailed but straightfoward
geometric analysis omitted in the following equation) 

\begin{eqnarray*}
 &  & \left|A_{\rho}\cap D\left(\boldsymbol{x}_{1},r\right)\backslash D\left(\boldsymbol{x}_{2},r\right)\right|\\
 & \geq & \left|A_{1}\right|\\
 & = & h\left(z,\alpha\right)\\
 & \triangleq & \frac{\pi r^{2}}{4}-r^{2}\arccos\frac{z}{2r}+\frac{1}{2}zr\sqrt{1-\frac{z^{2}}{4r^{2}}}+\frac{r^{2}}{2}\arccos\frac{z\cos\alpha}{r}-\frac{1}{2}zr\sqrt{1-\frac{z^{2}}{r^{2}}\cos^{2}\alpha}\cos\alpha+\frac{1}{2}z^{2}\sin\alpha\cos\alpha\end{eqnarray*}
Note that $h\left(0,\alpha\right)=0$, \[
\left.\frac{\partial h\left(z,\alpha\right)}{\partial z}\right|_{z=0}=r\left(1-\cos\alpha\right)\]
and $\cos\alpha=\frac{\gamma\left(\boldsymbol{x}_{2}\right)}{z}$.
Therefore \begin{eqnarray*}
 &  & \lim_{z\rightarrow0^{+}}\frac{h\left(z,\alpha\right)-h\left(0,\alpha\right)}{z}=r\left(1-\cos\alpha\right)\end{eqnarray*}
i.e. for a given positive constant $c_{2}$, there exists $z_{\alpha}>0$
depending on $\alpha$ such that for all $0\leq z\leq z_{\alpha}$\[
h\left(z,\alpha\right)\geq\left(r\left(1-\cos\alpha\right)-c_{2}\right)z\]
The proof is complete by choosing $z_{0}=\min_{0\leq\alpha\leq\frac{\pi}{2}}z_{\alpha}$
and using $\cos\alpha=\frac{\gamma\left(\boldsymbol{x}_{2}\right)}{z}$.

\begin{figure}
\begin{centering}
\includegraphics[width=0.4\columnwidth]{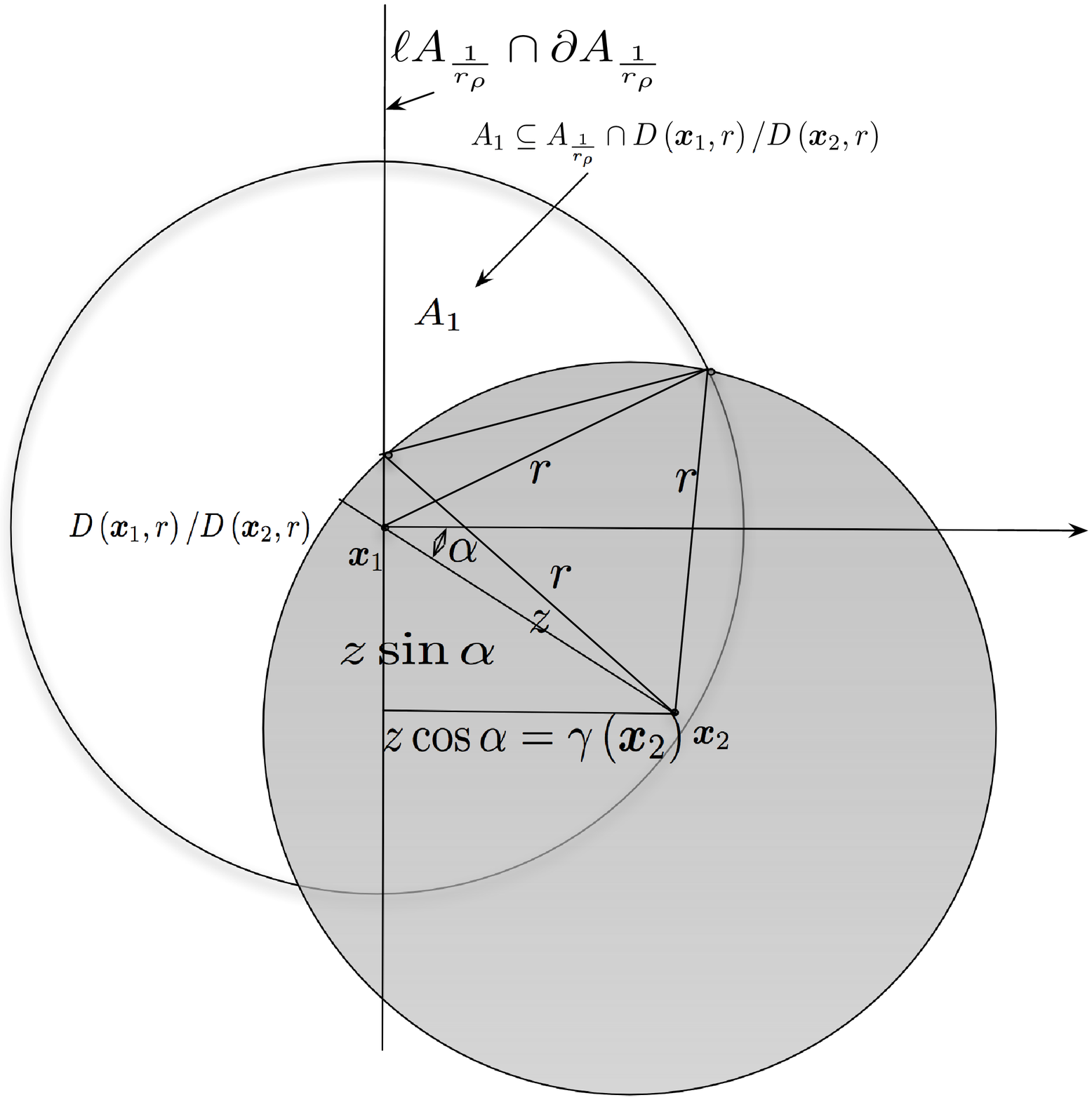}
\par\end{centering}

\caption{\label{fig:An-Illustration-of-the-intersectional-area}An illustration
of $\left|A_{\frac{1}{r_{\rho}}}\cap D\left(\boldsymbol{x}_{2},r\right)/D\left(\boldsymbol{x}_{1},r\right)\right|$
for $\boldsymbol{x}_{1}\in\ell A_{\frac{1}{r_{\rho}}}\cap\partial A_{\frac{1}{r_{\rho}}}$
and $\boldsymbol{x}_{2}\in A_{\frac{1}{r_{\rho}}}\cap D\left(\boldsymbol{x}_{1},r\right)$.
Note that $A_{1}$ is the upper part of $A_{\rho}\cap D\left(\boldsymbol{x}_{1},r\right)\backslash D\left(\boldsymbol{x}_{2},r\right)$
above the line connecting $\boldsymbol{x}_{1}$ and $\boldsymbol{x}_{2}$.
Depending on the relative positions of $\boldsymbol{x}_{1}$ and $\boldsymbol{x}_{2}$,
$A_{\rho}\cap D\left(\boldsymbol{x}_{1},r\right)\backslash D\left(\boldsymbol{x}_{2},r\right)$
may also contain a non-empty region below the line connecting $\boldsymbol{x}_{1}$
and $\boldsymbol{x}_{2}$.}

\end{figure}

\end{proof}
On the basis of the above preliminary results, we are now ready to
start the proof of Theorem \ref{thm:varnish of finite components}.

Let $\delta$ be a positive constant and $\delta\leq\frac{r}{2}$.
First, as a consequence of Lemma \ref{lem:expected number of components of size k},
it can be shown that \begin{eqnarray}
 &  & E\left(\xi_{k}\right)\nonumber \\
 & = & \frac{\lambda^{k}}{k!}\int_{\left(A_{\frac{1}{r_{\rho}}}\right)^{k-1}}\int_{A_{\frac{1}{r_{\rho}}}}g_{1}\left(\boldsymbol{x}_{1},\boldsymbol{x}_{2},\ldots,\boldsymbol{x}_{k}\right)e^{-\lambda\int_{A_{\frac{1}{r_{\rho}}}}g_{2}\left(\boldsymbol{y};\boldsymbol{x}_{1},\boldsymbol{x}_{2},\ldots,\boldsymbol{x}_{k}\right)d\boldsymbol{y}}d\boldsymbol{x}_{1}d\left(\boldsymbol{x}_{2}\cdots\boldsymbol{x}_{k}\right)\nonumber \\
 & = & \frac{\lambda^{k}}{k!}\int_{A_{\frac{1}{r_{\rho}}}}\int_{\left(A_{\frac{1}{r_{\rho}}}\right)^{k-1}\cap\left(D\left(\boldsymbol{x}_{1},\delta\right)\right)^{k-1}}g_{1}\left(\boldsymbol{x}_{1},\boldsymbol{x}_{2},\ldots,\boldsymbol{x}_{k}\right)e^{-\lambda\int_{A_{\frac{1}{r_{\rho}}}}g_{2}\left(\boldsymbol{y};\boldsymbol{x}_{1},\boldsymbol{x}_{2},\ldots,\boldsymbol{x}_{k}\right)d\boldsymbol{y}}d\left(\boldsymbol{x}_{2}\cdots\boldsymbol{x}_{k}\right)d\boldsymbol{x}_{1}\nonumber \\
 & + & \frac{\lambda^{k}}{k!}\int_{A_{\frac{1}{r_{\rho}}}}\int_{\left(A_{\frac{1}{r_{\rho}}}\right)^{k-1}\backslash\left(D\left(\boldsymbol{x}_{1},\delta\right)\right)^{k-1}}g_{1}\left(\boldsymbol{x}_{1},\boldsymbol{x}_{2},\ldots,\boldsymbol{x}_{k}\right)e^{-\lambda\int_{A_{\frac{1}{r_{\rho}}}}g_{2}\left(\boldsymbol{y};\boldsymbol{x}_{1},\boldsymbol{x}_{2},\ldots,\boldsymbol{x}_{k}\right)d\boldsymbol{y}}d\left(\boldsymbol{x}_{2}\cdots\boldsymbol{x}_{k}\right)d\boldsymbol{x}_{1}\label{eq:analysis on the expected number of components size k step 1}\end{eqnarray}

Denote by $E\left(\xi_{k,1}\right)$ and $E\left(\xi_{k,2}\right)$
the two summands in \eqref{eq:analysis on the expected number of components size k step 1}
respectively. In the following analysis, we will show that by choosing
$\delta$ to be sufficiently small, $\lim_{\rho\rightarrow\infty}\sum_{k=2}^{\infty}E\left(\xi_{k.1}\right)=0$
and $\lim_{\rho\rightarrow\infty}\sum_{k=2}^{M}E\left(\xi_{k,2}\right)=0$.
$\sum_{k=2}^{\infty}E\left(\xi_{k.1}\right)$ has the meaning of being
the expected total number of components of finite orders $\infty>k>1$,
where all other nodes of the component are located within a $\delta$
neighbourhood of a randomly designated node (i.e. $\boldsymbol{x}_{1}$
in \eqref{eq:analysis on the expected number of components size k step 1}).
$\lim_{\rho\rightarrow\infty}\sum_{k=2}^{\infty}E\left(\xi_{k.1}\right)=0$
implies $\lim_{\rho\rightarrow\infty}\sum_{k=2}^{M}E\left(\xi_{k.1}\right)=0$.
$\sum_{k=2}^{M}E\left(\xi_{k,2}\right)=0$ has the meaning of being
the expected total number of components of finite orders $M\geq k>1$
where at least one of the nodes forming the component is located outside
a $\delta$ neighbourhood of a randomly designated node (i.e. $\boldsymbol{x}_{1}$
in \eqref{eq:analysis on the expected number of components size k step 1})
in the component.

\subsection*{An analysis of the first term in \eqref{eq:analysis on the expected number of components size k step 1}}

Denote by $D_{\delta}^{i}\subset\left(A_{\frac{1}{r_{\rho}}}\right)^{k-1}$
the set $\left\{ \left(\boldsymbol{x}_{2},\ldots,\boldsymbol{x}_{k}\right)\in\left(A_{\frac{1}{r_{\rho}}}\right)^{k-1}\cap\left(D\left(\boldsymbol{x}_{1},\delta\right)\right)^{k-1}:\left\Vert \boldsymbol{x}_{i}-\boldsymbol{x}_{1}\right\Vert \geq\max_{j\in\left\{ 2,\ldots k\right\} ,j\neq i}\left\Vert \boldsymbol{x}_{j}-\boldsymbol{x}_{1}\right\Vert \right\} $,
$i\in\left\{ 2,\ldots k\right\} $. Using \eqref{eq:inequality for g_2}
and the definition of $D_{\delta}^{i}$, it can be shown that

\begin{eqnarray}
 &  & E\left(\xi_{k,1}\right)\nonumber \\
 & \triangleq & \frac{\lambda^{k}}{k!}\int_{A_{\rho}}\int_{\left(A_{\frac{1}{r_{\rho}}}\right)^{k-1}\cap\left(D\left(\boldsymbol{x}_{1},\delta\right)\right)^{k-1}}g_{1}\left(\boldsymbol{x}_{1},\boldsymbol{x}_{2},\ldots,\boldsymbol{x}_{k}\right)e^{-\lambda\int_{A_{\frac{1}{r_{\rho}}}}g_{2}\left(\boldsymbol{y};\boldsymbol{x}_{1},\boldsymbol{x}_{2},\ldots,\boldsymbol{x}_{k}\right)d\boldsymbol{y}}d\left(\boldsymbol{x}_{2}\cdots\boldsymbol{x}_{k}\right)d\boldsymbol{x}_{1}\nonumber \\
 & = & \sum_{i=2}^{k}\frac{\lambda^{k}}{k!}\int_{A_{\frac{1}{r_{\rho}}}}\int_{D_{\delta}^{i}}g_{1}\left(\boldsymbol{x}_{1},\boldsymbol{x}_{2},\ldots,\boldsymbol{x}_{k}\right)e^{-\lambda\int_{A_{\frac{1}{r_{\rho}}}}g_{2}\left(\boldsymbol{y};\boldsymbol{x}_{1},\boldsymbol{x}_{2},\ldots,\boldsymbol{x}_{k}\right)d\boldsymbol{y}}d\left(\boldsymbol{x}_{2}\cdots\boldsymbol{x}_{k}\right)d\boldsymbol{x}_{1}\nonumber \\
 & \leq & \frac{\lambda^{k}}{\left(k-2\right)!k}\int_{A_{\frac{1}{r_{\rho}}}}\int_{D_{\delta}^{2}}g_{1}\left(\boldsymbol{x}_{1},\boldsymbol{x}_{2},\ldots,\boldsymbol{x}_{k}\right)e^{-\lambda\int_{A_{\frac{1}{r_{\rho}}}}g_{2}\left(\boldsymbol{y};\boldsymbol{x}_{1},\boldsymbol{x}_{2}\right)d\boldsymbol{y}}d\left(\boldsymbol{x}_{2}\cdots\boldsymbol{x}_{k}\right)d\boldsymbol{x}_{1}\nonumber \\
 & \leq & \frac{\lambda^{k}}{\left(k-2\right)!k}\int_{A_{\frac{1}{r_{\rho}}}}\int_{D_{\delta}^{2}}e^{-\lambda\int_{A_{\frac{1}{r_{\rho}}}}g_{2}\left(\boldsymbol{y};\boldsymbol{x}_{1},\boldsymbol{x}_{2}\right)d\boldsymbol{y}}d\left(\boldsymbol{x}_{2}\cdots\boldsymbol{x}_{k}\right)d\boldsymbol{x}_{1}\nonumber \\
 & \leq & \frac{\lambda^{k}}{\left(k-2\right)!k}\int_{A_{\frac{1}{r_{\rho}}}}\int_{A_{\frac{1}{r_{\rho}}}\cap D\left(\boldsymbol{x}_{1},\delta\right)}\left(\pi\left\Vert \boldsymbol{x}_{2}-\boldsymbol{x}_{1}\right\Vert ^{2}\right)^{k-2}e^{-\lambda\int_{A_{\frac{1}{r_{\rho}}}}g_{2}\left(\boldsymbol{y};\boldsymbol{x}_{1},\boldsymbol{x}_{2}\right)d\boldsymbol{y}}d\boldsymbol{x}_{2}d\boldsymbol{x}_{1}\label{eq:analysis on the expected number of components size k step 2 finite area}\end{eqnarray}
As a result of the following inequality: \begin{eqnarray*}
 &  & \sum_{k=2}^{\infty}\frac{\lambda^{k}\left(\pi\left\Vert \boldsymbol{x}_{2}-\boldsymbol{x}_{1}\right\Vert ^{2}\right)^{k-2}}{\left(k-2\right)!k}\\
 & = & \lambda^{2}\left(\sum_{k=0}^{\infty}\frac{\lambda^{k}\left(\pi\left\Vert \boldsymbol{x}_{2}-\boldsymbol{x}_{1}\right\Vert ^{2}\right)^{k}}{k!\left(k+2\right)}\right)\\
 & \leq & \lambda^{2}\left(\sum_{k=0}^{\infty}\frac{\lambda^{k}\left(\pi\left\Vert \boldsymbol{x}_{2}-\boldsymbol{x}_{1}\right\Vert ^{2}\right)^{k}}{k!}e^{-\lambda\pi\left\Vert \boldsymbol{x}_{2}-\boldsymbol{x}_{1}\right\Vert ^{2}}\right)e^{\lambda\pi\left\Vert \boldsymbol{x}_{2}-\boldsymbol{x}_{1}\right\Vert ^{2}}\\
 & = & \lambda^{2}e^{\lambda\pi\left\Vert \boldsymbol{x}_{2}-\boldsymbol{x}_{1}\right\Vert ^{2}}\end{eqnarray*}
it follows from \eqref{eq:analysis on the expected number of components size k step 2 finite area}
that \begin{eqnarray}
 &  & \sum_{k=2}^{\infty}E\left(\xi_{k,1}\right)\nonumber \\
 & \leq & \lambda^{2}\int_{A_{\frac{1}{r_{\rho}}}}\int_{A_{\frac{1}{r_{\rho}}}\cap D\left(\boldsymbol{x}_{1},\delta\right)}e^{-\lambda\left(\int_{A_{\frac{1}{r_{\rho}}}}g_{2}\left(\boldsymbol{y};\boldsymbol{x}_{1},\boldsymbol{x}_{2}\right)d\boldsymbol{y}-\pi\left\Vert \boldsymbol{x}_{2}-\boldsymbol{x}_{1}\right\Vert ^{2}\right)}d\boldsymbol{x}_{2}d\boldsymbol{x}_{1}\nonumber \\
 & = & \lambda^{2}\int_{A_{\frac{1}{r_{\rho}}}}\int_{A_{\frac{1}{r_{\rho}}}\cap D\left(\boldsymbol{x}_{1},\delta\right)}e^{-\lambda\left(\int_{A_{\frac{1}{r_{\rho}}}}g\left(\left\Vert \boldsymbol{y}-\boldsymbol{x}_{2}\right\Vert \right)d\boldsymbol{y}+\int_{A_{\frac{1}{r_{\rho}}}}g\left(\left\Vert \boldsymbol{y}-\boldsymbol{x}_{1}\right\Vert \right)\left(1-g\left(\left\Vert \boldsymbol{y}-\boldsymbol{x}_{2}\right\Vert \right)\right)d\boldsymbol{y}-\pi\left\Vert \boldsymbol{x}_{2}-\boldsymbol{x}_{1}\right\Vert ^{2}\right)}d\boldsymbol{x}_{2}d\boldsymbol{x}_{1}\nonumber \\
 & \leq & \lambda^{2}\int_{A_{\frac{1}{r_{\rho}}}}\int_{A_{\frac{1}{r_{\rho}}}\cap D\left(\boldsymbol{x}_{1},\delta\right)}e^{-\lambda\left(\int_{A_{\frac{1}{r_{\rho}}}}g\left(\left\Vert \boldsymbol{y}-\boldsymbol{x}_{2}\right\Vert \right)d\boldsymbol{y}+\int_{A_{\frac{1}{r_{\rho}}}\cap D\left(\boldsymbol{x}_{1},r\right)\backslash D\left(\boldsymbol{x}_{2},r\right)}g\left(\left\Vert \boldsymbol{y}-\boldsymbol{x}_{1}\right\Vert \right)\left(1-g\left(\left\Vert \boldsymbol{y}-\boldsymbol{x}_{2}\right\Vert \right)\right)d\boldsymbol{y}-\pi\left\Vert \boldsymbol{x}_{2}-\boldsymbol{x}_{1}\right\Vert ^{2}\right)}d\boldsymbol{x}_{2}d\boldsymbol{x}_{1}\label{eq:number of finite components in finite area step 1}\\
 & \leq & \lambda^{2}\int_{A_{\frac{1}{r_{\rho}}}}\int_{A_{\frac{1}{r_{\rho}}}\cap D\left(\boldsymbol{x}_{1},\delta\right)}e^{-\lambda\left(\int_{A_{\frac{1}{r_{\rho}}}}g\left(\left\Vert \boldsymbol{y}-\boldsymbol{x}_{2}\right\Vert \right)d\boldsymbol{y}+g\left(r^{-}\right)\left(1-g\left(r^{+}\right)\right)\left|A_{\frac{1}{r_{\rho}}}\cap D\left(\boldsymbol{x}_{1},r\right)\backslash D\left(\boldsymbol{x}_{2},r\right)\right|-\pi\left\Vert \boldsymbol{x}_{2}-\boldsymbol{x}_{1}\right\Vert ^{2}\right)}d\boldsymbol{x}_{2}d\boldsymbol{x}_{1}\label{eq:number of finite components in finite area step 2}\end{eqnarray}
where in \eqref{eq:number of finite components in finite area step 1}
the parameter $r>0$ is chosen such that $g\left(r^{-}\right)\left(1-g\left(r^{+}\right)\right)>0$.
For convenience, use $\beta$ for $g\left(r^{-}\right)\left(1-g\left(r^{+}\right)\right)$
as defined in \eqref{eq:definition of beta}. It follows from \eqref{eq:number of finite components in finite area step 2}
that

\begin{eqnarray}
 &  & \sum_{k=2}^{\infty}E\left(\xi_{k,1}\right)\nonumber \\
 & \leq & \lambda^{2}\int_{B_{r}\left(A_{\frac{1}{r_{\rho}}}\right)}\int_{A_{\frac{1}{r_{\rho}}}\cap D\left(\boldsymbol{x}_{1},\delta\right)}e^{-\lambda\left(\int_{A_{\frac{1}{r_{\rho}}}}g\left(\left\Vert \boldsymbol{y}-\boldsymbol{x}_{2}\right\Vert \right)d\boldsymbol{y}+\beta\left|A_{\frac{1}{r_{\rho}}}\cap D\left(\boldsymbol{x}_{1},r\right)\backslash D\left(\boldsymbol{x}_{2},r\right)\right|-\pi\left\Vert \boldsymbol{x}_{2}-\boldsymbol{x}_{1}\right\Vert ^{2}\right)}d\boldsymbol{x}_{2}d\boldsymbol{x}_{1}\nonumber \\
 & + & \lambda^{2}\int_{A_{\frac{1}{r_{\rho}}}\backslash B_{r}\left(A_{\frac{1}{r_{\rho}}}\right)}\int_{A_{\frac{1}{r_{\rho}}}\cap D\left(\boldsymbol{x}_{1},\delta\right)}e^{-\lambda\left(\int_{A_{\frac{1}{r_{\rho}}}}g\left(\left\Vert \boldsymbol{y}-\boldsymbol{x}_{2}\right\Vert \right)d\boldsymbol{y}+\beta\left|A_{\frac{1}{r_{\rho}}}\cap D\left(\boldsymbol{x}_{1},r\right)\backslash D\left(\boldsymbol{x}_{2},r\right)\right|-\pi\left\Vert \boldsymbol{x}_{2}-\boldsymbol{x}_{1}\right\Vert ^{2}\right)}d\boldsymbol{x}_{2}d\boldsymbol{x}_{1}\label{eq:number of finite components in finite area step 3}\end{eqnarray}

For the first summand in the above equation, it can be shown that
\begin{eqnarray}
 &  & \lambda^{2}\int_{B_{r}\left(A_{\frac{1}{r_{\rho}}}\right)}\int_{A_{\frac{1}{r_{\rho}}}\cap D\left(\boldsymbol{x}_{1},\delta\right)}e^{-\lambda\left(\int_{A_{\frac{1}{r_{\rho}}}}g\left(\left\Vert \boldsymbol{y}-\boldsymbol{x}_{2}\right\Vert \right)d\boldsymbol{y}+\beta\left|A_{\frac{1}{r_{\rho}}}\cap D\left(\boldsymbol{x}_{1},r\right)\backslash D\left(\boldsymbol{x}_{2},r\right)\right|-\pi\left\Vert \boldsymbol{x}_{2}-\boldsymbol{x}_{1}\right\Vert ^{2}\right)}d\boldsymbol{x}_{2}d\boldsymbol{x}_{1}\nonumber \\
 & = & 4\lambda^{2}\int_{\ell A_{\frac{1}{r_{\rho}}}}\int_{A_{\frac{1}{r_{\rho}}}\cap D\left(\boldsymbol{x}_{1},\delta\right)}e^{-\lambda\left(\int_{A_{\frac{1}{r_{\rho}}}}g\left(\left\Vert \boldsymbol{y}-\boldsymbol{x}_{2}\right\Vert \right)d\boldsymbol{y}+\beta\left|A_{\frac{1}{r_{\rho}}}\cap D\left(\boldsymbol{x}_{1},r\right)\backslash D\left(\boldsymbol{x}_{2},r\right)\right|-\pi\left\Vert \boldsymbol{x}_{2}-\boldsymbol{x}_{1}\right\Vert ^{2}\right)}d\boldsymbol{x}_{2}d\boldsymbol{x}_{1}\nonumber \\
 & + & 4\lambda^{2}\int_{\angle A_{\frac{1}{r_{\rho}}}}\int_{A_{\frac{1}{r_{\rho}}}\cap D\left(\boldsymbol{x}_{1},\delta\right)}e^{-\lambda\left(\int_{A_{\frac{1}{r_{\rho}}}}g\left(\left\Vert \boldsymbol{y}-\boldsymbol{x}_{2}\right\Vert \right)d\boldsymbol{y}+\beta\left|A_{\frac{1}{r_{\rho}}}\cap D\left(\boldsymbol{x}_{1},r\right)\backslash D\left(\boldsymbol{x}_{2},r\right)\right|-\pi\left\Vert \boldsymbol{x}_{2}-\boldsymbol{x}_{1}\right\Vert ^{2}\right)}d\boldsymbol{x}_{2}d\boldsymbol{x}_{1}\label{eq:number of finite components in finite area step 4}\end{eqnarray}
Denote by $\gamma\left(\boldsymbol{x}\right)$ the shortest Euclidean
distance between a point $\boldsymbol{x}\in\ell_{r+\delta}A_{\frac{1}{r_{\rho}}}$
and a border of $A_{\frac{1}{r_{\rho}}}$ adjacent to $\ell_{r+\delta}A_{\frac{1}{r_{\rho}}}$
(i.e. $\partial A_{\frac{1}{r_{\rho}}}\cap\ell_{r+\delta}A_{\frac{1}{r_{\rho}}}$),
where $\ell_{r+\delta}A_{\frac{1}{r_{\rho}}}$ denotes a boundary
rectangular area of size $\left(r_{\rho}^{-1}-2\left(r-\delta\right)\right)\times\left(r+\delta\right)$
within $r+\delta$ of the border of $A_{\frac{1}{r_{\rho}}}$ and
away from the corners of $A_{\frac{1}{r_{\rho}}}$ by at least $r-\delta$.
Denote by $B_{\gamma\left(\boldsymbol{x}_{2}\right)}\left(A_{\frac{1}{r_{\rho}}}\right)\subset A_{\rho}$
a boundary area $\left\{ \boldsymbol{x}\in\ell_{r+\delta}A_{\frac{1}{r_{\rho}}}:\gamma\left(\boldsymbol{x}\right)\leq\gamma\left(\boldsymbol{x}_{2}\right)\right\} $,
denote by $R\left(\boldsymbol{x}_{2},2r\right)$ a rectangular area
of size $2r\times\gamma\left(\boldsymbol{x}_{2}\right)$ located between
$\boldsymbol{x}_{2}$ and $\partial A_{\frac{1}{r_{\rho}}}$ with
$\boldsymbol{x}_{2}$ at the center of one side of $R\left(\boldsymbol{x}_{2},2r\right)$.
See Fig. \ref{fig:An-illustration-of-areas-finite-component-analysis}
for an illustration of the areas defined above.

\begin{figure}
\begin{centering}
\includegraphics[width=0.3\columnwidth]{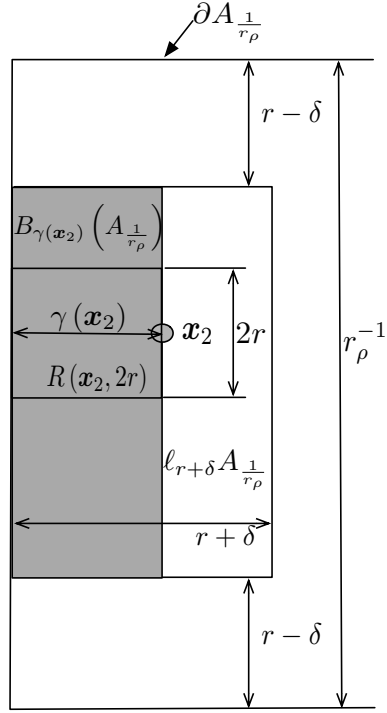}
\par\end{centering}

\caption{An illustration of the areas $\ell_{r+\delta}A_{\frac{1}{r_{\rho}}}$,
$B_{\gamma\left(\boldsymbol{x}_{2}\right)}\left(A_{\frac{1}{r_{\rho}}}\right)$
and $R\left(\boldsymbol{x}_{2},2r\right)$. The shaded area is $B_{\gamma\left(\boldsymbol{x}_{2}\right)}\left(A_{\frac{1}{r_{\rho}}}\right)$.\label{fig:An-illustration-of-areas-finite-component-analysis}}

\end{figure}

First we evaluate the term: $\int_{B_{\gamma\left(\boldsymbol{x}_{2}\right)}\left(A_{\frac{1}{r_{\rho}}}\right)}g\left(\left\Vert \boldsymbol{y}-\boldsymbol{x}_{2}\right\Vert \right)d\boldsymbol{y}+\beta\left|A_{\frac{1}{r_{\rho}}}\cap D\left(\boldsymbol{x}_{1},r\right)\backslash D\left(\boldsymbol{x}_{2},r\right)\right|$.
It can be shown that for $\boldsymbol{x}_{1}\in\ell A_{\frac{1}{r_{\rho}}}$
and $\boldsymbol{x}_{2}\in A_{\frac{1}{r_{\rho}}}\cap D\left(\boldsymbol{x}_{1},\delta\right)$,
when $\gamma\left(\boldsymbol{x}_{2}\right)\geq\gamma\left(\boldsymbol{x}_{1}\right)$:
\begin{eqnarray}
 &  & \int_{B_{\gamma\left(\boldsymbol{x}_{2}\right)}\left(A_{\frac{1}{r_{\rho}}}\right)}g\left(\left\Vert \boldsymbol{y}-\boldsymbol{x}_{2}\right\Vert \right)d\boldsymbol{y}\nonumber \\
 & \geq & \int_{B_{\gamma\left(\boldsymbol{x}_{2}\right)}\left(A_{\frac{1}{r_{\rho}}}\right)\cap R\left(\boldsymbol{x}_{2},2r\right)}g\left(\left\Vert \boldsymbol{y}-\boldsymbol{x}_{2}\right\Vert \right)d\boldsymbol{y}\nonumber \\
 & \geq & \int_{0}^{r}\int_{0}^{\gamma\left(\boldsymbol{x}_{2}\right)}g\left(\sqrt{x^{2}+y^{2}}\right)dxdy\nonumber \\
 & \geq & \int_{0}^{r}\int_{0}^{\left|\gamma\left(\boldsymbol{x}_{2}\right)-\gamma\left(\boldsymbol{x}_{1}\right)\right|}g\left(\sqrt{x^{2}+y^{2}}\right)dxdy\nonumber \\
 & \geq & \int_{0}^{r}\int_{0}^{\left|\gamma\left(\boldsymbol{x}_{2}\right)-\gamma\left(\boldsymbol{x}_{1}\right)\right|}g\left(\sqrt{\left(\frac{r}{2}\right)^{2}+y^{2}}\right)dxdy\label{eq:intermediate step on a bound step 1}\\
 & = & c_{3}\left|\gamma\left(\boldsymbol{x}_{2}\right)-\gamma\left(\boldsymbol{x}_{1}\right)\right|\nonumber \end{eqnarray}
where in \eqref{eq:intermediate step on a bound step 1}, the non-increasing
monotonicity condition on $g$ and that $\left|\gamma\left(\boldsymbol{x}_{2}\right)-\gamma\left(\boldsymbol{x}_{1}\right)\right|\leq\left\Vert \boldsymbol{x}_{2}-\boldsymbol{x}_{1}\right\Vert \leq\delta\leq\frac{r}{2}$
is used, and $c_{3}\triangleq\int_{0}^{r}g\left(\sqrt{\left(\frac{r}{2}\right)^{2}+y^{2}}\right)dy$.
Since $g\left(r^{-}\right)\left(1-g\left(r^{+}\right)\right)>0$,
it follows from the non-increasing monotonicity condition on $g$
that $g\left(\frac{r}{2}\right)>0$ and $ $$c_{3}$ is a positive
constant, i.e. $c_{3}>0$.

Choose $c_{2}$ to be sufficiently small such that $c_{4}\triangleq\frac{c_{3}}{\beta}-c_{2}>0$
and choose $\delta$ to be sufficiently small such that $\delta\leq z_{0}$.
Using \eqref{eq:intermediate step on a bound step 1} and Lemma \ref{lem:analysis of the intersectional area boundary case},
it follows that

\begin{eqnarray*}
 &  & \int_{B_{\gamma\left(\boldsymbol{x}_{2}\right)}\left(A_{\frac{1}{r_{\rho}}}\right)}g\left(\left\Vert \boldsymbol{y}-\boldsymbol{x}_{2}\right\Vert \right)d\boldsymbol{y}+\beta\left|A_{\frac{1}{r_{\rho}}}\cap D\left(\boldsymbol{x}_{1},r\right)\backslash D\left(\boldsymbol{x}_{2},r\right)\right|\\
 & \geq & c_{3}\left|\gamma\left(\boldsymbol{x}_{2}\right)-\gamma\left(\boldsymbol{x}_{1}\right)\right|+\beta\left(\left(r-c_{2}\right)\left\Vert \boldsymbol{x}_{2}-\boldsymbol{x}_{1}\right\Vert -r\times\left|\gamma\left(\boldsymbol{x}_{2}\right)-\gamma\left(\boldsymbol{x}_{1}\right)\right|\right)\\
 & = & \beta\left(\left(r-c_{2}\right)-\left(r-\frac{c_{3}}{\beta}\right)\times\frac{\left|\gamma\left(\boldsymbol{x}_{2}\right)-\gamma\left(\boldsymbol{x}_{1}\right)\right|}{\left\Vert \boldsymbol{x}_{2}-\boldsymbol{x}_{1}\right\Vert }\right)\left\Vert \boldsymbol{x}_{2}-\boldsymbol{x}_{1}\right\Vert \end{eqnarray*}
Note that $\frac{\left|\gamma\left(\boldsymbol{x}_{2}\right)-\gamma\left(\boldsymbol{x}_{1}\right)\right|}{\left\Vert \boldsymbol{x}_{2}-\boldsymbol{x}_{1}\right\Vert }\leq1$,
therefore\[
\left(r-c_{2}\right)-\left(r-\frac{c_{3}}{\beta}\right)\times\frac{\left|\gamma\left(\boldsymbol{x}_{2}\right)-\gamma\left(\boldsymbol{x}_{1}\right)\right|}{\left\Vert \boldsymbol{x}_{2}-\boldsymbol{x}_{1}\right\Vert }\geq\frac{c_{3}}{\beta}-c_{2}\]
 \begin{eqnarray}
 &  & \int_{B_{\gamma\left(\boldsymbol{x}_{2}\right)}\left(A_{\frac{1}{r_{\rho}}}\right)}g\left(\left\Vert \boldsymbol{y}-\boldsymbol{x}_{2}\right\Vert \right)d\boldsymbol{y}+\beta\left|A_{\frac{1}{r_{\rho}}}\cap D\left(\boldsymbol{x}_{1},r\right)\backslash D\left(\boldsymbol{x}_{2},r\right)\right|\nonumber \\
 & \geq & \beta c_{4}\left\Vert \boldsymbol{x}_{2}-\boldsymbol{x}_{1}\right\Vert \label{eq:inequalitys when x_2>x_1}\end{eqnarray}

When $\gamma\left(\boldsymbol{x}_{2}\right)<\gamma\left(\boldsymbol{x}_{1}\right)$,
using Lemma \ref{lem:analysis of the intersectional area boundary case}

\begin{eqnarray}
 &  & \int_{B_{\gamma\left(\boldsymbol{x}_{2}\right)}\left(A_{\frac{1}{r_{\rho}}}\right)}g\left(\left\Vert \boldsymbol{y}-\boldsymbol{x}_{2}\right\Vert \right)d\boldsymbol{y}+\beta\left|A_{\frac{1}{r_{\rho}}}\cap D\left(\boldsymbol{x}_{1},r\right)\backslash D\left(\boldsymbol{x}_{2},r\right)\right|\nonumber \\
 & \geq & \beta\left|A_{\frac{1}{r_{\rho}}}\cap D\left(\boldsymbol{x}_{1},r\right)\backslash D\left(\boldsymbol{x}_{2},r\right)\right|\nonumber \\
 & \geq & \beta\frac{c_{1}}{2}\left\Vert \boldsymbol{x}_{2}-\boldsymbol{x}_{1}\right\Vert \label{eq:inequalitys when x_1>x_2}\end{eqnarray}

Let $c_{5}\triangleq\min\left\{ \frac{c_{1}}{2},c_{4}\right\} $.
It follows from \eqref{eq:inequalitys when x_2>x_1} and \eqref{eq:inequalitys when x_1>x_2}
that \begin{eqnarray*}
 &  & \int_{B_{\gamma\left(\boldsymbol{x}_{2}\right)}\left(A_{\frac{1}{r_{\rho}}}\right)}g\left(\left\Vert \boldsymbol{y}-\boldsymbol{x}_{2}\right\Vert \right)d\boldsymbol{y}+\beta\left|A_{\frac{1}{r_{\rho}}}\cap D\left(\boldsymbol{x}_{1},r\right)\backslash D\left(\boldsymbol{x}_{2},r\right)\right|\\
 & \geq & \beta c_{5}\left\Vert \boldsymbol{x}_{2}-\boldsymbol{x}_{1}\right\Vert \end{eqnarray*}

Choose $\delta$ to be sufficiently small such that $\pi\delta\leq\frac{1}{2}c_{5}\beta$
and also $\delta\leq z_{0}$. Note also that for $\boldsymbol{x}_{2}\in A_{\frac{1}{r_{\rho}}}\cap D\left(\boldsymbol{x}_{1},\delta\right)$,
$\left\Vert \boldsymbol{x}_{2}-\boldsymbol{x}_{1}\right\Vert \leq\delta$.
Then it follows that \begin{eqnarray*}
 &  & 4\lambda^{2}\int_{\ell A_{\frac{1}{r_{\rho}}}}\int_{A_{\frac{1}{r_{\rho}}}\cap D\left(\boldsymbol{x}_{1},\delta\right)}e^{-\lambda\left(\int_{A_{\frac{1}{r_{\rho}}}}g\left(\left\Vert \boldsymbol{y}-\boldsymbol{x}_{2}\right\Vert \right)d\boldsymbol{y}+\beta\left|A_{\frac{1}{r_{\rho}}}\cap D\left(\boldsymbol{x}_{1},r\right)\backslash D\left(\boldsymbol{x}_{2},r\right)\right|-\pi\left\Vert \boldsymbol{x}_{2}-\boldsymbol{x}_{1}\right\Vert ^{2}\right)}d\boldsymbol{x}_{2}d\boldsymbol{x}_{1}\\
 & \leq & 4\lambda^{2}\int_{\ell A_{\frac{1}{r_{\rho}}}}\int_{A_{\frac{1}{r_{\rho}}}\cap D\left(\boldsymbol{x}_{1},\delta\right)}e^{-\lambda\left(\int_{A_{\frac{1}{r_{\rho}}}\backslash B_{\gamma\left(\boldsymbol{x}_{2}\right)}\left(A_{\frac{1}{r_{\rho}}}\right)}g\left(\left\Vert \boldsymbol{y}-\boldsymbol{x}_{2}\right\Vert \right)d\boldsymbol{y}+\frac{1}{2}c_{5}\beta\left\Vert \boldsymbol{x}_{2}-\boldsymbol{x}_{1}\right\Vert \right)}d\boldsymbol{x}_{2}d\boldsymbol{x}_{1}\\
 & \leq & 4\lambda^{2}\int_{\ell_{r+\delta}A_{\frac{1}{r_{\rho}}}}\int_{A_{\frac{1}{r_{\rho}}}\cap D\left(\boldsymbol{x}_{2},\delta\right)}e^{-\lambda\left(\int_{A_{\frac{1}{r_{\rho}}}\backslash B_{\gamma\left(\boldsymbol{x}_{2}\right)}\left(A_{\frac{1}{r_{\rho}}}\right)}g\left(\left\Vert \boldsymbol{y}-\boldsymbol{x}_{2}\right\Vert \right)d\boldsymbol{y}+\frac{1}{2}c_{5}\beta\left\Vert \boldsymbol{x}_{2}-\boldsymbol{x}_{1}\right\Vert \right)}d\boldsymbol{x}_{1}d\boldsymbol{x}_{2}\\
 & \leq & 4\lambda^{2}\int_{0}^{\delta}e^{-\lambda\frac{1}{2}c_{5}\beta x}2\pi xdx\int_{\ell_{r+\delta}A_{\frac{1}{r_{\rho}}}}e^{-\lambda\int_{A_{\frac{1}{r_{\rho}}}\backslash B_{\gamma\left(\boldsymbol{x}_{2}\right)}\left(A_{\frac{1}{r_{\rho}}}\right)}g\left(\left\Vert \boldsymbol{y}-\boldsymbol{x}_{2}\right\Vert \right)d\boldsymbol{y}}d\boldsymbol{x}_{2}\\
 & = & 32\pi\frac{1-e^{-\lambda\frac{1}{2}c_{5}\beta\delta}\left(1+\lambda\frac{1}{2}c_{5}\beta\delta\right)}{\left(c_{5}\beta\right)^{2}}\int_{\ell_{r+\delta}A_{\frac{1}{r_{\rho}}}}e^{-\lambda\int_{A_{\frac{1}{r_{\rho}}}\backslash B_{\gamma\left(\boldsymbol{x}_{2}\right)}\left(A_{\frac{1}{r_{\rho}}}\right)}g\left(\left\Vert \boldsymbol{y}-\boldsymbol{x}_{2}\right\Vert \right)d\boldsymbol{y}}d\boldsymbol{x}_{2}\end{eqnarray*}
 We further divide $\ell_{r+\delta}A_{\frac{1}{r_{\rho}}}$ into two
parts: one rectangular area of size $\left(r_{\rho}^{-1}-2r_{\rho}^{-\varepsilon}\right)\times\left(r+\delta\right)$
in the center of $\ell_{r+\delta}A_{\frac{1}{r_{\rho}}}$, denoted
by $\ell_{r+\delta}^{1}A_{\frac{1}{r_{\rho}}}$, and the other area
$\ell_{r+\delta}^{2}A_{\rho}=\ell_{r+\delta}A_{\frac{1}{r_{\rho}}}\backslash\ell_{r+\delta}^{1}A_{\frac{1}{r_{\rho}}}$
. It can be shown that

\begin{eqnarray}
 &  & \lim_{\rho\rightarrow\infty}\int_{\ell_{r+\delta}^{1}A_{\frac{1}{r_{\rho}}}}e^{-\lambda\int_{A_{\frac{1}{r_{\rho}}}\backslash B_{\gamma\left(\boldsymbol{x}_{2}\right)}\left(A_{\frac{1}{r_{\rho}}}\right)}g\left(\left\Vert \boldsymbol{y}-\boldsymbol{x}_{2}\right\Vert \right)d\boldsymbol{y}}d\boldsymbol{x}_{2}\nonumber \\
 & \leq & \lim_{\rho\rightarrow\infty}\int_{\ell_{r+\delta}^{1}A_{\frac{1}{r_{\rho}}}}e^{-\lambda\int_{A_{\frac{1}{r_{\rho}}}\backslash B_{\gamma\left(\boldsymbol{x}_{2}\right)}\left(A_{\frac{1}{r_{\rho}}}\right)\cap D\left(\boldsymbol{x}_{2},r_{\rho}^{-\varepsilon}\right)}g\left(\left\Vert \boldsymbol{y}-\boldsymbol{x}_{2}\right\Vert \right)d\boldsymbol{y}}d\boldsymbol{x}_{2}\nonumber \\
 & = & \lim_{\rho\rightarrow\infty}\left(r_{\rho}^{-1}-2r_{\rho}^{-\varepsilon}\right)\times\left(r+\delta\right)e^{-\frac{1}{2}\lambda\int_{D\left(\boldsymbol{0},r_{\rho}^{-\varepsilon}\right)}g\left(\left\Vert \boldsymbol{y}\right\Vert \right)d\boldsymbol{y}}\nonumber \\
 & = & 0\label{eq:convergence of boundary item to zero along the border}\end{eqnarray}
where the last step results due to \eqref{eq:first iterm in the second term step},
which showed that for $g$ satisfying both \eqref{eq:conditions on g(x) - non-increasing}
and \eqref{eq:Condition on g(x) requirement 2} $\lim_{\rho\rightarrow\infty}\rho^{\frac{1}{2}}e^{-\frac{1}{2}\rho r_{\rho}^{2}\int_{D\left(\boldsymbol{0},r_{\rho}^{-\varepsilon}\right)}g\left(\left\Vert \boldsymbol{x}\right\Vert \right)d\boldsymbol{x}}=e^{-\frac{b}{2}}$.
Then the result follows easily from the definition of $r_{\rho}$
in \eqref{eq:definition of r_rho}. Note that the result in \eqref{eq:convergence of boundary item to zero along the border}
cannot be obtained for $g$ satisfying \eqref{eq:conditions on g(x) - non-increasing}
and \eqref{eq:conditions on g(x) - integral boundness} only. 

Using similar steps that resulted in \eqref{eq:third term final result},
it can be shown that

\[
\lim_{\rho\rightarrow\infty}\int_{\ell_{r+\delta}^{2}A_{\frac{1}{r_{\rho}}}}e^{-\lambda\int_{A_{\frac{1}{r_{\rho}}}\backslash B_{\gamma\left(\boldsymbol{x}_{2}\right)}\left(A_{\frac{1}{r_{\rho}}}\right)}g\left(\left\Vert \boldsymbol{y}-\boldsymbol{x}_{2}\right\Vert \right)d\boldsymbol{y}}d\boldsymbol{x}_{2}=0\]
The above equation, together with \eqref{eq:convergence of boundary item to zero along the border},
allows us to conclude that the first term in \eqref{eq:number of finite components in finite area step 4}
converges to $0$ as $\rho\rightarrow\infty$:\begin{equation}
\lim_{\rho\rightarrow\infty}4\lambda^{2}\int_{\ell A_{\frac{1}{r_{\rho}}}}\int_{A_{\frac{1}{r_{\rho}}}\cap D\left(\boldsymbol{x}_{1},\delta\right)}e^{-\lambda\left(\int_{A_{\frac{1}{r_{\rho}}}}g\left(\left\Vert \boldsymbol{y}-\boldsymbol{x}_{2}\right\Vert \right)d\boldsymbol{y}+\beta\left|A_{\frac{1}{r_{\rho}}}\cap D\left(\boldsymbol{x}_{1},r\right)\backslash D\left(\boldsymbol{x}_{2},r\right)\right|-\pi\left\Vert \boldsymbol{x}_{2}-\boldsymbol{x}_{1}\right\Vert ^{2}\right)}d\boldsymbol{x}_{2}d\boldsymbol{x}_{1}=0\label{eq:convergence of boundary item to zero border case}\end{equation}

Now let us consider the second term in \eqref{eq:number of finite components in finite area step 4}.
First it can be shown that\[
\int_{A_{\frac{1}{r_{\rho}}}}g\left(\left\Vert \boldsymbol{y}-\boldsymbol{x}_{2}\right\Vert \right)d\boldsymbol{y}\geq\int_{A_{\frac{1}{r_{\rho}}}\cap D\left(\boldsymbol{x}_{2},r_{\rho}^{-\varepsilon}\right)}g\left(\left\Vert \boldsymbol{y}-\boldsymbol{x}_{2}\right\Vert \right)d\boldsymbol{y}\]
 and for any $\boldsymbol{x}_{2}\in A_{\frac{1}{r_{\rho}}}$, $A_{\frac{1}{r_{\rho}}}\cap D\left(\boldsymbol{x}_{2},r_{\rho}^{-\varepsilon}\right)$
contains at least one quarter of $D\left(\boldsymbol{x}_{2},r_{\rho}^{-\varepsilon}\right)$.
Further, since

\[
\lim_{\rho\rightarrow\infty}\int_{D\left(\boldsymbol{x}_{2},r_{\rho}^{-\varepsilon}\right)}g\left(\left\Vert \boldsymbol{y}-\boldsymbol{x}_{2}\right\Vert \right)d\boldsymbol{y}=C\]
there exists a $\rho_{0}$ such that for $\rho\geq\rho_{0}$ and any
positive constant $\gamma<1$ \[
\int_{D\left(\boldsymbol{x}_{2},r_{\rho}^{-\varepsilon}\right)}g\left(\left\Vert \boldsymbol{y}-\boldsymbol{x}_{2}\right\Vert \right)d\boldsymbol{y}\geq\gamma C\]
As a result of the above discussions, it can be shown that for sufficiently
large $\rho\geq\rho_{0}$ \begin{eqnarray}
 &  & 4\lambda^{2}\int_{\angle A_{\frac{1}{r_{\rho}}}}\int_{A_{\frac{1}{r_{\rho}}}\cap D\left(\boldsymbol{x}_{1},\delta\right)}e^{-\lambda\left(\int_{A_{\frac{1}{r_{\rho}}}}g\left(\left\Vert \boldsymbol{y}-\boldsymbol{x}_{2}\right\Vert \right)d\boldsymbol{y}+\beta\left|A_{\frac{1}{r_{\rho}}}\cap D\left(\boldsymbol{x}_{1},r\right)\backslash D\left(\boldsymbol{x}_{2},r\right)\right|-\pi\left\Vert \boldsymbol{x}_{2}-\boldsymbol{x}_{1}\right\Vert ^{2}\right)}d\boldsymbol{x}_{2}d\boldsymbol{x}_{1}\nonumber \\
 & \leq & 4\lambda^{2}\int_{\angle A_{\frac{1}{r_{\rho}}}}\int_{A_{\frac{1}{r_{\rho}}}\cap D\left(\boldsymbol{x}_{1},\delta\right)}e^{-\lambda\left(\frac{1}{4}\gamma C-\pi\left\Vert \boldsymbol{x}_{2}-\boldsymbol{x}_{1}\right\Vert ^{2}\right)}d\boldsymbol{x}_{2}d\boldsymbol{x}_{1}\nonumber \\
 & \leq & 4\lambda^{2}\pi\delta^{2}r^{2}e^{-\lambda\left(\frac{1}{4}\gamma C-\pi\delta^{2}\right)}\label{eq:corner term boundary case}\end{eqnarray}
where by choosing $\delta<\frac{1}{4\pi}\gamma C$, the above equation
can be easily shown as converging to $0$ as $\rho\rightarrow\infty$. 

In summary, using \eqref{eq:number of finite components in finite area step 4},
\eqref{eq:convergence of boundary item to zero border case} and \eqref{eq:corner term boundary case},
it can be shown that for $\delta<\min\left\{ \frac{1}{4\pi}\gamma C,\frac{r}{2},\frac{1}{2\pi}c_{5}\beta,z_{0}\right\} $,
for the first term in \eqref{eq:number of finite components in finite area step 3},
we have \begin{equation}
\lim_{\rho\rightarrow\infty}\lambda^{2}\int_{B_{r}\left(A_{\frac{1}{r_{\rho}}}\right)}\int_{A_{\frac{1}{r_{\rho}}}\cap D\left(\boldsymbol{x}_{1},\delta\right)}e^{-\lambda\left(\int_{A_{\frac{1}{r_{\rho}}}}g\left(\left\Vert \boldsymbol{y}-\boldsymbol{x}_{2}\right\Vert \right)d\boldsymbol{y}+\beta\left|A_{\frac{1}{r_{\rho}}}\cap D\left(\boldsymbol{x}_{1},r\right)\backslash D\left(\boldsymbol{x}_{2},r\right)\right|-\pi\left\Vert \boldsymbol{x}_{2}-\boldsymbol{x}_{1}\right\Vert ^{2}\right)}d\boldsymbol{x}_{2}d\boldsymbol{x}_{1}=0\label{eq:final result boundary case finite area}\end{equation}

For the second term in \eqref{eq:number of finite components in finite area step 3},
using Lemma \ref{lem:analysis of the intersectional area without border},
it can be shown that \begin{eqnarray*}
 &  & \lambda^{2}\int_{A_{\frac{1}{r_{\rho}}}\backslash B_{r}\left(A_{\frac{1}{r_{\rho}}}\right)}\int_{A_{\frac{1}{r_{\rho}}}\cap D\left(\boldsymbol{x}_{1},\delta\right)}e^{-\lambda\left(\int_{A_{\frac{1}{r_{\rho}}}}g\left(\left\Vert \boldsymbol{y}-\boldsymbol{x}_{2}\right\Vert \right)d\boldsymbol{y}+\beta\left|A_{\frac{1}{r_{\rho}}}\cap D\left(\boldsymbol{x}_{1},r\right)\backslash D\left(\boldsymbol{x}_{2},r\right)\right|-\pi\left\Vert \boldsymbol{x}_{2}-\boldsymbol{x}_{1}\right\Vert ^{2}\right)}d\boldsymbol{x}_{2}d\boldsymbol{x}_{1}\\
 & = & \lambda^{2}\int_{A_{\frac{1}{r_{\rho}}}\backslash B_{r}\left(A_{\frac{1}{r_{\rho}}}\right)}\int_{A_{\frac{1}{r_{\rho}}}\cap D\left(\boldsymbol{x}_{1},\delta\right)}e^{-\lambda\left(\int_{A_{\frac{1}{r_{\rho}}}}g\left(\left\Vert \boldsymbol{y}-\boldsymbol{x}_{2}\right\Vert \right)d\boldsymbol{y}+\beta\left|D\left(\boldsymbol{x}_{1},r\right)\backslash D\left(\boldsymbol{x}_{2},r\right)\right|-\pi\left\Vert \boldsymbol{x}_{2}-\boldsymbol{x}_{1}\right\Vert ^{2}\right)}d\boldsymbol{x}_{2}d\boldsymbol{x}_{1}\\
 & \leq & \lambda^{2}\int_{A_{\frac{1}{r_{\rho}}}\backslash B_{r}\left(A_{\frac{1}{r_{\rho}}}\right)}\int_{A_{\frac{1}{r_{\rho}}}\cap D\left(\boldsymbol{x}_{1},\delta\right)}e^{-\lambda\left(\int_{A_{\frac{1}{r_{\rho}}}}g\left(\left\Vert \boldsymbol{y}-\boldsymbol{x}_{2}\right\Vert \right)d\boldsymbol{y}+\left(\beta\sqrt{3}r-\pi\delta\right)\left\Vert \boldsymbol{x}_{2}-\boldsymbol{x}_{1}\right\Vert \right)}d\boldsymbol{x}_{2}d\boldsymbol{x}_{1}\\
 & \leq & \lambda^{2}\int_{A_{\frac{1}{r_{\rho}}}}\int_{A_{\frac{1}{r_{\rho}}}\cap D\left(\boldsymbol{x}_{1},\delta\right)}e^{-\lambda\left(\int_{A_{\frac{1}{r_{\rho}}}}g\left(\left\Vert \boldsymbol{y}-\boldsymbol{x}_{2}\right\Vert \right)d\boldsymbol{y}+\left(\beta\sqrt{3}r-\pi\delta\right)\left\Vert \boldsymbol{x}_{2}-\boldsymbol{x}_{1}\right\Vert \right)}d\boldsymbol{x}_{2}d\boldsymbol{x}_{1}\\
 & = & \lambda^{2}\int_{A_{\frac{1}{r_{\rho}}}}\int_{A_{\frac{1}{r_{\rho}}}\cap D\left(\boldsymbol{x}_{2},\delta\right)}e^{-\lambda\left(\beta\sqrt{3}r-\pi\delta\right)\left\Vert \boldsymbol{x}_{2}-\boldsymbol{x}_{1}\right\Vert }d\boldsymbol{x}_{1}e^{-\lambda\int_{A_{\frac{1}{r_{\rho}}}}g\left(\left\Vert \boldsymbol{y}-\boldsymbol{x}_{2}\right\Vert \right)d\boldsymbol{y}}d\boldsymbol{x}_{2}\\
 & \leq & \frac{1-e^{-\lambda\left(\beta\sqrt{3}r-\pi\delta\right)\delta}\left(1+\lambda\left(\beta\sqrt{3}r-\pi\delta\right)\delta\right)}{\lambda\left(\beta\sqrt{3}r-\pi\delta\right)^{2}}\lambda\int_{A_{\rho}}e^{-\lambda\int_{A_{\frac{1}{r_{\rho}}}}g\left(\left\Vert \boldsymbol{y}-\boldsymbol{x}_{2}\right\Vert \right)d\boldsymbol{y}}d\boldsymbol{x}_{2}\end{eqnarray*}

In Theorem \ref{thm:expected isolated nodes asymptotically infinite square},
we have established that\begin{eqnarray*}
 &  & \lim_{\rho\rightarrow\infty}\lambda\int_{A_{\frac{1}{r_{\rho}}}}e^{-\lambda\int_{A_{\frac{1}{r_{\rho}}}}g\left(\left\Vert \boldsymbol{y}-\boldsymbol{x}_{2}\right\Vert \right)d\boldsymbol{y}}d\boldsymbol{x}_{2}=e^{-b}\end{eqnarray*}
Therefore it follows straightforwardly that for $\delta<\beta\sqrt{3}r/\pi$\begin{equation}
\lim_{\rho\rightarrow\infty}\lambda^{2}\int_{A_{\frac{1}{r_{\rho}}}\backslash B_{r}\left(A_{\frac{1}{r_{\rho}}}\right)}\int_{A_{\frac{1}{r_{\rho}}}\cap D\left(\boldsymbol{x}_{1},\delta\right)}e^{-\lambda\left(\int_{A_{\frac{1}{r_{\rho}}}}g\left(\left\Vert \boldsymbol{y}-\boldsymbol{x}_{2}\right\Vert \right)d\boldsymbol{y}+\beta\left|A_{\frac{1}{r_{\rho}}}\cap D\left(\boldsymbol{x}_{1},r\right)\backslash D\left(\boldsymbol{x}_{2},r\right)\right|-\pi\left\Vert \boldsymbol{x}_{2}-\boldsymbol{x}_{1}\right\Vert ^{2}\right)}d\boldsymbol{x}_{2}d\boldsymbol{x}_{1}=0\label{eq:final results non-boundary case finite area}\end{equation}
Using \eqref{eq:number of finite components in finite area step 2},
\eqref{eq:number of finite components in finite area step 3}, \eqref{eq:final result boundary case finite area}
and \eqref{eq:final results non-boundary case finite area}, we are
able to conclude that by chosing $\delta$ to be a positive constant
such that \[
\delta<\min\left\{ \frac{1}{4\pi}\gamma C,\frac{r}{2},\frac{1}{2\pi}c_{5}\beta,\beta\sqrt{3}r/\pi,z_{0}\right\} \]
 \begin{equation}
\lim_{\rho\rightarrow\infty}\sum_{k=2}^{\infty}E\left(\xi_{k,1}\right)=0\label{eq:number of finite components in finite area final result}\end{equation}

\subsection*{An analysis of the second term in \eqref{eq:analysis on the expected number of components size k step 1}}

Now let us consider the second term in \eqref{eq:analysis on the expected number of components size k step 1},
i.e. \begin{eqnarray*}
 &  & E\left(\xi_{k,2}\right)\\
 & = & \frac{\lambda^{k}}{k!}\int_{A_{\frac{1}{r_{\rho}}}}\int_{\left(A_{\frac{1}{r_{\rho}}}\right)^{k-1}\backslash\left(D\left(\boldsymbol{x}_{1},\delta\right)\right)^{k-1}}g_{1}\left(\boldsymbol{x}_{1},\boldsymbol{x}_{2},\ldots,\boldsymbol{x}_{k}\right)e^{-\lambda\int_{A_{\frac{1}{r_{\rho}}}}g_{2}\left(\boldsymbol{y};\boldsymbol{x}_{1},\boldsymbol{x}_{2},\ldots,\boldsymbol{x}_{k}\right)d\boldsymbol{y}}d\left(\boldsymbol{x}_{2}\cdots\boldsymbol{x}_{k}\right)d\boldsymbol{x}_{1}\end{eqnarray*}

For $\left(\boldsymbol{x}_{2}\cdots\boldsymbol{x}_{k}\right)\in\left(A_{\frac{1}{r_{\rho}}}\right)^{k-1}\backslash\left(D\left(\boldsymbol{x}_{1},\delta\right)\right)^{k-1}$,
there is one node in $\left\{ \boldsymbol{x}_{2}\cdots\boldsymbol{x}_{k}\right\} $
outside a Euclidean distance $\delta$ of $\boldsymbol{x}_{1}$ and
belongs to $A_{\frac{1}{r_{\rho}}}\backslash D\left(\boldsymbol{x}_{1},\delta\right)$.
Without losing generality, assume that node is $\boldsymbol{x}_{j}\in A_{\frac{1}{r_{\rho}}}\backslash D\left(\boldsymbol{x}_{1},\delta\right)$,
where $j\in\Gamma_{k}/\left\{ 1\right\} $. 

Let $\angle A_{\frac{1}{r_{\rho}}}\subset A_{\frac{1}{r_{\rho}}}$
be a square area of size $r\times r$ located at a corner of $A_{\frac{1}{r_{\rho}}}$
as defined in the beginning of this Appendix and let $\overline{\angle A_{\frac{1}{r_{\rho}}}}\subset A_{\frac{1}{r_{\rho}}}$
be an area in $A_{\frac{1}{r_{\rho}}}$ excluding the four corner
squares $\angle A_{\frac{1}{r_{\rho}}}$. It is straightforward from
the proofs of Lemmas \ref{lem:analysis of the intersectional area boundary case}
and \ref{lem:analysis of the intersectional area without border}
that for $\boldsymbol{x}_{1}\in\overline{\angle A_{\frac{1}{r_{\rho}}}}$
and $\boldsymbol{x}_{j}\in A_{\frac{1}{r_{\rho}}}\backslash D\left(\boldsymbol{x}_{1},\delta\right)$,
i.e. $\left\Vert \boldsymbol{x}_{j}-\boldsymbol{x}_{1}\right\Vert \geq\delta$,
there exists a positive constant $c_{6}>0$, depending on $\delta$,
such that \[
\left|A_{\frac{1}{r_{\rho}}}\cap D\left(\boldsymbol{x}_{1},r\right)\backslash D\left(\boldsymbol{x}_{2},r\right)\right|\geq c_{6}\]
Using the above inequality and \eqref{eq:inequality for g_2}, it
follows that

\begin{eqnarray*}
 &  & \int_{A_{\frac{1}{r_{\rho}}}}g_{2}\left(\boldsymbol{y};\boldsymbol{x}_{1},\boldsymbol{x}_{2},\ldots,\boldsymbol{x}_{k}\right)d\boldsymbol{y}\\
 & \geq & \int_{A_{\frac{1}{r_{\rho}}}}g_{2}\left(\boldsymbol{y};\boldsymbol{x}_{1},\boldsymbol{x}_{j}\right)d\boldsymbol{y}\\
 & = & \int_{A_{\frac{1}{r_{\rho}}}}g\left(\left\Vert \boldsymbol{y}-\boldsymbol{x}_{j}\right\Vert \right)+g\left(\left\Vert \boldsymbol{y}-\boldsymbol{x}_{1}\right\Vert \right)\left(1-g\left(\left\Vert \boldsymbol{y}-\boldsymbol{x}_{j}\right\Vert \right)\right)d\boldsymbol{y}\\
 & \geq & \int_{A_{\frac{1}{r_{\rho}}}}g\left(\left\Vert \boldsymbol{y}-\boldsymbol{x}_{j}\right\Vert \right)d\boldsymbol{y}+\beta\left|A_{\frac{1}{r_{\rho}}}\cap D\left(\boldsymbol{x}_{1},r\right)\backslash D\left(\boldsymbol{x}_{j},r\right)\right|\\
 & \geq & \int_{A_{\frac{1}{r_{\rho}}}}g\left(\left\Vert \boldsymbol{y}-\boldsymbol{x}_{j}\right\Vert \right)d\boldsymbol{y}+\beta c_{6}\end{eqnarray*}

Therefore \begin{eqnarray}
 &  & \frac{\lambda^{k}}{k!}\int_{\overline{\angle A_{\frac{1}{r_{\rho}}}}}\int_{\left(A_{\frac{1}{r_{\rho}}}\right)^{k-1}\backslash\left(D\left(\boldsymbol{x}_{1},\delta\right)\right)^{k-1}}g_{1}\left(\boldsymbol{x}_{1},\boldsymbol{x}_{2},\ldots,\boldsymbol{x}_{k}\right)e^{-\lambda\int_{A_{\frac{1}{r_{\rho}}}}g_{2}\left(\boldsymbol{y};\boldsymbol{x}_{1},\boldsymbol{x}_{2},\ldots,\boldsymbol{x}_{k}\right)d\boldsymbol{y}}d\left(\boldsymbol{x}_{2}\cdots\boldsymbol{x}_{k}\right)d\boldsymbol{x}_{1}\nonumber \\
 & \leq & \frac{\lambda^{k}}{k!}\int_{\overline{\angle A_{\frac{1}{r_{\rho}}}}}\int_{\left(A_{\frac{1}{r_{\rho}}}\right)^{k-1}\backslash\left(D\left(\boldsymbol{x}_{1},\delta\right)\right)^{k-1}}g_{1}\left(\boldsymbol{x}_{1},\boldsymbol{x}_{2},\ldots,\boldsymbol{x}_{k}\right)e^{-\lambda\int_{A_{\frac{1}{r_{\rho}}}}g\left(\left\Vert \boldsymbol{y}-\boldsymbol{x}_{j}\right\Vert \right)d\boldsymbol{y}-\lambda\beta c_{6}}d\left(\boldsymbol{x}_{2}\cdots\boldsymbol{x}_{k}\right)d\boldsymbol{x}_{1}\nonumber \\
 & \leq & \frac{\lambda^{k}}{k!}\int_{\left(A_{\frac{1}{r_{\rho}}}\right)^{k}}g_{1}\left(\boldsymbol{x}_{1},\boldsymbol{x}_{2},\ldots,\boldsymbol{x}_{k}\right)e^{-\lambda\int_{A_{\frac{1}{r_{\rho}}}}g\left(\left\Vert \boldsymbol{y}-\boldsymbol{x}_{j}\right\Vert \right)d\boldsymbol{y}-\lambda\beta c_{6}}d\left(\boldsymbol{x}_{1}\boldsymbol{x}_{2}\cdots\boldsymbol{x}_{k}\right)\nonumber \\
 & = & \frac{\lambda^{k}}{k!}\int_{\left(A_{\frac{1}{r_{\rho}}}\right)^{k}}g_{1}\left(\boldsymbol{x}_{1},\boldsymbol{x}_{2},\ldots,\boldsymbol{x}_{k}\right)e^{-\lambda\int_{A_{\frac{1}{r_{\rho}}}}g\left(\left\Vert \boldsymbol{y}-\boldsymbol{x}_{1}\right\Vert \right)d\boldsymbol{y}-\lambda\beta c_{6}}d\left(\boldsymbol{x}_{1}\boldsymbol{x}_{2}\cdots\boldsymbol{x}_{k}\right)\label{eq:number of finite components non-border step 1}\end{eqnarray}
where a re-numbering of the nodes occurred in the last step of the
above equation. First using Lemma \ref{lem:inequality on g_1}, and
then using \eqref{eq:inequality for g_2 the union bound} and the
inequality that $\int_{A_{\frac{1}{r_{\rho}}}}g\left(\left\Vert \boldsymbol{x}_{j}-\boldsymbol{x}_{i}\right\Vert \right)d\boldsymbol{x}_{i}\leq C$,
it can be shown that\begin{eqnarray}
 &  & \frac{\lambda^{k}}{k!}\int_{\left(A_{\frac{1}{r_{\rho}}}\right)^{k}}g_{1}\left(\boldsymbol{x}_{1},\boldsymbol{x}_{2},\ldots,\boldsymbol{x}_{k}\right)e^{-\lambda\int_{A_{\frac{1}{r_{\rho}}}}g\left(\left\Vert \boldsymbol{y}-\boldsymbol{x}_{1}\right\Vert \right)d\boldsymbol{y}-\lambda\beta c_{6}}d\left(\boldsymbol{x}_{1}\boldsymbol{x}_{2}\cdots\boldsymbol{x}_{k}\right)\nonumber \\
 & \leq & \frac{\lambda^{k}}{k!}\int_{\left(A_{\frac{1}{r_{\rho}}}\right)^{k}}\sum_{i_{2}\in\Gamma_{k}\backslash\left\{ 1\right\} ,\cdots,i_{k}\in\Gamma_{k}\backslash\left\{ 1,i_{2},\ldots,i_{k-1}\right\} }g_{2}\left(\boldsymbol{x}_{i_{2}};\boldsymbol{x}_{1}\right)\cdots g_{2}\left(\boldsymbol{x}_{i_{k}};\boldsymbol{x}_{1},\boldsymbol{x}_{i_{2}},\ldots,\boldsymbol{x}_{i_{k-1}}\right)\nonumber \\
 & \times & e^{-\lambda\int_{A_{\frac{1}{r_{\rho}}}}g\left(\left\Vert \boldsymbol{y}-\boldsymbol{x}_{1}\right\Vert \right)d\boldsymbol{y}-\lambda\beta c_{6}}d\left(\boldsymbol{x}_{i_{k}}\cdots\boldsymbol{x}_{i_{2}}\boldsymbol{x}_{1}\right)\nonumber \\
 & \leq & \frac{\lambda^{k}C^{k-1}}{k!}\left(k-1\right)!\left(k-1\right)!\int_{A_{\frac{1}{r_{\rho}}}}e^{-\lambda\int_{A_{\frac{1}{r_{\rho}}}}g\left(\left\Vert \boldsymbol{y}-\boldsymbol{x}_{1}\right\Vert \right)d\boldsymbol{y}-\lambda\beta c_{6}}d\boldsymbol{x}_{1}\nonumber \\
 & = & \frac{\left(k-1\right)!}{k}e^{-\frac{b\beta c_{6}}{C}}\times\frac{\left(\log\rho+b\right)^{k-1}}{\rho^{\frac{\beta c_{6}}{C}}}\times\lambda\int_{A_{\frac{1}{r_{\rho}}}}e^{-\lambda\int_{A_{\rho}}g\left(\left\Vert \boldsymbol{y}-\boldsymbol{x}_{1}\right\Vert \right)d\boldsymbol{y}}d\boldsymbol{x}_{1}\label{eq:number of finite components non-border step 2}\end{eqnarray}

Using Theorem \ref{thm:expected isolated nodes asymptotically infinite square},
\eqref{eq:number of finite components non-border step 1} and \eqref{eq:number of finite components non-border step 2},
it follows that

\begin{equation}
\lim_{\rho\rightarrow\infty}\frac{\lambda^{k}}{k!}\int_{\overline{\angle A_{\frac{1}{r_{\rho}}}}}\int_{\left(A_{\frac{1}{r_{\rho}}}\right)^{k-1}\backslash\left(D\left(\boldsymbol{x}_{1},\delta\right)\right)^{k-1}}g_{1}\left(\boldsymbol{x}_{1},\boldsymbol{x}_{2},\ldots,\boldsymbol{x}_{k}\right)e^{-\lambda\int_{A_{\frac{1}{r_{\rho}}}}g_{2}\left(\boldsymbol{y};\boldsymbol{x}_{1},\boldsymbol{x}_{2},\ldots,\boldsymbol{x}_{k}\right)d\boldsymbol{y}}d\left(\boldsymbol{x}_{2}\cdots\boldsymbol{x}_{k}\right)d\boldsymbol{x}_{1}=0\label{eq:number of finite components in infinite area - non-border case}\end{equation}

Using similar steps as those leading to \eqref{eq:number of finite components non-border step 2},
it can be shown that

\begin{eqnarray*}
 &  & \frac{\lambda^{k}}{k!}\int_{\angle A_{\frac{1}{r_{\rho}}}}\int_{\left(A_{\frac{1}{r_{\rho}}}\right)^{k-1}\backslash\left(D\left(\boldsymbol{x}_{1},\delta\right)\right)^{k-1}}g_{1}\left(\boldsymbol{x}_{1},\boldsymbol{x}_{2},\ldots,\boldsymbol{x}_{k}\right)e^{-\lambda\int_{A_{\frac{1}{r_{\rho}}}}g_{2}\left(\boldsymbol{y};\boldsymbol{x}_{1},\boldsymbol{x}_{2},\ldots,\boldsymbol{x}_{k}\right)d\boldsymbol{y}}d\left(\boldsymbol{x}_{2}\cdots\boldsymbol{x}_{k}\right)d\boldsymbol{x}_{1}\\
 & \leq & \frac{\lambda^{k}}{k!}\int_{\angle A_{\frac{1}{r_{\rho}}}}\int_{\left(A_{\frac{1}{r_{\rho}}}\right)^{k-1}}g_{1}\left(\boldsymbol{x}_{1},\boldsymbol{x}_{2},\ldots,\boldsymbol{x}_{k}\right)e^{-\lambda\int_{A_{\frac{1}{r_{\rho}}}}g_{2}\left(\left\Vert \boldsymbol{y}-\boldsymbol{x}_{1}\right\Vert \right)d\boldsymbol{y}}d\left(\boldsymbol{x}_{2}\cdots\boldsymbol{x}_{k}\right)d\boldsymbol{x}_{1}\\
 & \leq & \frac{\lambda^{k}C^{k-1}}{k}\left(k-1\right)!\int_{\angle A_{\frac{1}{r_{\rho}}}}e^{-\lambda\int_{A_{\frac{1}{r_{\rho}}}}g\left(\left\Vert \boldsymbol{y}-\boldsymbol{x}_{1}\right\Vert \right)d\boldsymbol{y}}d\boldsymbol{x}_{1}\end{eqnarray*}

Using similar steps that resulted in \eqref{eq:corner term boundary case},
it can be shown that

\begin{eqnarray}
 &  & \lim_{\rho\rightarrow\infty}\frac{\lambda^{k}C^{k-1}}{k}\left(k-1\right)!\int_{\angle A_{\frac{1}{r_{\rho}}}}e^{-\lambda\int_{A_{\frac{1}{r_{\rho}}}}g\left(\left\Vert \boldsymbol{y}-\boldsymbol{x}_{1}\right\Vert \right)d\boldsymbol{y}-\lambda\beta c_{6}}d\boldsymbol{x}_{1}\nonumber \\
 & \leq & \lim_{\rho\rightarrow\infty}\frac{\lambda^{k}C^{k-1}}{k}\left(k-1\right)!\delta^{2}e^{-\frac{1}{4}\lambda\gamma C}\nonumber \\
 & = & 0\label{eq:number of finite components in infinite area - corner case}\end{eqnarray}

The combination of \eqref{eq:number of finite components in infinite area - non-border case}
and \eqref{eq:number of finite components in infinite area - corner case}
allows us to conclude that \[
\lim_{\rho\rightarrow\infty}E\left(\xi_{k,2}\right)=0\]

It follows that for any fixed but arbitrarily large integer $M$ \begin{equation}
\lim_{\rho\rightarrow\infty}\sum_{k=2}^{M}E\left(\xi_{k,2}\right)=0\label{eq:number of finite components in infinite area final result}\end{equation}

Finally from \eqref{eq:number of finite components in finite area final result}
and \eqref{eq:number of finite components in infinite area final result},
we conclude that

\[
\lim_{\rho\rightarrow\infty}\left(\sum_{k=2}^{M}E\left(\xi_{k}\right)=0\right)=0\]
Noting that $\xi_{k}$ is a non-negative integer, therefore\[
\lim_{\rho\rightarrow\infty}\Pr\left(\sum_{k=2}^{M}\xi_{k}=0\right)=1\]

\section*{Acknowledgment}

The authors would like to thank Prof P. R. Kumar of University of
Illinois, Urbana-Champaign for his comments on an earlier version
of this paper.

\bibliographystyle{IEEEtran}
\bibliography{/Users/gmao/mgq/localtexmf/bibtex/BIB/SensorNetwork}

\begin{biography}
{Guoqiang Mao} (S'98\textendash{}M'02\textendash{}SM\textquoteright{}08)
received PhD in telecommunications engineering in 2002 from Edith
Cowan University. He joined the School of Electrical and Information
Engineering, the University of Sydney in December 2002 where he is
a Senior Lecturer now. His research interests include wireless localization
techniques, wireless multihop networks, graph theory and its application
in networking, and network performance analysis. He is a Senior Member
of IEEE and an Associate Editor of IEEE Transactions on Vehicular
Technology.
\end{biography}

\begin{biography}
{Brian D.O. Anderson} (S\textquoteright{}62\textendash{}M\textquoteright{}66\textendash{}SM\textquoteright{}74\textendash{}F\textquoteright{}75\textendash{}
LF\textquoteright{}07) was born in Sydney, Australia, and educated
at Sydney University in mathematics and electrical engineering, with
PhD in electrical engineering from Stanford University in 1966. He
is a Distinguished Professor at the Australian National University
and Distinguished Researcher in National ICT Australia. His awards
include the IEEE Control Systems Award of 1997, the 2001 IEEE James
H Mulligan, Jr Education Medal, and the Bode Prize of the IEEE Control
System Society in 1992, as well as several IEEE best paper prizes.
He is a Fellow of the Australian Academy of Science, the Australian
Academy of Technological Sciences and Engineering, the Royal Society,
and a foreign associate of the National Academy of Engineering. His
current research interests are in distributed control, sensor networks
and econometric modelling. 
\end{biography}

\end{document}